\newcommand{\dom}[1]{\mathbb{#1}}
\newcommand{\mech}[1]{\mathcal{#1}}
\newcommand{\tup}[1]{\vec{#1}}
\newcommand{\rv}[1]{\mathsf{#1}}
\newcommand{\trv}[1]{\tup{\rv{#1}}}
\renewcommand{\Pr}{\mathbb{P}}
\newcommand{\Ex}{\mathbb{E}}
\newcommand{\Var}{\mathbb{V}}
\newcommand{\EE}{\mathbb{E}}
\newcommand{\PP}{\mathbb{P}}
\newcommand{\TV}{\mathfrak{T}}
\renewcommand{\TV}{\mathrm{TV}}
\newcommand{\mse}{\mathrm{MSE}}
\newcommand{\SD}{\mathrm{TV}}
\newcommand{\bernoulli}{\mbox{\rm Ber}}
\newcommand{\uniform}{\texttt{Unif}}
\newcommand{\debias}{\texttt{DeBias}}
\newcommand{\polya}{\texttt{Polya}}
\newcommand{\Laplace}{\texttt{Lap}}
\newcommand{\discretelaplace}{\texttt{DLap}}
\newcommand{\fp}{{\mbox{\rm fp}}}
\newcommand{\poly}{\mathrm{poly}}
\newcommand{\N}{\mathbb{N}}
\newcommand{\R}{\mathbb{R}}
\newcommand{\Z}{\mathbb{Z}}
\newcommand{\remove}[1]{}
 \newtheorem{theorem}{Theorem}[section]
\newtheorem{lemma}{Lemma}[section]
\newtheorem{definition}{Definition}[section]
\newtheorem{corollary}{Corollary}[section]
\newtheorem{remark}{Remark}[section]
\title{Private Summation in the Multi-Message Shuffle Model}
\author[ ]{Borja Balle\footnote{Now at DeepMind.}}
\author[1]{James Bell}
\author[2]{Adri{\`a} Gasc{\'o}n\footnote{Work done while at the Alan Turing Institute.}}
\author[3]{Kobbi Nissim}
\affil[1]{The Alan Turing Institute}
\affil[2]{Google}
\affil[3]{Georgetown University}
\begin{document}

\maketitle

\begin{abstract}
The shuffle model of differential privacy (Erlingsson et al.\ SODA 2019; Cheu et al.\ EUROCRYPT 2019) and its close relative encode-shuffle-analyze (Bittau et al.\ SOSP 2017) provide a fertile middle ground between the well-known local and central models. Similarly to the local model, the shuffle model assumes an untrusted data collector who receives privatized messages from users, but in this case a secure shuffler is used to route messages from users to the collector in a way that hides which messages came from which user. An interesting feature of the shuffle model is that increasing the amount of messages sent by each user can lead to protocols with accuracies comparable to the ones achievable in the central model. In particular, for the problem of privately computing the sum of $n$ bounded real values held by $n$ different users, Cheu et al.\ showed that $O(\sqrt{n})$ messages per user suffice to achieve $O(1)$ error (the optimal rate in the central model), while Balle et al.\ (CRYPTO 2019) recently showed that a single message per user leads to $\Theta(n^{1/3})$ MSE (mean squared error), a rate strictly in-between what is achievable in the local and central models. 

This paper introduces two new protocols for summation in the shuffle model with improved accuracy and communication trade-offs. Our first contribution is a recursive construction based on the protocol from Balle et al.\ mentioned above, providing $\poly(\log \log n)$ error with $O(\log \log n)$ messages per user. The second contribution is a protocol with $O(1)$ error and $O(1)$ messages per user based on a novel analysis of the reduction from secure summation to shuffling introduced by Ishai et al.\ (FOCS 2006) (the original reduction required $O(\log n)$ messages per user).
We also provide a numerical evaluation showing that our protocols provide good trade-offs between privacy, accuracy and communication for realistic values of $n$.

\end{abstract} 
\maketitle

\section{Introduction}
\label{sec:intro}

In the shuffle model of differential privacy individuals communicate with an analyzer through a shuffler that (potentially) disassociates a message from its sender~\cite{BEMMRLRKTS17,erlingsson2019amplification, DBLP:journals/corr/abs-1808-01394}. Recent work on the shuffle model provides protocols for some basic statistical tasks (eg.\ averaging and computing histograms), demonstrating that the shuffle model allows for improved accuracy, compared with local differential privacy~\cite{erlingsson2019amplification,DBLP:journals/corr/abs-1808-01394, DBLP:conf/crypto/BalleBGN19,DBLP:journals/corr/abs-1906-08320,ghazi2019power,DBLP:journals/corr/abs-1909-11073,wang2019murs,erlingsson2020encode}.

A recurring problem in these papers is real summation, where the task is to approximate the sum of a collection of real values $x_1, \ldots, x_n \in [0,1]$ held by $n$ users.
The real summation problem has been extensively studied in the differentially privacy literature, eg.\ it is well-known that the optimal achievable errors in the central and local model are respectively $\Theta_{\epsilon}(1)$~\cite{DMNS06} and $\Theta_{\epsilon}(\sqrt{n})$~\cite{BNO08, ChanSS12}.
From a theoretical standpoint, this makes real summation a perfect candidate problem for exploring the differences between the shuffle model and the well-established central and local models, and for introducing new algorithmic techniques for the shuffle model. Furthermore, efficient and accurate protocols for real summation provide a key building block for innumerable applications ranging from simple statistical tasks (eg.\ computing means and variances of (bounded) numerical attributes across a population) to the training and evaluation of complex machine learning models (eg.\ computing empirical loss and averaging gradient updates in distributed settings).

The first protocol for real summation in the shuffle model, presented by Cheu et al.~\cite{DBLP:journals/corr/abs-1808-01394}, is a $O(\sqrt{n})$-message protocol with MSE (mean squared error) $O(1)$ where each user produces a unary encoding of their inputs with precision $O(\sqrt{n})$ and submits to the shuffler the result of applying independent binary randomized response mechanisms to each of the bits in this representation of the input. The analyzer simply sums the bits submitted by all users and then applies a debiasing operation. We note that while the protocol by Cheu et al.\ was presented in a model where all messages are sent through a single shuffler, their privacy analysis is based on the composition of several single-message protocols. Therefore an implementation of the protocol in a {\em parallel shuffle} model\footnote{See Section~\ref{sec:shuffleModelDef} for a formal definition of the single and parallel shuffle models.}, where multiple shufflers exists and each user sends a single message to each, would attain the same privacy guarantees.

Focusing on the setting where each user sends a single message, Balle et al.~\cite{DBLP:conf/crypto/BalleBGN19} provided matching upper and lower bounds showing real summation in the single-message shuffle model can be solved with MSE $\Theta(n^{1/3})$. In particular, this result shows that the single message shuffle model sits strictly in-between the local and central models of differential privacy. In terms of techniques, the protocol relies on amplification properties of the shuffle model introduced in~\cite{DBLP:conf/crypto/BalleBGN19} (see also \cite{erlingsson2019amplification}). The intuitive idea underlying this approach is to have clients send, along with their inputs, a set of random messages. The latter are mixed with the real inputs by the shuffler, providing a \emph{blanketing} noise that hides the contribution of any single user. This technique, originally introduced in \cite{DBLP:conf/crypto/BalleBGN19} for the single-message model where each user {\em either} contributes their real data or a message from the blanket, was later applied by Ghazi et al.~\cite{ghazi2019power} in the multi-message setting.

This work was in part previously made available in the form of two notes submitted to arXiv~\cite{balle2019differentially,balle2019improved}.

\subsection{Our Contributions}\label{sec:contributions}

This paper introduces new protocols for real summation in the shuffle model, improving on the accuracy and communication of previous protocols. A quick overview of our theoretical results and their relation with the prior works discussed above is provided in Table~\ref{tab:summation-ub}. In addition, Section~\ref{sec:numerics} presents a numerical comparison between our protocols and previous work in terms of privacy, communication and accuracy.
This work is partially based on results reported in the technical reports \cite{balle2019differentially,balle2019improved}.

Our first contribution is a recursive protocol based on the blanketing technique.
Using the single-message protocol from~\cite{DBLP:conf/crypto/BalleBGN19} as a building block,
we construct a multi-message single-round protocol where each subsequent message provides an estimation of the error incurred by approximating the sum using the previous messages. This protocol introduces a direct trade-off between the final accuracy and the number of messages $m$ sent by each user.
The crux of the analysis resides in balancing the accuracy boost and increased privacy loss incurred by each additional message.
The resulting protocol improves on the error achieved in the single message setting already for $m$ as small as $2$ and $3$ (where the MSE reduces to $O(n^{1/9})$ and $O(n^{1/27})$ respectively), and yields $\poly(\log \log n)$ error with $m=O(\log \log n)$.

Our second contribution is a generic reduction for obtaining private real summation protocols with optimal $O(1)$ MSE given a black-box protocol for secure summation over a finite group.
The reduction works by having each user discretize their inputs and then add a small amount of noise to them.
Secure summation over a finite group then allows us to simulate an optimal central mechanism (in this case, a discretized version of the celebrated Laplace mechanism \cite{DMNS06}) in a distributed manner.
The key challenge in realizing this reduction in the shuffle model resides in showing that a small number of shufflers can be used to implement secure summation over finite groups.

A starting point in this direction is the protocol of Ishai, Kushilevitz, Ostrovsky, and Sahai~\cite{ikos} (to which we refer henceforth as the {\em IKOS protocol}). In the setting of the IKOS protocol, $n$ users hold values $x_1,\ldots,x_n$ in a finite group $\mathbb{Z}_q$ and wish to securely compute their sum over $\mathbb{Z}_q$. To do so, each user $i$ splits their input $x_i$ into $m$ additive secret shares (over $\mathbb{Z}_q$), and sends all shares anonymously to the server via a shuffler. The server hence obtains the shuffled $nm$ shares and reconstructs the result by summing them up. Ishai et al.\ showed that splitting each value $x_i$ into $m = O(\log q + \sigma + \log n)$ shares
suffices to achieve statistical security parameter $\sigma$, in the sense that a (computationally unlimited) server seeing the set of shares submitted by the users cannot distinguish two inputs $(x_1,\ldots,x_n)$ and $(x'_1,\ldots,x'_n)$ with $\sum_i x_i = \sum_i x'_i$, except with advantage $2^{-\sigma}$.
Combined with our reduction, this gives a private real summation protocol with $O(1)$ MSE using $O(\log n)$ messages.

Puzzlingly, with the analysis provided in~\cite{ikos} the number of shares $m$ {\em grows} as a function of $n$. This is in contrast with the intuition that a larger number of participants should help any single user's contribution to ``hide in the crowd'', allowing $m$ to decrease as $n$ grows. 
The reduction from differentially private real-addition to secure addition over finite groups makes minimizing the number of shares $m$ a question of relevance not only to secure computation but also to understanding the trade-off between accuracy and communication in the shuffle model of differential privacy.

Our third contribution is a novel analysis of the IKOS protocol, showing that increasing the number of users does indeed help to reduce the number of messages required to attain a given level of security. We show that for statistical security parameter $\sigma$ it suffices to take the number of shares to be $m = \left\lceil\frac{2\sigma+\log_2(q)}{\log_2(n)-\log_2(e)}+2\right\rceil$. As a concrete example, computing the sum of $64$-bit numbers with security parameter $\sigma=80$ by $n=10^3$ users, our bounds show that $29$ messages per party suffice whereas $15$ messages per user suffice when $n=10^6$. In both cases, one of the messages can be sent outside the shuffle.

A more detailed account of our results is provided in Section~\ref{sec:results}, after the formal presentation of the shuffle model.

\paragraph{Comparing the two protocols.}
While our IKOS-based protocol gives optimal $O(1)$ MSE superior to the $\poly(\log\log n)$ MSE of the recursive protocol, there are some considerations that might make the latter preferable in some cases.
First, the recursive protocol can be instantiated with two and three messages -- yielding MSE $O(n^{1/9})$ and $O(n^{1/27})$ respectively -- whereas the IKOS protocol requires at least four messages even when the number of users asymptotically grows to infinity.
This shows there is a concrete advantage of going beyond one message, where the optimal MSE is $\Theta(n^{1/3})$~\cite{DBLP:conf/crypto/BalleBGN19}, and that this advantage increases with every additional message. In contrast, the analysis of the IKOS-based protocol does not provide any guarantees for such a small number of messages.

Furthermore, the IKOS-based protocol is not robust to manipulation by dishonest users. A single user deviating from the protocol can bias the result by an arbitrary amount\footnote{We leave open the question whether this can be remedied with a low penalty in efficiency.}. In comparison, our recursive protocol, does not put as much trust in the users, as the effect that any single dishonest user can have on the final sum is bounded by $1+O_\delta\left(\frac{(\log \log n)^2}{n^{2/3}\epsilon^2}\right)$. It is instructive to compare this result with the recent analysis of manipulation by Cheu et al.~\cite{CheuManipulation}, which shows that in the local model of differential privacy, a dishonest user's effect on the outcome is of magnitude $O(1/\epsilon)$.

Finally, it is interesting to observe that increasing communication has different effects on each of the protocols.
Intuitively, for a fixed choice of privacy parameter $\epsilon$, increasing the number of messages per user in the first protocol results in better accuracy, while in the second protocol it results in a smaller value for $\delta$.

\subsection{Concurrent Independent Work}\label{sec:concurrent}

We mention two concurrent and independent works which are relevant to this paper. First, independently of our work, Ghazi, Pagh, and Velingker~\cite{DBLP:journals/corr/abs-1906-08320} proposed protocols achieving MSE $O(1)$ with $O(\log n)$ messages, using an approach similar to that of the IKOS protocol. Their approach differs from ours in the employed distributed noise aggregation scheme: while Ghazi et al.\ rely on a similar technique to the one used by Shi et al.~\cite{shi}, we exploit the infinite divisibility properties of the geometric distribution, as suggested by Goryczka and Xiong~\cite{GX}. Quantitatively, our noise addition technique makes the error independent of $\delta$, saving a factor of $O(\log(1/\delta))$ in the MSE, and our communication complexity is independent of $\epsilon$ saving a factor of $O(\log(1/\epsilon))$ over~\cite{DBLP:journals/corr/abs-1906-08320}.

In a follow up work, Ghazi, Manurangsi, Pagh, and Velingker have, concurrently and independently of our work, obtained an analysis of the IKOS protocol that provides guarantees for a constant number of messages~\cite{DBLP:journals/corr/abs-1909-11073}.
Interestingly, their result gives the same asymptotic communication cost as our analysis, but the resulting constants are much larger\footnote{Their paper reads ``we
state our theorem in asymptotic notation and do not attempt to optimize the
constants in our proof''. The present constants are extracted from the proof of \cite[Theorem 3]{DBLP:journals/corr/abs-1909-11073}.}: for secure summation over $\Z_q$ with $n$ users, their analysis gives statistical security with parameter $\sigma$ when the number of messages is $m = \left\lceil \frac{100 (\sigma + \log_2(q))}{\log_2(n)-1} + 4\right\rceil$.
This difference in constants with the analysis developed in this paper can have a drastic effect in practical applications.
For example, in the case $\sigma = 80$, $q = 2^{64}$ and $n = 10^6$ their analysis requires $m = 765$ messages per user while ours requires only $m = 15$.

A second difference between our improved analysis of the IKOS protocol and the one provided by Ghazi et al.\ is that our analysis applies to summation over any finite group $\mathbb{Z}_q$ with no constraints on $q$, while the analysis by Ghazi et al.\ applies for summation over finite fields. The difference stems from the different proof techniques. While at its core our result involves an analysis of the distribution of connected components in a certain family of random graphs, their analysis is based on properties of the rank of a certain family of random matrices, and only works over finite fields because otherwise the relevant concepts from linear algebra (e.g.\ rank) are not defined.

\subsection{Other Related Work}\label{sec:relatedwork}

The shuffle model of differential privacy was originally motivated by the work of Bittau et al.\ \cite{BEMMRLRKTS17}, who proposed to introduce a trusted aggregator with very minimal functionalities to side-step the limitations of local differential privacy in the context of distributed data analysis.
The model was subsequently formalized in \cite{erlingsson2019amplification,DBLP:journals/corr/abs-1808-01394}.
With the goal of giving strong privacy guarantees for protocols that collect longitudinal data about the same set of users across a time horizon, Erlingsson et al.\ \cite{erlingsson2019amplification} analyzed privacy amplification in the context of an adaptive version of the shuffle model.
On the other hand, Cheu et al.\ \cite{DBLP:journals/corr/abs-1808-01394} initiated a systematic study of the relative power of the shuffle model compared with the local model by giving protocols for binary and real summation in the single and multi-message models, multi-message protocols for selection and histograms, and lower bounds for these tasks in the single-message setting. 

Subsequent works provide shuffle model protocols for other fundamental statistical tasks.
Beyond the problem of real summation which has been discussed at length above, these works focus on binary summation \cite{ghazi2020pure} as well as histograms, frequency estimation and private selection \cite{ghazi2019power,balcer2019separating}.
In addition, these papers have also studied the power of the shuffle model when compared to the local and central models in two separated axis: how increasing the number shuffled messages can improve the accuracy of the protocols \cite{ghazi2019power,balcer2019separating}, and whether asking for shuffle model protocols to provide pure differential privacy (ie.\ $\delta = 0$) has a significant effect on the accuracy achievable on a given task \cite{ghazi2020pure,balcer2019separating}.

On the systems side, \cite{wang2019murs,erlingsson2020encode} discuss how to construct practical shufflers satisfying the requirements of the theoretical shuffle model, and further discuss the intricacies of the threat models resulting from potential deployments of the shuffle model. Other implementation and experimental issues, including data fragmentation strategies when performing several tasks on the same dataset and the effects of flat or heavy-tailed datasets on empirical accuracy, are explored in \cite{erlingsson2020encode}.
 
\section{The Shuffle Model of Differential Privacy}
\label{sec:related}

This section formally defines the shuffle model used throughout the paper. This model provides privacy-preserving protocols for computing aggregate statistics using data held by users. The data is privatized and sent to a data collector for analysis through a trusted communication channel that randomly shuffles the messages of many users together.
We start by recalling the definition of differential privacy, proceed to introduce the shuffle functionality together with the rest of the ingredients necessary to specify a protocol in this model, define the semantics of the shuffle model used in this paper, and conclude by discussing related models.

\subsection{Central and Local Differential Privacy}

Differential privacy (DP) is a formal model providing strong individual privacy guarantees for data analysis tasks~\cite{DMNS06}. A randomized mechanism $\mech{M} : \dom{X}^n \to \dom{O}$ satisfies $(\epsilon,\delta)$-DP if for any pair of inputs $\tup{x}$ and $\tup{x}'$ differing in a single coordinate (we denote this relation by $\tup{x} \simeq \tup{x}'$) and any (measurable) event $E \subseteq \dom{O}$ on the output space we have
\begin{align*}
\Pr[\mech{M}(\tup{x}) \in E] - e^{\epsilon} \cdot \Pr[\mech{M}(\tup{x}') \in E] \leq \delta \enspace.
\end{align*}
In words: changing the data of one individual (among the $n$ contributing their data for analysis) does not significantly affect the probability distribution over outputs of the mechanism.

In this standard definition of differential privacy the mechanism has direct access to the data about $n$ individuals. Such data might be stored in a database curated by the party executing the mechanism, whose output is then released to the world. This is generally referred to as the \emph{central} (or \emph{curator}) model, and an implicit trust assumption is made that the curator will keep the database secret and only the outputs of DP mechanisms executed on the data will be released.

The \emph{local} model of differential privacy \cite{KLNRS08} considers a subclass of all DP mechanisms which operate under the assumption that each user applies randomness to privatize their own data before sending it to (a potentially untrusted) aggregator for analysis. In a \emph{non-interactive} setting where the aggregator receives the data from all users at the same time, mechanisms are specified in terms of a local randomizer\footnote{Each user could use a different randomizer, but here we ignore this setting for simplicity.} $\mech{R} : \dom{X} \to \dom{Y}$ and an analyzer $\mech{A} : \dom{Y}^n \to \dom{O}$, and output the result $\mech{M}_{\mech{R},\mech{A}}(\tup{x}) = \mech{A}(\mech{R}(x_1),\ldots,\mech{R}(x_n))$.
In this case the privacy guarantees of the mechanism are measured with respect to the view of the aggregator: $\mech{M}_{\mech{R},\mech{A}}$ is $(\epsilon,\delta)$-DP if for any $\tup{x} \simeq \tup{x}'$ and any event $E \subseteq \dom{Y}^n$ we have
\begin{align*}
\Pr[\tup{\mech{R}}(\tup{x}) \in E] - e^{\epsilon} \cdot \Pr[\tup{\mech{R}}(\tup{x}) \in E] \leq \delta \enspace,
\end{align*}
where $\tup{\mech{R}}(\tup{x}) = (\mech{R}(x_1),\ldots,\mech{R}(x_n))$.

\subsection{Shufflers and Multi-sets}
A shuffler $\mech{S} : \dom{Y}^n \to \dom{Y}^n$ is a randomized mechanism that returns the result of applying a uniform random permutation of $[n]$ to its inputs.
The shuffling operation ``erases'' all the information about the positions each particular message occupied in the input.
From an information-theoretic perspective this is equivalent to a deterministic mechanism $\mech{F} : \dom{Y}^n \to \dom{N}^{\dom{Y}}_n$ that outputs the multi-set of entries in an $n$-tuple (i.e.\ forgets about the order in which the inputs were provided).
This can be made formal by observing that the output of $\mech{S}(\tup{x})$ can be used to simulate the output of $\mech{F}(\tup{x})$ and vice versa.
Consequently, these two equivalent points of view on the action performed by the shuffler will be used interchangeably throughout the paper.

\subsection{The Shuffle Model}

\label{sec:shuffleModelDef}

The \emph{shuffle model} of differential privacy~\cite{erlingsson2019amplification, DBLP:journals/corr/abs-1808-01394} considers a data collector that receives messages from $n$ users (possibly multiple messages from each user). The key distinctive feature of the shuffle model is the assumption that a mechanism is in place to provide anonymity to each of the messages, i.e., in the data collector's view, the messages have been shuffled by a random unknown permutation. 
The concrete formulation of the shuffle model used in this paper is as follows (variants of this model are discussed below).

An $m$-message protocol in the shuffle model is specified by a pair of algorithms $\mech{P} = (\mech{R}, \mech{A})$, where $\mech{R}: \dom{X} \to \dom{Y}^m$, and $\mech{A}: (\dom{Y}^{n})^m \to \dom{O}$, for a number of users $n > 1$ and number of messages $m \geq 1$. We call $\mech{R}$ the \emph{local randomizer}, $\dom{X}$ the \emph{input space}, $\dom{Y}$ the \emph{message space} of the protocol, $\mech{A}$ the \emph{analyzer} of $\mech{P}$, and $\dom{O}$ the \emph{output space}.

The randomized mechanism $\mech{M}_{\mech{P}} : \dom{X}^n \to \dom{O}$ defined by the protocol $\mech{P}$ is defined as follows.
Each user $i$ holds a data record $x_i$, to which they apply the local randomizer to obtain a vector of messages $\vec{y}_i = (y^{(1)}_i, \ldots, y^{(m)}_i) = \mech{R}(x_i)$.
The users then use $m$ independent shufflers $\mech{S}^{(j)} : \dom{Y}^n \to \dom{Y}^n$, $j \in [m]$, to transmit their messages to the data collector; the $j$th message of each user is sent via the $j$th shuffler.
The data collector receives the outputs of the $m$ shufflers $\tup{y}^{(j)} = \mech{S}^{(j)}(y^{(j)}_1,\ldots,y^{(j)}_n)$, $j \in [m]$, and processes them using the analyzer to obtain the output of the protocol $\mech{M}_{\mech{P}}(\tup{x}) = \mech{A}(\tup{y}^{(1)}, \ldots, \tup{y}^{(m)})$ on inputs $\tup{x} = (x_1,\ldots,x_n)$. This is illustrated in Figure~\ref{fig:shuffle-model-diagram}.

From a privacy standpoint the data collector might have incentives to look at the outputs of the shufflers to obtain information about some user's input.
On the other hand, the shufflers are assumed to be fully trusted black-boxes free from any snooping or interference.
Thus, while the output of the protocol is the result of applying the analyzer to the shuffled messages, the privacy of such a protocol is formulated directly in terms of the \emph{view of the data collector} $\mech{V}_{\mech{P}}(\tup{x}) = (\tup{y}^{(1)}, \ldots, \tup{y}^{(m)})$ containing the tuples produced by the $m$ shufflers.
In particular, the protocol will provide privacy to individuals if for any pair of inputs $\tup{x}$ and $\tup{x}'$ differing in a single coordinate the views $\mech{V}_{\mech{P}}(\tup{x})$ and $\mech{V}_{\mech{P}}(\tup{x}')$ of the data collector are indistinguishable in the standard sense of differential privacy.

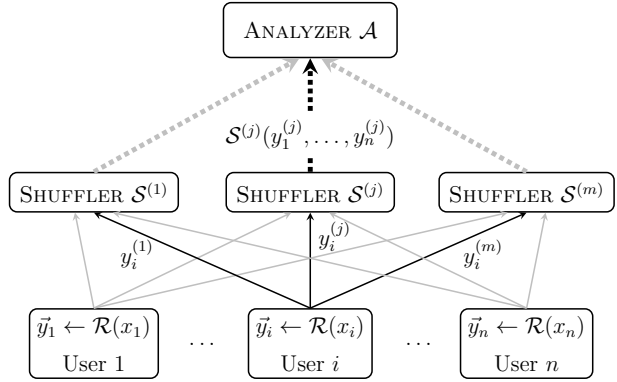
\begin{figure}
\begin{center}
\resizebox{0.5\textwidth}{!}{
\centering
  \begin{tikzpicture}
    \node[draw, fill=white, thick, rounded corners, inner sep=2ex] (analyzer) {
      \textsc{Analyzer} $\mech{A}$
    };

    \node[draw, fill=white, thick, rounded corners, inner sep=1ex] at ($(analyzer)+(0, -3)$) (shuffleri) {
      \textsc{Shuffler} $\mech{S}^{(j)}$
    };

    \node[draw, fill=white, thick, rounded corners, inner sep=1ex] at ($(shuffleri)+(-4, 0)$) (shuffler1) {
      \textsc{Shuffler} $\mech{S}^{(1)}$
    };

    \node[draw, fill=white, thick, rounded corners, inner sep=1ex] at ($(shuffleri)+(4, 0)$) (shufflerm) {
      \textsc{Shuffler} $\mech{S}^{(m)}$
    };
    
    \node[draw, fill=white, rectangle, rounded corners, thick, align=center] at ($(shuffleri)+(0, -2.8)$) (useri) {
      $\vec{y}_i \gets \mech{R}(x_i)$\\[1ex]
      User $i$
    };
    \node[draw, fill=white, rectangle, rounded corners, thick, align=center] at ($(useri)+(-4, 0)$) (user1) {
      $\vec{y}_1 \gets \mech{R}(x_1)$\\[1ex]
      User $1$
    };
    \node[draw, fill=white, rectangle, rounded corners, thick, align=center] at ($(useri)+(4, 0)$) (usern) {
      $\vec{y}_n \gets \mech{R}(x_n)$\\[1ex]
      User $n$
    };
    \node[draw=none] at ($(useri)+(2, 0)$) (dots1) {
      $\ldots$
    };
    \node[draw=none] at ($(useri)+(-2, 0)$) (dots2) {
      $\ldots$
    };

    \draw[draw=lightgray, ->, >=stealth, rounded corners, thick] ($(user1.north)+(0,0)$) -- ($(shuffler1.south)+(-.35, 0)$) node[pos=.2, above] {};
    \draw[draw=lightgray, ->, >=stealth, rounded corners, thick] ($(user1.north)+(0,0)$) -- ($(shuffleri.south)+(-.35, 0)$) node[pos=.2, above] {};
    \draw[draw=lightgray, ->, >=stealth, rounded corners, thick] ($(user1.north)+(0,0)$) -- ($(shufflerm.south)+(-.35, 0)$) node[pos=.2, above] {};

    \draw[->, >=stealth, rounded corners, thick] ($(useri.north)+(0,0)$) -- ($(shuffler1.south)+(0, 0)$) node[pos=0.8, below] {$y_i^{(1)}$};
    \draw[->, >=stealth, rounded corners, thick] ($(useri.north)+(0,0)$) -- ($(shuffleri.south)+(0, 0)$) node[pos=.75, right] {$y_i^{(j)}$};
    \draw[->, >=stealth, rounded corners, thick] ($(useri.north)+(0,0)$) -- ($(shufflerm.south)+(0, 0)$) node[pos=0.8, below] {$y_i^{(m)}$};

    \draw[draw=lightgray, ->, >=stealth, rounded corners, thick] ($(usern.north)+(0,0)$) -- ($(shuffler1.south)+(.35, 0)$) node[midway, sloped, above] {};
    \draw[draw=lightgray, ->, >=stealth, rounded corners, thick] ($(usern.north)+(0,0)$) -- ($(shuffleri.south)+(.35, 0)$) node[midway, sloped, above] {};
    \draw[draw=lightgray, ->, >=stealth, rounded corners, thick] ($(usern.north)+(0,0)$) -- ($(shufflerm.south)+(.35, 0)$) node[midway, sloped, above] {};
    
    \draw[draw=lightgray, ->, >=stealth, rounded corners, line width=.5ex, dotted] ($(shuffler1.north)+(0,0)$) -- ($(analyzer.south)+(-.2, 0)$) node[midway, right] {};
    \draw[->, >=stealth, rounded corners, line width=.5ex, dotted] ($(shuffleri.north)+(0,0)$) -- ($(analyzer.south)+(0, 0)$) node[pos=.33, fill=white] {$\mech{S}^{(j)}(y^{(j)}_1,\ldots,y^{(j)}_n)$};
    \draw[draw=lightgray, ->, >=stealth, rounded corners, line width=.5ex, dotted] ($(shufflerm.north)+(0,0)$) -- ($(analyzer.south)+(.2, 0)$) node[midway, right] {};
  \end{tikzpicture}
}
\caption{In the shuffle model, each local randomizer sends one message to each of $m$ independent shufflers.}\label{fig:shuffle-model-diagram}
\end{center}
\end{figure}
 
\begin{definition}
An $m$-message protocol $\mech{P} = (\mech{R}, \mech{A})$ in the shuffle model is $(\epsilon,\delta)$-DP if for any pair of inputs $\tup{x}, \tup{x}' \in \dom{X}^n$ with $\tup{x} \simeq \tup{x}'$ and any (measurable) event $E \subseteq (\dom{Y}^{n})^m$ we have
\begin{align*}
\Pr[\mech{V}_{\mech{P}}(\tup{x}) \in E] - e^{\epsilon} \cdot \Pr[\mech{V}_{\mech{P}}(\tup{x}') \in E] \leq \delta \enspace.
\end{align*}
\end{definition}

Note that the definition of view allows us to write the mechanism as the composition $\mech{M}_{\mech{P}} = \mech{A} \circ \mech{V}_{\mech{P}}$, and by the post-processing property of differential privacy the definition above implies that $\mech{M}_{\mech{P}}$ is an $(\epsilon,\delta)$-DP mechanism in the standard sense.
This definition also provides a clear parallelism between the shuffle model and the (non-interactive) local model of differential privacy: in the latter the view of the data collector is the tuple of messages produced by each user $(\tup{y}_1,\ldots,\tup{y}_n)$ and it is this view that is required to be differentially private.
In the shuffle model the view of the data collector is obtained from the view in the local model after the additional randomization step introduced by the shufflers. Quantifying the additional privacy provided by the shuffling step is one of the central questions in the shuffle model, which has given rise to a number of privacy amplification statements (both in implicit and explicit form) \cite{erlingsson2019amplification,DBLP:journals/corr/abs-1808-01394,DBLP:conf/crypto/BalleBGN19}. Broadly speaking, these results show that, in the one message ($m = 1$) case, if the protocol $(x_1,\ldots,x_n) \mapsto (\mech{R}(x_1), \ldots, \mech{R}(x_n))$ satisfies $\epsilon_0$-DP, then after shuffling the resulting protocol satisfies $(\epsilon,\delta)$-DP with $\epsilon = O(\min\{1,\epsilon_0\} e^{\epsilon_0} \sqrt{\log(1/\delta) / n})$. In particular, this implies that $\epsilon_0 = O(1)$ yield shuffle model protocols with $\epsilon = O_{\delta}(1/\sqrt{n})$ and $\epsilon_0 \leq \log(\sqrt{n})$ yields $\epsilon = O_{\delta}(1)$.

\subsection{Observations About the Model}\label{sec:observations-shuffle}
A number of variations on the concrete model described above could be considered.
Each of these raises one or more subtleties which we now discuss to further motivate the model used throughout the paper.

\paragraph{Single shuffler protocols.} Instead of $m$ parallel shufflers one can consider $m$-message protocols where the $m n$ messages contributed by all the users are shuffled together using a single shuffler.
From a privacy standpoint the single shuffler setting can provide additional privacy because, unlike in the case of parallel shufflers, the data collector cannot in general identify a group of $n$ messages containing at least one message from a given user $i$. This additional privacy gain, however, can only be realized when the $m$ messages generated by a user's local randomizer play ``exchangeable'' roles in the computation performed by the analyzer (e.g.\ when the messages are summed). This is the case, for example, for the multi-message protocol by Cheu et al.~\cite{DBLP:journals/corr/abs-1808-01394} for summation of inputs $x_i \in [0,1]$ using a single shuffler.
On the other hand, for the protocol we present in Section~\ref{sec:rec_protocol} it is crucial to apply a different debiasing operations to the messages coming from each of the shufflers.
We also note that our setting with $m$ parallel shufflers can be trivially simulated using a single shuffler where the $m$ messages contributed by each user come with a label $j \in [m]$ that enables the data collector to recover the message grouping after shuffling.
But even in cases where this simulation is a viable solution, there still exists a relevant distinction in terms of the assumed threat model which might make the parallel shuffler implementation more preferable: in the single shuffler setting an attacker only needs to compromise one shuffler to collect all the messages submitted by a given user, while in the parallel shuffler setting an attacker needs to compromise $m$ independent shufflers to obtain this same level of access.

\paragraph{User-dependent randomizers.} The model used in this paper assumes every user applies the same local randomizer to their data. This could be relaxed by letting each user employ a different randomizer. However, when used in its full generality this relaxation can lead to protocols where no privacy gains are obtained from the shuffling step. For example, in a protocol where the output domain of the randomizer of each user is disjoint from the output domains of the rest of randomizers. Note that each individual local randomizer can still provide local differential privacy in this case, but after shuffling it is possible to re-identify which messages were submitted by which user, so there is no privacy amplification coming from shuffling.

\paragraph{Interactive shuffle model.} The work of Erlingsson et al.\ \cite{erlingsson2019amplification} considers a more general shuffle model where each user's data can be processed by a different randomizer in a potentially adaptive fashion. This is achieved by significantly strengthening the trust assumptions on the shuffler, which now becomes an interactive intermediary between the analyzer and the users. In particular, in the interactive shuffle model users submit their data in plain text to the shuffler, who applies a random permutation to it, and then provides answers to a sequence of queries from the analyzer in the form of local randomizers which are applied to the next permuted data record. The sequence of randomizers provided by the analyzer can depend adaptively on previous answers. In this model, an attacker compromising the shuffler can, in principle, gain access to the plain-text data from the users -- while in the non-interactive model the users at least get some level of local differential privacy -- although some cryptographic constructions can be used to mitigate such risks \citep{BEMMRLRKTS17}. On the other hand, interactivity enables more complex computations which cannot be implemented in a single round non-interactive protocol (e.g.\ mini-batch stochastic gradient descent).

\section{Private Summation in the Shuffle Model}
\label{sec:results}

In this paper we are concerned with the problem of real summation where each user $i$ holds a real number $x_i \in [0,1]$ and the goal of the protocol is for the analyser to obtain a differentially private estimate of $\sum_{i=1}^n x_i$.
For each such protocol $\mech{P}$ we are interested in quantifying the final accuracy in terms of the worst-case \emph{mean squared error} (MSE) defined as
\begin{align}
\mse(\mech{P}) = \sup_{\tup{x} \in [0,1]^n} \Ex\left[\left(\mech{M}_\mech{P}(\tup{x}) - \sum_{i=1}^n x_i\right)^2\right] \enspace,
\end{align}
where the expectation is over the randomness in the protocol.
In some cases (e.g.\ Section~\ref{sec:rec_protocol}) we will use as building blocks summation protocols over different domains $\dom{X} \subset \R$ or $\dom{X} \subset \mathbb{Z}$.
For such protocols the definition of MSE is modified by replacing the supremum over $\tup{x} \in [0,1]^n$ to a supremum over $\tup{x} \in \dom{X}^n$.

Recall that in the curator model of differential privacy the Laplace mechanism provides an $\varepsilon$-DP mechanism for the problem of real summation with MSE $O_{\varepsilon}(1)$ and this is optimal.
On the other hand, the optimal MSE under $\varepsilon$-DP in the local model is $O_{\varepsilon}(n)$.
Real summation in the shuffle model raises interesting trade-offs between accuracy and communication, where the latter can be quantified both in terms of number of messages per user and size of these messages (in number of bits).
Known results for this problem together with the new contributions in this work are summarized in Table~\ref{tab:summation-ub}.
We now give a technical overview of our contributions.

\begin{table}
\begin{center}
\begin{tabular}{c|c|c}
Reference & MSE & Num.\ messages \\ \hline
\cite{DBLP:journals/corr/abs-1808-01394} & $O(\log^2(n/\delta)/\epsilon^2)$ & $O(\epsilon \sqrt{n})$ \\
\cite{DBLP:conf/crypto/BalleBGN19} & $O(n^{1/3}/\epsilon^{4/3})$ & 1 \\
\cite{DBLP:journals/corr/abs-1906-08320} & $O(\log(1/\delta) / \epsilon^2 )$ & $O(\log (n/ \epsilon \delta))$ \\
\cite{DBLP:journals/corr/abs-1909-11073} & $O(1/\epsilon^2)$ & $O(1 + \log_n(1/\delta))$ \\ \hline
Section~\ref{sec:rec_protocol} & $O((\log \log n)^2 \log(1/\delta) / \epsilon^2)$ & $O(\log \log n)$ \\
Section~\ref{sec:constant_error} & $O(1/\epsilon^2)$ & $O(\log (n/\delta))$ \\
Section~\ref{sec:improved_communication} & $O(1/\epsilon^2)$ & $O(1 + \log_n(1/\delta))$ \\
\end{tabular}
\caption{Comparison of protocols for real summation in the shuffle protocol.}
\label{tab:summation-ub}
\end{center}
\end{table}

\subsection{Our Results}

Our main contributions are two new protocols for real summation in the shuffle model with improved accuracy and communication over prior work. The two protocols are very different in nature. One is based on a recursive application of the optimal one message protocol from \cite{DBLP:conf/crypto/BalleBGN19} with a carefully designed finite-precision numeric representation scheme. The second protocol uses a novel analysis of a reduction from secure summation to multi-message shuffling based on random additive shares given in \cite{ikos} to simulate an optimal protocol in the central model using distributed noise.

\paragraph{Recursive protocol.} The recursive protocol (Section~\ref{sec:rec_protocol}) provides a trade-off between accuracy and number of messages: given $m$ messages and fixed privacy parameters $\epsilon$ and $\delta$, the protocol with $n$ users achieves MSE $O(n^{3^{-m}}(1 + m^3))$. In particular, this yields protocols with MSE $O(n^{1/9})$, $O(n^{1/27})$ and $O(\log \log^3 n)$ for $m = 2, 3, \lceil \log \log n \rceil$ messages respectively.

To give an overview of the recursive protocol we first recall that an optimal protocol for the one-message case (cf.\ \cite{DBLP:conf/crypto/BalleBGN19}) can be constructed by fixing a precision $p \in \N$, $p = O(n^{1/3})$, and letting each user $i$ apply to their input $x_i \in [0,1]$ a randomized response mechanism that with some probability $\gamma = O(n^{-2/3})$ returns a uniform value in $\{0,\ldots,p\}$ and with probability $1 - \gamma$ returns $\lfloor p x_i \rfloor + \bernoulli(p x_i - \lfloor p x_i \rfloor)$, the unbiased randomized rounding of $x_i$ to precision $p$. The values of $p$ and $\gamma$ are obtained by optimizing the trade-off between the error introduced by the rounding step and the error induced by the privacy-preserving randomized response step.

Our first protocol extends this approach to the setting where each user sends $m$ messages. The key idea is to capitalize on the privacy provided by shuffling the result of a randomized response mechanism by devising an encoding scheme for $x \in [0,1]$ into $m$ fixed-precision numbers and apply a randomized response mechanism to each of these numbers independently. The encoding proceeds as follows: we take a sequence of precisions $p_1, \ldots, p_m \in \N$ with $p_1 = O(n^{3^{-m}})$ and $p_{j+1} = \lceil p_j^{1/3} \rceil$, define the products $q_{j} = \prod_{l=1}^j p_l$, and approximate the input as
\begin{align*}
x \approx s_1 q_1^{-1} + s_2 q_2^{-1} + \cdots + s_m q_m^{-1}
=
\frac{s_1 + \frac{s_2 + \frac{\iddots + \frac{s_m}{p_m}}{p_3}}{p_2}}{p_1}
\end{align*}
with $s_j = \lfloor q_j x - p_j \lfloor q_{j-1} x \rfloor \rfloor$ and $q_0 = 0$. Applying a randomized rounding step to the last message $s_m$ yields an unbiased random approximation of the original input. Given such a representation of their input, a user then applies randomized response mechanisms to each of the $s_j$ to either submit the true message with probability $1 - \gamma_j$ or a uniform value in $\{0,\ldots,p_j\}$ with probability $\gamma_j = O(p_j / n)$. Upon receiving the messages from all users, the analyzer performs debiasing and summation operations to obtain the final result.

In our analysis of the protocol each message is allocated the same privacy budget and the final privacy analysis follows from the standard composition properties of differential privacy applied to the privacy analysis of the one-message protocol which relies on privacy amplification by shuffling. The parameters $p_j$ and $\gamma_j$ are chosen to optimize the error contributions due to privacy and rounding as discussed in the proof of Theorem~\ref{thm:rec_prot}. Refining this analysis using the advanced composition theorem leads to the result with $O((\log \log n)^2)$ MSE and $O(\log \log n)$ messages referenced in Table~\ref{tab:summation-ub} (Corollary~\ref{cor:rec_prot_adv_comp}).

\paragraph{IKOS protocol.} The IKOS protocol (Section~\ref{sec:improved_communication}) uses a constant number of messages to achieve the same asymptotic error as optimal mechanisms in the curator model. The main idea behind the protocol is to leverage the reduction from secure summation to multi-message shuffling of Ishai et al.\ \cite{ikos} in order to obtain a distributed implementation of a (discretized) Laplace mechanism using constant variance noise. Getting this idea to work required us to overcome a number of technical difficulties, including a significant strengthening of the reduction in \cite{ikos} which in its original form only gives a summation protocol with constant MSE by using a logarithmic number of messages as discussed in Section~\ref{sec:constant_error}.

The reduction in \cite{ikos} provides a protocol $\mech{P}$ where $n$ users, each holding an integer $x_i \in \Z_q$, $i \in [n]$, use $m$ independent shufflers to securely compute $\sum_i x_i$. This is achieved by letting each user generate $m$ additive shares of their input, i.e.\ a tuple $\tup{y}_i \in \Z_q^m$ of $m$ uniform random elements from $\Z_q$ conditioned on $\sum_j y_i^{(j)} = x_i$, and send the shares to a data collector using $m$ independent shufflers. By adding the shares received from all the shufflers, the data collector can exactly compute the sum of the inputs. The result of Ishai et al.\ is a security claim stating that if $\tup{x}, \tup{x}' \in \Z_q^n$ are input tuples with the same sum, $\sum_i x_i = \sum_i x'_i$, then the views $\mech{V}_{\mech{P}}(\tup{x})$ and $\mech{V}_{\mech{P}}(\tup{x}')$ of the data collector in the two executions are indistinguishable in the sense that $\TV(\mech{V}_{\mech{P}}(\tup{x}), \mech{V}_{\mech{P}}(\tup{x}')) \leq 2^{-\sigma}$ for some number of messages $m = O(\log(q n) + \sigma)$. One of our main technical contributions is an improved analysis of this reduction showing that the same security guarantees can be achieved with the much smaller number of messages $m = O(1 + \frac{\log(q) + \sigma}{\log(n)})$. Our analysis also provides small explicit constants for $m$ that make the result of practical interest.

Equipped with this improved reduction to perform secure summation in $\Z_q$ in the multi-message shuffle model with low communication, we construct a simulation of the optimal $\varepsilon$-DP real summation protocol $\mech{M}(\tup{x}) = \sum_{i} x_i + \Laplace(1/\varepsilon)$ by discretizing the input to a large enough abelian finite group $\Z_q$ and distributing the noise addition step across the $n$ users involved in the protocol. This is achieved by leveraging the observation from \cite{GX} showing that the discrete Laplace distribution over $\Z$ with distribution $\Pr[k] \propto \alpha^{-|k|}$ (i.e.\ the two-sided geometric distribution) is infinitely divisible, and therefore it is possible to obtain samples from this distribution by summing $n$ i.i.d.\ samples from some fixed distribution (in this case, the one obtained by taking the difference between two P{\'o}lya random variables). Since the secure summation protocol only works in a finite group, the users of our real summation protocol first discretize their inputs $x_i \in [0,1]$ to integers with precision $p = O(\sqrt{n})$, then add their share of the noise, and finally truncate the result modulo $q = O(n^{3/2})$. These parameter choices ensure that (i) the errors due to noise and discretization are of the same order of magnitude, and (ii) with high probability, there is enough space in the group to represent the noisy sum without overflows.

Putting the secure summation protocol together with the discretized distributed noise addition technique yields a protocol for private summation in the multi-message shuffle model. The privacy guarantees of this protocol follow from bounding its total variation distance from the discretized summation protocol with truncated discrete Laplace noise in the curator model (Lemma~\ref{lemma:SD2delta}). Thus, the protocol has  MSE $O(1)$ like the optimal protocol in the curator model, and uses $O(1)$ messages per user by virtue of our analysis of the summation to shuffling reduction.

\section{Recursive Protocol}
\label{sec:rec_protocol}

The intuition behind the private single-message summation protocol in~\cite{DBLP:conf/crypto/BalleBGN19} is simple: the local randomizer first represents the input $x$ in a discrete domain of size $p+1$, and then applies randomized response with probability $\gamma$ on the discretized (fixed-point) value. For example, if $x = 0.2342$ and $p=10$, the local randomizer submits $2 + \bernoulli(0.342)$ with probability $1-\gamma$, and a uniformly random value in $\{0,\ldots,p\}$ with probability $\gamma$. Here, the error of the fixed-point encoding due to the choice of $p$ needs to be balanced with the error due to the randomized response procedure. On one hand, $\gamma$ needs to grow with $p$, hence, for a  fixed choice of privacy parameters, the larger $p$ is the less often the true value will be reported. On the other hand, the error due to discretization decreases with $p$, as it corresponds to the precision in the fixed-point encoding.

In this section we show how to do better if we are allowed more messages. As before, assume that we choose the precision of our first message to be $p_1 = 10$. Then, the fixed-point encoding of $x=0.2342$ is $2$ and this has an error of $0.0342$, as before. Instead of applying a randomized rounding step as in the single-message case, we can just apply a randomized response of the value $2$, with domain $\{0,\ldots,10\}$ and probability $\gamma_1$ and {\em recursively} apply the same idea to the residual value $0.0342$. For example, if we choose $p_2 = 100$, and we are limited to just two messages, then the second message will be the randomized response of $34+ \bernoulli(0.2)$ with domain $\{0,\ldots,100\}$, and probability $\gamma_2$. In summary, we recursively apply the single-message procedure on the error of the fixed-point encoding of the current message, and randomized rounding in the last message, as the base case. This corresponds to choosing precisions $p_1, p_2, \ldots$ and probabilities $\gamma_1, \gamma_2, \ldots$ to, analogously to the single-message case, achieve a good balance between privacy and accuracy.

\begin{algorithm2e}[t]
  \DontPrintSemicolon
  \SetKwInput{KwPub}{Public Parameters}
  \SetKwComment{Comment}{{\scriptsize$\triangleright$\ }}{}
\caption{Local Randomizer $\mech{R}_{\gamma,p}$}\label{algo:lr}
        \KwPub{$\gamma,p$}
        \KwIn{$\bar{x} \in \{0,\ldots,p\}$}
        \KwOut{$y \in \{0,\ldots,p\}$}
\BlankLine
Sample $b\gets \bernoulli\left(\gamma\right)$\;
\eIf{$b = 0$}{
Let $y \gets \bar{x}$\;
}{
Sample $y \gets \uniform(\{0,\ldots,p\})$\;
}
\KwRet{$y$}
\end{algorithm2e}

\begin{algorithm2e}[t]
  \DontPrintSemicolon
    \SetKwInput{KwPub}{Public Parameters}
  \SetKwComment{Comment}{{\scriptsize$\triangleright$\ }}{}
\caption{Analyzer $\mech{A}_{\gamma,p,n}$}\label{algo:agg}
        \KwPub{$\gamma$ and number of parties $n$}
        \KwIn{Multiset $\{y_i\}_{i \in [n]}$, with $y _i\in \{0,\ldots,p\}$}
        \KwOut{$z \in \R$}
\BlankLine
Let $\hat{z} \gets \sum_{i=1}^n y_i$\;Let $z \gets \debias(\hat{z})$, where $\debias(w) = \left(w - \frac{n\gamma(p+1)}{2} \right) / \left(1-\gamma\right)$\;
\KwRet{$z$}
\end{algorithm2e}

The $m$-message recursive summation protocol is defined by $\mech{P}_{\vec{\gamma},\vec{p},n}^{\mathrm{rec}} = (\mech{R}_{\vec{\gamma},\vec{p}}^{\mathrm{rec}}, \mech{A}_{\vec{\gamma},\vec{p},n}^{\mathrm{rec}})$.
The local randomizer is shown in Algorithm~\ref{algo:rec_lr}. The algorithm takes a sequence of precisions $p_1,\ldots,p_m$ and probabilities $\gamma_1,\ldots,\gamma_m$. We also define $q_j=\prod_{l=1}^{j}p_l$ for simplicity. The algorithm
consists of $m$ executions of a discrete summation protocol (Algorithms~\ref{algo:lr} and~\ref{algo:agg}), where the $j$th message is the fixed-point encoding (with precision $p_j$) of the error up to message $j$ due to fixed-point encoding in previous messages, and the last message includes randomized rounding. 
This achieves the same goal than in the single-message protocol, i.e. obtain an unbiased estimate of the sum. The respective analyzers are given in Algorithms~\ref{algo:rec_agg} and~\ref{algo:agg} and only involve a standard debiasing step, and summation.

Let us first focus on the discrete summation subprotocol $\mech{P}_{\gamma,p,n} = (\mech{R}_{\gamma,p}, \mech{A}_{\gamma,p,n})$. Theorem~1 of \cite{DBLP:conf/crypto/BalleBGN19} states that this protocol is $(\epsilon,\delta)$-differentially private in the shuffle model, so long as $\epsilon\leq 1$ and $\gamma\geq \max\left\{\frac{14 p \log(2/\delta)}{(n-1)\epsilon^2},\frac{27 p}{(n-1)\epsilon}\right\}$.
The mean squared error on the result is $O_{\epsilon,\delta}(p^3)$, which is optimal for summation in $\{0,\ldots,p\}$ up to constants. This bound is given in (the proof of) Theorem~2 from \cite{DBLP:conf/crypto/BalleBGN19} as follows.

\begin{theorem}[\cite{DBLP:conf/crypto/BalleBGN19}]
  \label{thm:single_prot}
  Let $\epsilon\leq1$ and $\gamma = \max\{\frac{14 p \log(2/\delta)}{(n-1) \epsilon^2}, \frac{27 p}{(n-1) \epsilon} \} < 1$. Then
  \begin{equation*}
    \mse(\mech{P}_{\gamma,p,n})\leq \frac{n}{(1-\gamma)^2}\left(\frac{\gamma(p^2-1)}{12}+\frac{(p-1)^2\gamma(1-\gamma)}{4}\right) \enspace.
  \end{equation*}
  Neglecting $\log(1/\delta)$ terms, this is $O_\delta(p^3/\epsilon^2)$.
\end{theorem}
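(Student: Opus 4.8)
The plan is to show that the analyzer's output $z$ is an unbiased estimator of $S := \sum_{i=1}^n \bar{x}_i$, so that $\mse(\mech{P}_{\gamma,p,n}) = \sup_{\bar{\tup{x}}} \Var[z]$, and then to bound this variance directly. Write $b_i \sim \bernoulli(\gamma)$ for the indicator that user $i$'s randomized response fires and $u_i$ for a fresh draw from the uniform message distribution, so the reported message is $y_i = (1-b_i)\bar{x}_i + b_i u_i$, with the pairs $(b_i,u_i)_{i\in[n]}$ mutually independent. Taking expectations, $\Ex[y_i] = (1-\gamma)\bar{x}_i + \gamma\mu$ where $\mu$ is the mean of the uniform message, hence $\Ex[\hat{z}] = (1-\gamma) S + n\gamma\mu$; the debiasing map in Algorithm~\ref{algo:agg} is exactly the affine correction $w \mapsto (w - n\gamma\mu)/(1-\gamma)$ that undoes this, so $\Ex[z] = S$ for every input. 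Since $z$ is affine in $\hat{z}$ with slope $1/(1-\gamma)$ and $\hat{z}$ is a sum of independent terms, $\Var[z] = \Var[\hat z]/(1-\gamma)^2 = (1-\gamma)^{-2}\sum_{i=1}^n \Var[y_i]$.

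Next I would compute $\Var[y_i]$ via the law of total variance, conditioning on $b_i$. Conditioned on $b_i=0$ the message equals $\bar x_i$ deterministically, while conditioned on $b_i=1$ it is the uniform message, whose variance is $(p^2-1)/12$; this gives the within-component term $\Ex[\Var[y_i\mid b_i]] = \gamma(p^2-1)/12$. The between-component term $\Var[\Ex[y_i\mid b_i]]$ is the variance of the two-valued random variable equal to $\bar x_i$ with probability $1-\gamma$ and to $\mu$ with probability $\gamma$, namely $\gamma(1-\gamma)(\bar x_i-\mu)^2$. Thus $\Var[y_i] = \gamma(p^2-1)/12 + \gamma(1-\gamma)(\bar x_i-\mu)^2$. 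The first summand is input-independent; for the second, the worst case occurs when $\bar x_i$ is as far from the midpoint $\mu$ of the message range as possible, giving $(\bar x_i-\mu)^2 \le ((p-1)/2)^2$. Summing over the $n$ users and dividing by $(1-\gamma)^2$ yields exactly $\frac{n}{(1-\gamma)^2}\bigl(\frac{\gamma(p^2-1)}{12} + \frac{(p-1)^2\gamma(1-\gamma)}{4}\bigr)$, as claimed.

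For the asymptotic remark I would substitute $\gamma = \max\{\frac{14p\log(2/\delta)}{(n-1)\epsilon^2},\frac{27p}{(n-1)\epsilon}\}$: this makes $n\gamma = O\bigl(\frac{p\log(1/\delta)}{\epsilon^2}\bigr)$, while $\gamma<1$ (and indeed $\gamma = o(1)$ in the regime where the protocol is used) gives $(1-\gamma)^{-2}=O(1)$, so the dominant $\frac{n\gamma(p^2-1)}{12(1-\gamma)^2}$ term is $O\bigl(\frac{p^3\log(1/\delta)}{\epsilon^2}\bigr)$ and the $(p-1)^2$ term is of the same order; dropping the $\log(1/\delta)$ factor gives $O_\delta(p^3/\epsilon^2)$. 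The computation is elementary throughout, so I do not expect a genuine obstacle; the only points that need care are verifying that the debiasing constant in Algorithm~\ref{algo:agg} makes $z$ exactly unbiased (so that no $\Theta(n^2\gamma^2)$ bias-squared term enters the MSE) and pinning down the correct worst-case value of $(\bar x_i-\mu)^2$, both of which hinge on the precise conventions for the message alphabet and the uniform noise.
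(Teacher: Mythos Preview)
The paper does not actually prove this theorem; it is quoted from \cite{DBLP:conf/crypto/BalleBGN19} (the text says ``This bound is given in (the proof of) Theorem~2 from \cite{DBLP:conf/crypto/BalleBGN19}''), so there is no in-paper proof to compare against. Your argument is the standard one and is essentially what the original reference does: unbiasedness of the debiased sum, independence across users, and the law of total variance conditioning on the randomized-response coin.

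One small point worth flagging, which you already hedge about in your last sentence: the constants in the theorem statement correspond to a uniform noise distribution with \emph{exactly $p$} values (variance $(p^2-1)/12$, midpoint-to-endpoint distance $(p-1)/2$), whereas Algorithm~\ref{algo:lr} as written in this paper samples from $\{0,\ldots,p\}$, which has $p+1$ values. With the latter convention you would get $\gamma\,p(p+2)/12$ and $(\bar x_i-\mu)^2\le p^2/4$ instead. This is an off-by-one inconsistency in the paper's restatement, not a flaw in your reasoning; the asymptotic conclusion $O_\delta(p^3/\epsilon^2)$ is unaffected either way.
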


\begin{algorithm2e}[t]
  \DontPrintSemicolon
  \SetKwInput{KwPub}{Public Parameters}
  \SetKwComment{Comment}{{\scriptsize$\triangleright$\ }}{}
\caption{Local Randomizer $\mech{R}^{\mathrm{rec}}_{\vec{p},\vec{\gamma}}$}\label{algo:rec_lr}
        \KwPub{Number of shufflers $m$, vector of precisions $\vec{p}\in \mathbb{N}^m$, and vector of randomized response probabilities $\vec{\gamma}\in [0,1]^m$. Also, let $q_j=\prod_{l=1}^{j}p_l$ and $q_0 = 0$.}
        \KwIn{$x \in [0,1]$}
        \KwOut{$\vec{y} \in \left(\prod_{j=1}^{m-1}\{0,\ldots,p_j\}\right)\times \{0,\ldots,p_m+1\}$}
        \BlankLine
        \For{$j\gets 1$ \KwTo $m$}{
          Let $s_j \gets \left\lfloor q_jx- p_j\lfloor q_{j-1}x \rfloor \right\rfloor$\;
        }
        Let $r\gets \bernoulli\left(q_m x- p_m \lfloor q_{m-1} x \rfloor-s_m\right)$\;
        Let $s_m \gets s_m+r$\;
        \For{$j\gets 1$ \KwTo $m-1$}{
          Let $y_j \gets \mech{R}_{\gamma_j,p_j}(s_j)$
        }
        Let $y_m \gets \mech{R}_{\gamma_m,p_m+1}(s_m)$\;
        \KwRet{$\vec{y}$}
\end{algorithm2e}

\begin{algorithm2e}[t]
  \DontPrintSemicolon
    \SetKwInput{KwPub}{Public Parameters}
  \SetKwComment{Comment}{{\scriptsize$\triangleright$\ }}{}
\caption{Analyzer $\mech{A}^{\mathrm{rec}}_{\vec{\gamma},\vec{p},n}$}\label{algo:rec_agg}
        \KwPub{$\vec{\gamma}$,$\vec{p}$ and number of parties $n$}
        \KwIn{Multiset $\{\vec{y}_i\}_{i \in [n]}$, with $\vec{y} _i\in \left(\prod_{j=1}^{m-1}\{0,\ldots,p_j\}\right)\times \{0,\ldots,p_m+1\}$}
        \KwOut{$z \in [0,1]$}
\BlankLine
\For{$j\gets0$ \KwTo $m-1$}{
  Let $z_j \gets  \mech{A}_{\gamma_j,p_j,n}(((y_i)_j)_{i \in [n]})$\;
}
Let $z_m \gets \mech{A}_{\gamma_m,p_m+1,n}(((y_i)_m)_{i \in [n]})$\;
Let $z \gets \sum_{j=1}^m z_j/q_j$\;
\KwRet{$z$}
\end{algorithm2e}

To understand the privacy guarantees of the recursive algorithm first note that the privacy budget must be split over the different shufflers. It satisfies any differential privacy guarantee that is satisfied by the composition of $m$ copies of Algorithm \ref{algo:lr} run with parameters $\gamma_j$, $p_j$ and $n$. The condition $\epsilon\leq m$ could be relaxed by a constant factor at only a constant factor cost to the given bounds, it follows from the assumption $\epsilon\leq 1$ made in the single-message case to simplify the analysis. The following theorem states the main result of this section: $O(\log\log n)$ messages suffice to achieve MSE $O((\log\log n)^3)$. Proofs of this and other results in the paper are deferred to the appendix.

\begin{theorem}
  \label{thm:rec_prot}
  For $\epsilon\leq m$ and $\log(1/\delta)\geq 2\epsilon$, take $\epsilon_j=\epsilon/m$, $\delta_j=\delta/m$, $\gamma_j=\frac{14(p_j+\mathbb{I}_{j=m})\log(2/\delta_j)}{(n-1)\epsilon_j^2}$ and $p_j=n^{3^{j-m-1}}$. Then $\mech{P}^{\mathrm{rec}}_{\vec{\gamma},\vec{p},n}$ is $(\epsilon,\delta)$-differentally private and
  \begin{equation*}
    \mse(\mech{P}^{\mathrm{rec}}_{\vec{\gamma},\vec{p},n})=O_\delta\left(n^{3^{-m}}\left(1+\frac{m^3}{\epsilon^2}\right)\right) \enspace.
  \end{equation*}
  In particular, for $m=\lfloor \log_3(\log_2(n)) \rfloor$ and $p_j=2^{3^j}$ we get
  \begin{equation*}
    \mse(\mech{P}^{\mathrm{rec}}_{\vec{\gamma},\vec{p},n})=O_\delta\left(\frac{(\log \log n)^3}{\epsilon^2}\right) \enspace.
  \end{equation*}
\end{theorem}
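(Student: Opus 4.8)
The plan is to obtain privacy by composing the single-message guarantee of Theorem~\ref{thm:single_prot} across the $m$ shufflers, and to obtain the accuracy bound by decomposing the estimator's error into one subprotocol error per message plus a single rounding error, then choosing the precision schedule so that all these terms are of the same order. For privacy, observe that $\mech{R}^{\mathrm{rec}}_{\vec{p},\vec{\gamma}}$ computes the digits $s_1,\dots,s_m$ of its input deterministically apart from the one Bernoulli $r$ correcting the last digit, and then feeds each $s_j$ to a fresh independent copy of $\mech{R}_{\gamma_j,p_j'}$ with $p_j':=p_j+\mathbb{I}_{j=m}$; the $j$-th messages of all users pass through the independent shuffler $\mech{S}^{(j)}$. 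Hence the collector's view factors as $m$ independent copies of the single-message protocols $\mech{P}_{\gamma_j,p_j',n}$. Each of these meets the hypotheses of Theorem~\ref{thm:single_prot} with $\epsilon_j=\epsilon/m\le1$ (using $\epsilon\le m$) and $\delta_j=\delta/m$: the chosen $\gamma_j$ equals the required maximum $\max\{\frac{14p_j'\log(2/\delta_j)}{(n-1)\epsilon_j^2},\frac{27p_j'}{(n-1)\epsilon_j}\}$ precisely because $\log(2/\delta_j)\ge\log(1/\delta)\ge 2\epsilon\ge 2\epsilon_j$ (this is the only place $\log(1/\delta)\ge2\epsilon$ is used), and $\gamma_j<1$ for the range of $n$ considered. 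Basic composition over the $m$ shufflers then gives $(\sum_j\epsilon_j,\sum_j\delta_j)=(\epsilon,\delta)$-DP.

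For accuracy, write $S^{(j)}=\sum_i s_i^{(j)}$ (with the randomized-rounding-corrected value of $s_i^{(m)}$), let $u_i=\sum_{j=1}^m s_i^{(j)}/q_j$ be user $i$'s encoded value, and recall the analyzer outputs $z=\sum_j z_j/q_j$ with $z_j$ the debiased estimate of $S^{(j)}$. Two facts drive the argument: (i) the nested fixed-point encoding telescopes, $\sum_{j=1}^k s_i^{(j)}/q_j=\lfloor q_k x_i\rfloor/q_k$, so that adding the Bernoulli correction on the last digit makes $u_i$ an \emph{unbiased} estimate of $x_i$ with $\Var(u_i)\le\frac14 q_m^{-2}$; and (ii) conditioned on the input-side randomness $W=(\vec x,\{r_i\})$, which fixes every $s_i^{(j)}$, the $z_j$ are mutually independent, conditionally unbiased for $S^{(j)}$, and satisfy $\Ex[(z_j-S^{(j)})^2\mid W]\le\mse(\mech{P}_{\gamma_j,p_j',n})$ by Theorem~\ref{thm:single_prot}. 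Conditioning on $W$ and applying the law of total variance — the encoding error sits in the conditional bias term and the subprotocol errors in the conditional variance, so no cross terms survive — yields
\begin{align*}
\mse\big(\mech{P}^{\mathrm{rec}}_{\vec{\gamma},\vec{p},n}\big)\ \le\ \sum_{j=1}^m\frac{\mse(\mech{P}_{\gamma_j,p_j',n})}{q_j^2}\ +\ \frac{n}{4\,q_m^2}\enspace.
\end{align*}

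It then remains to substitute. Theorem~\ref{thm:single_prot} gives $\mse(\mech{P}_{\gamma_j,p_j',n})=O\big(p_j'^3\log(2/\delta_j)/\epsilon_j^2\big)=O\big(p_j^3 m^2\log(2m/\delta)/\epsilon^2\big)$, while the choice $p_j=n^{3^{j-m-1}}$ gives, by a short geometric-series computation, $q_j=n^{(3^{j-m}-3^{-m})/2}$, so that $p_j^3/q_j^2=n^{3^{-m}}$ for \emph{every} $j$ and also $n/q_m^2=n^{3^{-m}}$. Hence each of the $m$ subprotocol contributions, and the rounding term, is of order $n^{3^{-m}}$ up to the $m^2\log(m/\delta)/\epsilon^2$ (respectively $O(1)$) factor, and summing gives $\mse=O\big(n^{3^{-m}}(1+m^3\log(m/\delta)/\epsilon^2)\big)$, the stated bound with the slowly-growing $\log(m/\delta)$ absorbed into $O_\delta$. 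The ``in particular'' clause follows by taking $m=\lfloor\log_3\log_2 n\rfloor$, for which $3^{-m}\le 3/\log_2 n$ so $n^{3^{-m}}=O(1)$, and $m^3=O((\log\log n)^3)$. I expect the real work to be (a) pinning down the last-message bookkeeping (the $p_m+1$ / $\mathbb{I}_{j=m}$ correction) so that $u_i$ is \emph{exactly} unbiased — any residual bias in a $z_j$ would, after the $\sum_j q_j^{-1}$ aggregation, cost an extra factor of $m$ and break the bound — and (b) recognizing that the exponent ratio $\tfrac13$ in the precision schedule is precisely what equalizes the $m$ error terms; once those are in place the remaining estimates are routine. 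A minor technicality is that $n^{3^{j-m-1}}$ need not be an integer, so one actually uses $p_{j+1}=\lceil p_j^{1/3}\rceil$, which affects only constants.
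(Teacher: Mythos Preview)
Your proposal is correct and follows essentially the same approach as the paper: basic composition across the $m$ shufflers for privacy (using $\log(1/\delta)\ge 2\epsilon$ to ensure the first branch of the $\max$ dominates in $\gamma_j$), and for accuracy the same MSE decomposition
\[
\mse(\mech{P}^{\mathrm{rec}}_{\vec{\gamma},\vec{p},n})\le \frac{n}{4q_m^2}+\sum_{j=1}^m \frac{\mse(\mech{P}_{\gamma_j,p_j',n})}{q_j^2},
\]
which the paper states as a separate lemma (Lemma~\ref{lemma:bound-outer-mse}) before substituting Theorem~\ref{thm:single_prot}. The only stylistic difference is that the paper spends a page \emph{deriving} the precision schedule $p_l=p_{l-1}^3$ by differentiating the error expression and solving for the optimum, whereas you take the schedule $p_j=n^{3^{j-m-1}}$ as given and directly verify that $p_j^3/q_j^2=n/q_m^2=n^{3^{-m}}$ for every $j$; since the theorem statement already fixes the parameters, your route is the more economical proof, while the paper's route explains where the schedule comes from.
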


The above theorem uses basic composition, using advanced composition gives the following corollary which follows from much the same proof.
The asymptotics in $n$ are better by a factor of $\log \log n$, however this comes at the cost of a $\log(1/\delta)$ factor and so in practice the basic composition analysis might be preferred depending on the concrete values of the parameters involved.

\begin{corollary}\label{cor:rec_prot_adv_comp}
  Let $m=\lfloor \log_3(\log_2(n)) \rfloor$, $p_j=2^{3^j}$, by taking appropriate $\epsilon_j=\Theta\left(\epsilon\sqrt{\frac{\log(1/\delta)}{m}}\right)$ and appropriate $\delta_j$ we get
    \begin{equation}
    \mse(\mech{P}^{\mathrm{rec}}_{\vec{\gamma},\vec{p},n})=O_\delta\left(\frac{(\log \log n)^2}{\epsilon^2}\right) \enspace.
  \end{equation}
\end{corollary}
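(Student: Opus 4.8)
I would prove Corollary~\ref{cor:rec_prot_adv_comp} by replaying the proof of Theorem~\ref{thm:rec_prot} almost verbatim, changing only the composition step. First recall the structure used there: from the data collector's viewpoint $\mech{P}^{\mathrm{rec}}_{\vec{\gamma},\vec{p},n}$ is the concatenation of the $m$ single-message subprotocols $\mech{P}_{\gamma_j,p_j,n}$ (the last run on $\{0,\dots,p_m+1\}$), so privacy reduces to composing the $m$ per-message $(\epsilon_j,\delta_j)$-DP guarantees supplied by Theorem~\ref{thm:single_prot}; and the error of $\mech{A}^{\mathrm{rec}}$ splits, via $z=\sum_j z_j/q_j$ and the unbiasedness of the randomized rounding inserted in the last digit, into the encoding variance $O(n/q_m^2)$ plus $\sum_{j=1}^m q_j^{-2}\,\mse(\mech{P}_{\gamma_j,p_j,n})$. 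With $p_j=2^{3^j}$ we have $q_j=2^{(3^{j+1}-3)/2}$, hence $p_j^3/q_j^2=8$, and for $m=\lfloor\log_3\log_2 n\rfloor$ (so $3^m\le\log_2 n<3^{m+1}$) also $n/q_m^2=\Theta(1)$. Plugging $\mse(\mech{P}_{\gamma_j,p_j,n})=O(p_j^3\log(1/\delta_j)/\epsilon_j^2)$ from Theorem~\ref{thm:single_prot} into the decomposition, each scaled term is $q_j^{-2}\mse(\mech{P}_{\gamma_j,p_j,n})=O(\log(1/\delta_j)/\epsilon_j^2)$, so $\mse(\mech{P}^{\mathrm{rec}}_{\vec{\gamma},\vec{p},n})=O\big(1+\sum_{j=1}^m\log(1/\delta_j)/\epsilon_j^2\big)$.

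The one change is to bound the $\epsilon_j$ using the advanced composition theorem instead of $\epsilon=\sum_j\epsilon_j$. Concretely I would set $\delta'=\delta/2$, $\delta_j=\delta/(2m)$, and take each $\epsilon_j$ of order $\epsilon/\sqrt{m\log(1/\delta)}$, so that $\sqrt{2m\ln(1/\delta')}\,\epsilon_j+m\epsilon_j(e^{\epsilon_j}-1)\le\epsilon$ and $m\delta_j+\delta'\le\delta$, i.e.\ the composed protocol is $(\epsilon,\delta)$-DP. This gives $\epsilon_j^{-2}=\Theta(m\log(1/\delta)/\epsilon^2)$, an improvement by a factor $\Theta(m/\log(1/\delta))$ over the $\epsilon_j^{-2}=m^2/\epsilon^2$ forced by basic composition, and summing over the $m$ messages turns the $m^3$ of Theorem~\ref{thm:rec_prot} into $m^2$:
\begin{equation*}
\mse(\mech{P}^{\mathrm{rec}}_{\vec{\gamma},\vec{p},n})=O\Big(1+\sum_{j=1}^m\frac{\log(1/\delta_j)}{\epsilon_j^2}\Big)=O_\delta\Big(1+\frac{m^2}{\epsilon^2}\Big)=O_\delta\Big(\frac{(\log\log n)^2}{\epsilon^2}\Big),
\end{equation*}
the last step using $n^{3^{-m}}=\Theta(1)$ and $m=\Theta(\log\log n)$ for the chosen value of $m$.

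The step I expect to require the most care is exactly this re-composition bookkeeping: I must choose $(\epsilon_j,\delta_j)$ so as to simultaneously (i) compose to the target $(\epsilon,\delta)$ under advanced composition, (ii) still satisfy the hypotheses $\epsilon_j\le1$ and $\gamma_j=\frac{14(p_j+\mathbb{I}_{j=m})\log(2/\delta_j)}{(n-1)\epsilon_j^2}<1$ needed to invoke Theorem~\ref{thm:single_prot}, and (iii) leave the spurious $\log(1/\delta_j)$ and $\log(1/\delta)$ factors absorbable into $O_\delta(\cdot)$. Point (ii) is not a genuine obstacle: $\gamma_j\propto\epsilon_j^{-2}$ is actually \emph{smaller} under advanced composition than under the basic-composition analysis of Theorem~\ref{thm:rec_prot} once $m>\log(1/\delta)$, so $\gamma_j<1$ is inherited for $n$ large enough, and for smaller $n$ one simply falls back on the basic-composition bound. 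I do not anticipate any new conceptual difficulty beyond what already appears in the proof of Theorem~\ref{thm:rec_prot}.
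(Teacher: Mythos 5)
Your proposal is correct and takes essentially the same approach as the paper: the paper gives no separate proof of this corollary, stating only that it ``follows from much the same proof'' as Theorem~\ref{thm:rec_prot} with advanced composition in place of basic composition, which is exactly the substitution you carry out (and your bookkeeping of the per-message error $q_j^{-2}\mse(\mech{P}_{\gamma_j,p_j,n})=O(\log(1/\delta_j)/\epsilon_j^2)$ and of the hypotheses of Theorem~\ref{thm:single_prot} is sound). One remark: your $\epsilon_j=\Theta\bigl(\epsilon/\sqrt{m\log(1/\delta)}\bigr)$ is the choice that actually makes advanced composition close at $(\epsilon,\delta)$; the corollary's stated $\Theta\bigl(\epsilon\sqrt{\log(1/\delta)/m}\bigr)$ appears to be a typo with the $\log(1/\delta)$ factor misplaced.
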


In the next section we show a protocol that achieves constant error with a constant number of messages $m$, for large enough $n$. However, it requires $m > 3$, while the protocol presented in this section achieves MSE at most $O_{\epsilon,\delta}(n^{1/9})$ and $O_{\epsilon,\delta}(n^{1/27})$ with $2$ and $3$ messages, respectively. This is in contrast with the $\Omega(n^{1/3})$ lower bound for the single message case proved in \cite{DBLP:conf/crypto/BalleBGN19}.
 
\section{Constant Error from Secure Summation}
\label{sec:constant_error}
Extending the ideas from single message summation has failed to achieve an error that doesn't grow with $n$. In this section we take an alternative approach that will.
They key idea is to leverage a secure summation protocol in the shuffle model due to Ishai et al.~\cite{ikos}.
This protocol can be used to simulate (a discrete version of) the Laplace mechanism -- which provides $O(1)$ MSE in the central model -- in the shuffle model.
We will first define the secure summation functionality, and then provide our result in the form of a reduction from private summation to secure summation that preserves the number of messages being shuffled.
Combining this reduction with the result from \cite{ikos} we obtain a private summation protocol in the shuffle model with MSE $O(1)$ using $O(\log n)$ messages.
In the next section we provide an improved analysis of secure summation that yields a protocol using only $O(1)$ messages.

\subsection{Secure Exact Summation}\label{sec:secure_sum}

Let $\dom{G}$ be an Abelian group and $\mech{P} = (\mech{R},\mech{A})$ be a protocol in the shuffle model with input and output space $\dom{G}$. We say that $\mech{P}$ \emph{computes exact summation} if for any $\tup{x} \in \dom{G}^n$ we have $\mech{M}_{\mech{P}}(\tup{x}) = \sum_{i \in [n]} x_i$.

Such a protocol is deemed secure if the view of the aggregator on input $\tup{x}$ is indistinguishable from the view on another input $\tup{x}'$ producing the same output.
More formally, we say that a randomized protocol $\mech{P}$ computing exact summation over $\dom{G}$ provides \emph{worst-case statistical security} with parameter $\sigma$ if for any $\tup{x}, \tup{x}' \in \dom{G}^n$, such that $\sum_{i \in [n]} x_i = \sum_{i \in [n]} x_i',$ we have $\TV(\mech{V}_{\mech{P}}(\tup{x}),\mech{V}_{\mech{P}}(\tup{x}')) \leq 2^{-\sigma}$, where $\TV$ denotes the total variation (i.e.\ statistical) distance.

The following result from \cite{ikos} provides an upper bound on the number of messages required to compute exact summation securely in the shuffle model.

\begin{lemma}[\cite{ikos}]
\label{lemma:IKOS}
There exists an $n$-party protocol that computes summation over $\mathbb{Z}_q$ in the (single-shuffler) shuffle model with worst-case statistical security parameter $\sigma$ using $O(\log(qn)+\sigma)$ messages per party.  
\end{lemma}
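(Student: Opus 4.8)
The plan is to analyze the additive secret-sharing protocol directly: each user $i$ splits $x_i \in \Z_q$ into $m$ shares $y_i^{(1)}, \ldots, y_i^{(m)}$ that are uniform over $\Z_q$ subject to $\sum_{j=1}^m y_i^{(j)} = x_i$, and sends all $nm$ shares through a single shuffler. Since correctness is immediate (the aggregator sums the received multiset and gets $\sum_i x_i$), the entire content is the security claim. The view of the aggregator is the multiset of $nm$ elements of $\Z_q$; I want to show that for any two input vectors $\tup{x}, \tup{x}'$ with $\sum_i x_i = \sum_i x_i'$, the distributions of this multiset are $2^{-\sigma}$-close in total variation whenever $m = O(\log(qn) + \sigma)$.

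First I would reduce to a cleaner description of the view. Conditioned on each user having ``used up'' their randomness, the multiset of all shares is the same as the multiset $\{z_1, \ldots, z_{nm}\}$ where the $z_k$ are i.i.d.\ uniform over $\Z_q$ \emph{except} for the global constraint $\sum_k z_k = \sum_i x_i$, but with the additional per-user structure that each block of $m$ consecutive shares sums to $x_i$. The key observation is that the histogram (i.e.\ the count vector in $\N^{\Z_q}$) is a sufficient statistic for the view, so it suffices to compare the distributions of these count vectors. I would then argue that for $m$ large enough, the per-user constraints become ``invisible'': the distribution of the multiset of shares for a single user with value $x_i$ — namely $m$ uniform elements summing to $x_i$ — is close in total variation to $m$ completely independent uniform elements, with a gap that shrinks geometrically, roughly like $(1 - 1/q)^{?}$ or, more precisely, controlled by how close the distribution of $m$ i.i.d.\ uniforms conditioned on a fixed sum is to unconditioned. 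In fact for $m \geq 2$ this conditional distribution of the \emph{multiset} (forgetting order) is already quite spread out, and tensorizing over users and tracking the statistical distance of the resulting multiset distributions gives the bound. The cleanest route is: (i) show the multiset distribution for one user depends on $x_i$ only through a perturbation of size $O(q^{-(m-1)})$ or similar in an appropriate metric; (ii) combine across $n$ users, noting that only the total $\sum x_i$ survives; (iii) set $m$ so the accumulated distance is below $2^{-\sigma}$, which requires $m = O(\log(qn) + \sigma)$ to absorb the $q$ and $n$ factors.

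The main obstacle will be step (i): making precise the sense in which a single user's shuffled shares hide $x_i$, and getting the right dependence on $q$ and $m$. One must be careful because a single user contributes only $m$ messages out of $nm$, so the ``hiding in the crowd'' intuition that should give the $\log n$ improvement is exactly what is \emph{not} exploited in this crude analysis — here the hiding comes entirely from the internal randomness of the sharing, which is why $m$ ends up scaling like $\log q + \sigma + \log n$ rather than decreasing in $n$. Concretely I expect to bound $\TV$ between the view on $\tup{x}$ and on a ``canonical'' input (say all mass on one user), using a hybrid argument over the $n$ users and a union bound, where each hybrid step costs the single-user statistical distance; the arithmetic of Fourier analysis over $\Z_q$ (characters of $\Z_q$, with the non-trivial characters contributing $|\hat{\mathbb{1}}|^m$-type terms of magnitude $\leq 1$ strictly less than $1$... actually exactly $0$ for uniform, so one needs the multiset/Poissonization subtlety) is where the real care lies. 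Since this is a result quoted from \cite{ikos}, I would ultimately cite their proof, but the above sketch indicates the structure: correctness is trivial, security follows from a hybrid argument plus a single-user hiding lemma whose quantitative form forces $m = O(\log(qn) + \sigma)$.
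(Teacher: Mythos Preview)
Your proposal has a genuine gap in step~(i): the multiset of shares produced by a \emph{single} user does not hide that user's input at all, since the sum of the $m$ shares equals $x_i$ exactly. So the ``single-user hiding lemma'' you sketch --- that the multiset distribution for one user depends on $x_i$ only through a perturbation of size $O(q^{-(m-1)})$ --- is simply false, and no Fourier or Poissonization fix will rescue it. You seem to sense this when you note that the nontrivial characters contribute ``exactly $0$ for uniform'', but you do not identify the correct replacement. The hiding must come from mixing one user's shares with \emph{another} user's shares, so the basic security unit is two users, not one.

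The paper's proof (following~\cite{ikos}) proceeds accordingly. For two users with inputs $x,y$, the aggregator sees a multiset of $2m$ shares; the key observation is that the map sending a choice of $m$ positions (out of $2m$) to the sum of the shares in those positions is a universal hash family (indexed by the tuple of $2m$ values), so the Leftover Hash Lemma gives that $(V, X)$ is $2^{-s}$-close to $(V,U)$ whenever $\log\binom{2m}{m} \geq \lceil \log q \rceil + 2s$. Two Markov/triangle steps upgrade this average-case statement to worst-case two-party security, paying a factor of $q^2$. The $n$-party case then follows by a hybrid over ``basic steps'' in which exactly two coordinates of the input change (preserving the sum), with $n-1$ such steps, which is where the $\log(n-1)$ term enters. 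Your outline gets the hybrid structure roughly right but misses the Leftover Hash Lemma entirely, which is the technical heart of the argument.
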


The protocol by Ishai et al.\ is very simple.
For a given number of messages $m$ it uses the local randomizer $\mech{R}_{m}(x) = (\rv{Y}^{(1)}, \ldots, \rv{Y}^{(m)})$, where $\rv{Y}^{(j)}$ are uniform random variables in $\dom{G}$ conditioned on $\sum_{j \in [m]} \rv{Y}^{(j)} = x$. We refer to the output of $\mech{R}_{m}(x)$ as a collection of \emph{additive shares} of the input $x$. By using this randomizer, each user $i$ submits to the shuffler the additive shares $\mech{R}_{m}(x_i) = (\rv{Y}^{(1)}_i, \ldots, \rv{Y}^{(m)}_i)$ and all the analyzer needs to do is sum the messages after shuffling to recover the sum $\sum_{i \in [n]} x_i = \sum_{i \in [n]} \sum_{j \in [m]} \rv{Y}^{(j)}_i$.
The original presentation in \cite{ikos} assumes all $n m$ messages are shuffled together -- i.e.\ it is in the single-shuffler model discussed in Section~\ref{sec:observations-shuffle} -- but as the results in Section~\ref{sec:improved_communication} will show, the protocol also works in the $m$-parallel shufflers model considered in this paper.
We refer to this as the IKOS secure exact summation protocol, which is $\sigma$-secure as stated in the lemma.

We reproduce their proof in Section~\ref{sec:ikos_proof}, so that we can keep track of the constants. There we show that in fact it suffices to take
\begin{align}
\notag
m &= \frac{5}{2}\log(q)+\sigma+\log(n-1) \\
&\; + \frac{1}{4}\log(\log(q)+\sigma + \log(n-1))+O(1) \enspace.
\label{eqn:m_ikos}
\end{align}

\subsection{Distributed Noise Addition}
\label{sec:noise}
The challenge to simulate the Laplace mechanism assuming a secure summation protocol in the shuffle model is to distribute the noise addition operation across $n$ users.
A simple solution is to have a designated user add the noise required in the curator model. This is however not a satisfying solution as it does not withstand collusions and/or dropouts. To address this, Shi et al.~\cite{shi} proposed a solution where each party adds enough noise to provide $\epsilon$-DP in the curator model with probability $\log(1/\delta)/n$, which results in an $(\epsilon,\delta)$-DP protocol. However, one can do strictly better: the total noise can be reduced by a factor of $\log(1/\delta)$. To achieve this each party adds a discrete random variable such that the sum of the contributions
is exactly enough to provide $\epsilon$-DP.

A discrete random variable with this property is provided in~\cite{GX}, where it is shown that a discrete Laplace random variable can be expressed as the sum of $n$ differences of two P\'olya random variables (the P\'olya distribution is a generalization of the negative binomial distribution). Concretely, if $\rv{X}_i$ and $\rv{Y}_i$ are independent $\polya(1/n,\alpha)$ random variables then $\rv{Z}=\sum_{i=1}^n (\rv{X}_i-\rv{Y}_i)$ follows a \emph{discrete Laplace distribution} $\discretelaplace(\alpha)$ on $\mathbb{Z}$, i.e. $\PP[\rv{Z}=k]\propto \alpha^{|k|}$. This allows to distribute the noise in the Laplace mechanism across $n$ users and achieve $\epsilon$-DP on the result of the sum by tuning $\alpha$ appropriately.

This discrete noise distribution forms the basis of the randomizer of the protocol presented in the next section. When designing the protocol we will need to take care of working in the finite group $\mathbb{Z}_q$ instead of $\mathbb{Z}$, and thus analyze the potential effect on accuracy arising from overflows in the unlikely event that the noise becomes too large.

\subsection{Private Summation}
\label{sec:private_summation}

In this section we provide a reduction in the shuffle model for converting a secure integer summation protocol into a differentially private real summation protocol. We then combine this lemma with Lemma~\ref{lemma:IKOS} to derive an explicit protocol for differentially private real summation.
The privacy argument relies on leveraging the security properties of the IKOS summation protocol to compare the outputs of our private summation protocol in the shuffle model with a protocol in the central model where the noise is added to the sum of all the inputs and then an execution of the IKOS protocol is simulated.

\begin{lemma}
\label{lemma:privatefromsecure}
Suppose for any $q > 0$ there exists an $n$-party secure exact summation protocol over $\mathbb{Z}_q$ in the shuffle model providing worst-case statistical security with parameter $\sigma$ with $f(q,n,\sigma)$ messages per party. Then there exists an $(\epsilon, (1+e^\epsilon)2^{-\sigma})$-DP protocol for private real summation in the shuffle model with MSE $O_\epsilon(1)$ and $f(\lceil 2n^{3/2} \rceil,n,\sigma)$ messages per party.
\end{lemma}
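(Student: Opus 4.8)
The plan is to construct the real summation protocol explicitly and then bound its MSE and privacy by coupling it with an idealized curator-model mechanism. Set $p = \lceil \sqrt{n} \rceil$ and $q = \lceil 2 n^{3/2} \rceil$, and pick the noise parameter $\alpha = \alpha(\epsilon, p)$ so that the discrete Laplace $\discretelaplace(\alpha)$ provides $\epsilon$-DP for the sensitivity-$p$ integer summation query (roughly $\alpha = e^{-\epsilon/p}$, chosen by the standard discrete-Laplace privacy calculation). Each user $i$ computes a randomized-rounding discretization $\bar x_i = \lfloor p x_i \rfloor + \bernoulli(p x_i - \lfloor p x_i \rfloor) \in \{0,\dots,p\}$, draws $\rv X_i, \rv Y_i$ i.i.d.\ $\polya(1/n, \alpha)$, forms $w_i = \bar x_i + \rv X_i - \rv Y_i \bmod q \in \Z_q$, and feeds $w_i$ into the assumed secure exact summation protocol over $\Z_q$ with security parameter $\sigma$; the analyzer takes the returned group sum, lifts it to the symmetric representative in $\{-\lceil q/2\rceil+1,\dots,\lfloor q/2 \rfloor\}$, and divides by $p$. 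The message count is exactly $f(\lceil 2n^{3/2}\rceil, n, \sigma)$ by construction, so that part is immediate.

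For accuracy, first note that by the infinite divisibility result from \cite{GX} (Section~\ref{sec:noise}), $\rv Z := \sum_i (\rv X_i - \rv Y_i) \sim \discretelaplace(\alpha)$ exactly, so the exact group sum is $\big(\sum_i \bar x_i + \rv Z\big) \bmod q$. I would define the "good event" that $|\sum_i \bar x_i + \rv Z| < q/2$, i.e.\ no wraparound; on this event the analyzer recovers $\sum_i \bar x_i + \rv Z$ exactly, and the error decomposes as $(\sum_i \bar x_i - \sum_i p x_i)/p + \rv Z / p$. The first term is a sum of $n$ independent centered random roundings, contributing variance $O(n)/p^2 = O(1)$; the second contributes $\Var(\discretelaplace(\alpha))/p^2 = O(p^2)/p^2 = O_\epsilon(1)$ with our choice of $\alpha$. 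Since $\sum_i \bar x_i \le np \le n^{3/2}$ deterministically and $\rv Z$ has exponential tails with scale $O(p)=O(\sqrt n)$, a Chernoff/union bound shows the bad event has probability $\le e^{-\Omega(\sqrt n)}$; the contribution of the bad event to the MSE is at most (probability) $\times (\text{max error})^2 = e^{-\Omega(\sqrt n)} \cdot O(n^2) = o(1)$. Summing gives $\mse = O_\epsilon(1)$.

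For privacy, the key move is the hybrid argument sketched in the paragraph before the lemma. Consider the idealized mechanism $\mech{M}'$ that, given $\tup x$, computes $c = \sum_i \bar x_i + \rv Z \bmod q$ in the curator, then runs the secure summation protocol on an arbitrary fixed input tuple summing to $c$ (e.g.\ $(c,0,\dots,0)$) and outputs that view. Because the real protocol's per-user inputs $w_i$ sum to $c$ in $\Z_q$, worst-case statistical security with parameter $\sigma$ gives $\TV\big(\mech{V}_{\mech{P}}(\tup x), \mech{M}'(\tup x)\big) \le 2^{-\sigma}$ for every $\tup x$ — this is where the hypothesis is used, and it is what lets us replace the structured shares by a "clean" simulation depending on $\tup x$ only through $c$. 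Next, $c$ as a function of $\tup x$ is exactly the $\epsilon$-DP discrete Laplace mechanism applied to the sensitivity-$\le 1$ (in $\bar x$-space, sensitivity $\le p$ matched to the scale of the noise) integer query $\sum_i \bar x_i$ reduced mod $q$: changing one user's $x_i$ changes $\bar x_i$ by at most $p$, and $\alpha$ was chosen so that $\discretelaplace(\alpha)$-noise on a sensitivity-$p$ query is $\epsilon$-DP; mod-$q$ reduction is post-processing. Hence $c(\tup x)$ and $c(\tup x')$ are $(\epsilon, 0)$-indistinguishable, so $\mech{M}'(\tup x)$ and $\mech{M}'(\tup x')$ are too (post-processing of $c$). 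Chaining the three bounds via the standard "group-privacy / triangle inequality for approximate DP" lemma — $\TV \le 2^{-\sigma}$, then $e^\epsilon$ multiplicative with additive $0$, then $\TV \le 2^{-\sigma}$ again — yields for any event $E$:
\begin{align*}
\Pr[\mech{V}_{\mech{P}}(\tup x) \in E]
&\le \Pr[\mech{M}'(\tup x)\in E] + 2^{-\sigma}
\le e^\epsilon \Pr[\mech{M}'(\tup x')\in E] + 2^{-\sigma} \\
&\le e^\epsilon\big(\Pr[\mech{V}_{\mech{P}}(\tup x')\in E] + 2^{-\sigma}\big) + 2^{-\sigma}
= e^\epsilon \Pr[\mech{V}_{\mech{P}}(\tup x')\in E] + (1+e^\epsilon)2^{-\sigma},
\end{align*}
which is exactly $(\epsilon, (1+e^\epsilon)2^{-\sigma})$-DP.

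**Main obstacle.** The delicate point is making the "sensitivity $p$ vs.\ discrete Laplace scale $\alpha$" privacy calibration fully rigorous in the finite group: one must check that reducing mod $q$ genuinely does not leak more than the integer-valued discrete Laplace mechanism (it is post-processing of a function of $c$, so fine), and, more importantly, pin down the exact relationship between $\epsilon$, $p$, and $\alpha$ from the discrete-Laplace privacy profile so that the per-user change of $p$ units in $\bar x$ costs exactly $\epsilon$. I also need to verify that the "good event" analysis is not circular — the analyzer's lifting rule is deterministic post-processing and does not depend on the bad event, so privacy is unaffected, and only the MSE bound needs the tail estimate. Everything else (variance of randomized rounding, variance and tail of $\discretelaplace$, the approximate-DP triangle inequality) is routine.
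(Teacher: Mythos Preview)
Your proposal is correct and follows essentially the same approach as the paper: the same parameter choices ($p=\sqrt{n}$, $q=\lceil 2n^{3/2}\rceil$, $\alpha=e^{-\epsilon/p}$), the same construction (randomized rounding, distributed P\'olya noise, then secure summation), the same privacy hybrid via a curator mechanism that runs the secure-summation view on $(c,0,\dots,0)$, and the same MSE decomposition into rounding, discrete-Laplace variance, and overflow. Your explicit triangle-inequality chain is exactly the content of the paper's Lemma~\ref{lemma:SD2delta}, and your ``main obstacle'' (mod-$q$ reduction as post-processing of the $\epsilon$-DP integer statistic) is handled the same way in the paper, if somewhat tersely.
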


\begin{algorithm2e}[t]
  \DontPrintSemicolon
    \SetKwInput{KwPub}{Public Parameters}
  \SetKwComment{Comment}{{\scriptsize$\triangleright$\ }}{}
\caption{Analyzer $\mech{A}_{n,m,p,q}$}\label{algo:agg_const}
        \KwPub{Number of parties $n$, number of messages per party $m$, precision $p$ and order $q>np$ of the additive group.}
        \KwIn{Multiset $\{y_i\}_{i\in [nm]}$, with $y _i\in \mathbb{Z}$}
        \KwOut{$z \in \mathbb{R}$}
\BlankLine
Let $z \gets \sum_{i=1}^{nm} y_i$ mod $q$ \Comment*[r]{Add all inputs mod $q$}
{\bf if }{$z>\frac{np+q}{2}$}{ \bf then }{$z \gets z-q$} \Comment*[r]{Correct for underflow}
\KwRet{$z/p$} \Comment*[r]{Rescale and return estimate}
\end{algorithm2e}

\begin{algorithm2e}[t]
\label{algo:locrand}
  \DontPrintSemicolon
  \SetKwInput{KwPub}{Public Parameters}
  \SetKwComment{Comment}{{\scriptsize$\triangleright$\ }}{}
\caption{Local Randomizer $\mech{R}_{\alpha,m,p,q}$}\label{algo:locrand_const}
        \KwPub{Noise magnitude $\alpha$, number of messages $m$, precision $p$ and order $q>np$ of the additive group.}
        \KwIn{$x \in [0,1]$}
        \KwOut{$\vec{y} \in [0..q-1]^m$}
\BlankLine
Let $\tilde{x}\gets \lfloor xp \rfloor + \bernoulli(xp-\lfloor xp \rfloor)$ \Comment*[r]{$\tilde{x}$ is the encoding of $x$ with precision $p$}
Let $y \gets \tilde{x}+\polya(1/n,\alpha)-\polya(1/n,\alpha)$ \Comment*[r]{add noise to $\tilde{x}$ as an element of $\Z_q$}
Sample $\vec{y} \gets \uniform([ 0..q-1 ]^m)$ conditioned on $\sum_{i\in [m]}y_i=y$

\KwRet{$\vec{y}$} \Comment*[r]{Submit $m$ additive shares of $y$}
\end{algorithm2e}

Combining Lemma~\ref{lemma:IKOS} and Lemma~\ref{lemma:privatefromsecure} we can conclude the following theorem.

\begin{theorem}
There exists an $(\epsilon,\delta)$-DP protocol in the shuffle model for real summation with MSE $O(1/\epsilon^2)$ and $O(\log(n/\delta))$ messages per party, each of length $O(\log n)$ bits.
\end{theorem}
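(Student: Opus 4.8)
The plan is to obtain this theorem as a direct corollary of Lemma~\ref{lemma:privatefromsecure} and Lemma~\ref{lemma:IKOS}, so the only work is to track how the parameters compose. First I would observe that the IKOS secure exact summation protocol of Lemma~\ref{lemma:IKOS} realizes exactly the hypothesis of Lemma~\ref{lemma:privatefromsecure}, with message count $f(q,n,\sigma) = O(\log(qn) + \sigma)$; nothing further needs to be proved about the existence of such a protocol over $\mathbb{Z}_q$.

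Next I would instantiate Lemma~\ref{lemma:privatefromsecure} with this $f$. It yields an $(\epsilon,(1+e^\epsilon)2^{-\sigma})$-DP real summation protocol with MSE $O_\epsilon(1)$ which internally performs secure summation over $\mathbb{Z}_q$ for $q = \lceil 2 n^{3/2}\rceil$, and hence uses $f(\lceil 2 n^{3/2}\rceil, n, \sigma) = O(\log(n^{3/2}\cdot n) + \sigma) = O(\log n + \sigma)$ messages per party. Each message is an element of $\mathbb{Z}_q$ and can be encoded in $\lceil \log_2 q \rceil = O(\log n)$ bits, giving the claimed message length. To hit the target $\delta$, I would set $\sigma = \lceil \log_2((1+e^\epsilon)/\delta) \rceil$, so that $(1+e^\epsilon)2^{-\sigma} \le \delta$ and the protocol is $(\epsilon,\delta)$-DP; in the usual regime where $\epsilon$ is at most a constant, $\log_2(1+e^\epsilon) = O(1)$ and so $\sigma = O(\log(1/\delta))$. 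Substituting back, the number of messages per party is $O(\log n + \log(1/\delta)) = O(\log(n/\delta))$. Finally, the MSE $O_\epsilon(1)$ inherited from Lemma~\ref{lemma:privatefromsecure} is $O(1/\epsilon^2)$, since there the distributed discrete-Laplace noise has scale tuned to give $\epsilon$-DP and contributes variance $O(1/\epsilon^2)$ after dividing by the precision $p = O(\sqrt n)$, while the randomized-rounding discretization error contributes only $O(1)$.

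Since both of the substantive lemmas are already available, there is no genuine obstacle here — the argument is pure bookkeeping. The one point that warrants a moment of care is checking that the group size $q = \Theta(n^{3/2})$ forced by Lemma~\ref{lemma:privatefromsecure} (chosen there so that the noisy sum does not overflow $\mathbb{Z}_q$) still satisfies $\log q = O(\log n)$, so that neither the per-message bit length nor the logarithmic term in the message count is harmed by this blow-up.
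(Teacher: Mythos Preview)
Your proposal is correct and follows exactly the approach the paper takes: the paper's own proof is literally the single sentence ``Combining Lemma~\ref{lemma:IKOS} and Lemma~\ref{lemma:privatefromsecure} we can conclude the following theorem,'' followed by a remark that the explicit protocol uses $p=\sqrt{n}$, $q=\lceil 2np\rceil$, $\alpha=e^{-\epsilon/p}$, and $m$ as in Equation~\ref{eqn:m_ikos}. Your parameter tracking (in particular the choice $\sigma=\lceil\log_2((1+e^\epsilon)/\delta)\rceil$, the observation that $\log q = O(\log n)$, and the identification of $O_\epsilon(1)$ with $O(1/\epsilon^2)$ from the variance of the discrete Laplace noise) is exactly the bookkeeping the paper leaves implicit.
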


Such a protocol can be constructed from the proofs of these lemmas and is given explicitly by taking the local randomiser $\mech{R}_{\alpha,m,p,q}$ given in Algorithm~\ref{algo:locrand_const}, and the analyzer $\mech{A}_{n,m,p,q}$ given in Algorithm~\ref{algo:agg_const}, with parameters $p=\sqrt{n}$, $q=\lceil 2np \rceil$, $\alpha=e^{-\epsilon/p}$ and $m$ as in Equation~\ref{eqn:m_ikos}.
 
\section{Communication Efficient Secure Summation}
\label{sec:improved_communication}
Next we provide a refined analysis of the statistical security provided by the IKOS secure summation protocol.
Our analysis works for the parallel shuffle model version of the protocol, which is a weaker assumption than the single-shuffler protocol considered in the original analysis (cf.\ Section~\ref{sec:observations-shuffle}).

Since the aggregator in the IKOS protocol is just summation of all messages, throughout this section we identify a protocol with the corresponding view of the aggregator for simplicity.
Accordingly, we identify the \emph{$m$-parallel IKOS protocol over $\dom{G}$} with the randomized map $\mech{V}_{m,n} : \dom{G}^n \to (\dom{G}^n)^{m}$ obtained as the view of the aggregator in an $m$-message protocol in the shuffle model with randomizer $\mech{R}_{m}$:
\begin{align*}
&\mech{V}_{m,n}(\tup{x}) = \Big(\mech{S}^{(1)}(\rv{Y}^{(1)}_1, \ldots, \rv{Y}^{(1)}_n), \ldots, \mech{S}^{(j)}(\rv{Y}^{(j)}_1, \ldots, \rv{Y}^{(j)}_n), \ldots \\
  & \qquad \ldots, \mech{S}^{(m)}(\rv{Y}^{(m)}_1, \ldots, \rv{Y}^{(m)}_n)\Big) \enspace,
\end{align*}
where $\mech{R}_m(x_i) = (\rv{Y}^{(1)}_i, \ldots, \rv{Y}^{(m)}_i)$.

The first result we prove uses the following weaker security notion compared to the worst-case definition from Section~\ref{sec:secure_sum}.
We say that a randomized protocol $\mech{V}$ with domain $\dom{G}^n$ provides \emph{average-case statistical security} with parameter $\sigma$ if we have $$\Ex_{\tup{\rv{X}},\tup{\rv{X}}'}[\TV_{|\tup{\rv{X}},\tup{\rv{X}}'}(\mech{V}(\tup{\rv{X}}),\mech{V}(\tup{\rv{X}}'))] \leq 2^{-\sigma} \enspace,$$ where $\tup{\rv{X}}$ and $\tup{\rv{X}}'$ are each an $n$-tuple of uniform random elements from $\dom{G}$ with the same sum, i.e $\sum_{i \in [n]} \rv{X}_i = \sum_{i \in [n]} \rv{X}_i'$. Note that $\tup{\rv{X}}$ and $\tup{\rv{X}}'$ are not assumed independent. Here $\TV_{|\tup{\rv{X}},\tup{\rv{X}}'}$ denotes the total variation distance when the inputs are fixed (i.e.\ over the randomness coming from the protocol).

\begin{theorem}[Average-case security]\label{thm:avg-security}
The protocol $\mech{V}_{m,n}$ over $\dom{Z}_q$ provides average-case statistical security with parameter
  \begin{equation*}
    \sigma = \frac{(m-1)(\log_2(n)-\log_2(e))-\log_2(q)}{2}
  \end{equation*}
  provided that $\sigma \geq 1$, $m\geq 3$ and $n\geq 19$.
\end{theorem}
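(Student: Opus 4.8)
\textbf{Proof proposal for Theorem~\ref{thm:avg-security}.}

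The plan is to reduce the statistical distance between the two views to a combinatorial question about a random graph on the $n$ users. The key observation is that, since the analyzer only sees the shuffled shares in each of the $m$ coordinates, the view $\mech{V}_{m,n}(\tup{x})$ is determined by the multiset of column-vectors: for each $j \in [m]$ the shuffler outputs the multiset $\{\rv{Y}^{(j)}_i : i \in [n]\}$. Fixing the randomness, I would argue that two input tuples $\tup{x}, \tup{x}'$ with the same sum produce \emph{identically distributed} views whenever the random shares can be ``re-matched'': concretely, think of the $nm$ shares as edges/labels and ask when a relabeling of users' share-vectors that turns the realized shares of $\tup{x}$ into a valid set of shares of $\tup{x}'$ exists. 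This is exactly the event that a certain random bipartite-type structure is connected (or, more precisely, that the users fall into a single ``mixing'' component); when this event holds the conditional TV distance is $0$, so the expected TV distance is at most the probability that this good event fails.

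First I would make the coupling precise. Sample the shares for $\tup{x}$; I want to exhibit a coupling with the shares for $\tup{x}'$ that agrees on the observed multisets in as many coordinates as possible. The natural device is to look at the first $m-1$ coordinates as free uniform randomness (the last coordinate is then forced), and to consider the graph $H$ on vertex set $[n]$ where users $i$ and $i'$ are joined if they ``share a value'' in some coordinate in a way that lets us swap their contributions. Then I would show: (i) conditioned on $H$ being connected, the view distributions under $\tup{x}$ and $\tup{x}'$ coincide exactly (any discrepancy in the sum can be routed around the connected component and absorbed), giving conditional TV $= 0$; and (ii) hence $\Ex[\TV] \le \Pr[H \text{ disconnected}]$.

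Second, I would bound $\Pr[H \text{ disconnected}]$ by a union bound over the smallest component: if $H$ is disconnected there is a nonempty set $S \subsetneq [n]$ of size $k \le n/2$ with no edges leaving $S$. The probability that a fixed such $S$ is isolated is governed by the chance that none of the (roughly) $k(n-k)$ potential cross-pairs realize a collision across the $m-1$ free coordinates; each coordinate contributes a factor like $(\text{no collision}) \approx (1 - 1/q)^{\text{something}}$ or, after the right counting, a factor of order $q^{-(m-1)}$ per ``independent'' constraint. Summing $\binom{n}{k}$ over $k$ and using $\binom{n}{k} \le (en/k)^k$ yields a geometric-type series dominated by its $k=1$ term, producing a bound of the form $2^{-\sigma}$ with $\sigma$ as in the statement, i.e.\ $\sigma = \tfrac{1}{2}\big((m-1)(\log_2 n - \log_2 e) - \log_2 q\big)$; the side conditions $m \ge 3$, $n \ge 19$, $\sigma \ge 1$ are exactly what is needed to make the series converge and to absorb lower-order terms.

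The main obstacle I anticipate is step (i): correctly defining the edge relation on $[n]$ so that connectivity genuinely implies \emph{exact} equality of the two view distributions (not merely approximate), and handling the asymmetry introduced by singling out the last coordinate as the ``forced'' one. One must be careful that the re-matching of share-vectors is measure-preserving for the uniform-conditioned-on-sum distribution and that it respects the shuffle (permutation) symmetry in every coordinate simultaneously. Once the right graph is identified, the tail bound in step two is a routine first-moment computation; getting the clean constants advertised in the theorem will require tracking the $(1-1/q)$ factors carefully and using $n \ge 19$ to bound the geometric series by twice its leading term.
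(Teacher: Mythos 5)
Your high-level intuition --- that security is governed by connectivity properties of a random graph on the $n$ users --- matches the paper, but the route you propose has a genuine gap, and in fact two. First, the graph you define is the wrong object. You join users $i$ and $i'$ when their shares ``share a value'' in some coordinate, i.e.\ your edges are value collisions, which occur with probability about $1/q$ per pair per coordinate. In the regime where the protocol is actually used ($q = \Theta(n^{3/2}) \gg n$, $m$ constant) a fixed user is isolated in this graph with probability about $(1-1/q)^{(m-1)(n-1)} \approx e^{-(m-1)n/q} \approx 1$, so your graph is almost surely totally disconnected even though the protocol is secure; the bound $\Ex[\TV] \le \Pr[H \text{ disconnected}]$ would then be vacuous. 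The graph that actually controls the argument is not built from value collisions at all: it is the $2m$-regular multigraph on $[n]$ generated by the $m$ shufflers' permutations (Lemma~\ref{lem:graph_argument}), which arises when comparing \emph{two independent executions} of the protocol on the same random input.

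Second, the step you yourself flag as the main obstacle --- ``conditioned on connectivity the conditional TV is exactly $0$, hence $\Ex[\TV] \le \Pr[\text{disconnected}]$'' --- is not established and cannot yield the stated parameter. The form of $\sigma$ in the theorem, namely $\tfrac12\bigl((m-1)(\log_2 n - \log_2 e) - \log_2 q\bigr)$, i.e.\ a final bound of $\sqrt{q\,(e/n)^{m-1}}$, is the signature of a second-moment argument: the paper bounds $\Ex[\TV]$ by $\sqrt{q^{mn-1}\Pr[\mech{V}(\trv{X}) = \mech{V}'(\trv{X})] - 1}$ via a variance/Jensen step (Lemma~\ref{lem:single_shuffling}), reduces the collision probability of two independent executions to $\Ex[q^{C(G)-mn}]$, and then bounds $\Ex[q^{C(G)}]$ --- note the exponential weighting by $q^{C(G)}$, not merely $\Pr[C(G)>1]$ --- by an induction on the number of components (Lemma~\ref{lem:bound_on_graph}). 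A first-moment union bound over disconnecting sets, as you propose, loses both the square root and the $q^{C(G)}$ weighting, and with your collision graph it does not even converge to something small. To repair the proof you would need to replace the coupling-plus-union-bound plan with the collision-probability framework and the permutation-generated graph.
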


While the above theorem only states average-case security, a simple randomization trick recovers worst-case security at the cost of one extra message per party (see Section~\ref{sec:random_inputs}).
Moreover, such a message does not need to be shuffled. This corresponds to a small variation on the parallel IKOS protocol where one of the messages contributed by each user is not sent through a shuffler; i.e.\ it is possible to unequivocally associate one of the messages from the input to each user.

We define the \emph{$m$-parallel IKOS protocol with randomized inputs over $\dom{G}$} as the randomized map $\tilde{\mech{V}}_{m,n} : \dom{G}^n \to (\dom{G}^n)^{m+1}$ obtained as follows.
Let $\mech{S}^{(j)} : \dom{G}^n \to \dom{G}^n$, $j \in [m]$, be $m$ independent shufflers.
For any $\tup{x} = (x_1,\ldots,x_n) \in \dom{G}^n$ define the random variables $(\rv{Y}^{(1)}_i, \ldots, \rv{Y}^{(m+1)}_i) = \mech{R}_{m+1}(x_i)$, $i \in [n]$, obtained by sampling $m+1$ additive shares for each input.
Then, the IKOS protocol with randomized inputs returns, for $j \in [m]$, the result of independently shuffling the $j$th shares of all the users together, concatenated with the $m+1$th \emph{unshuffled} shares:
\begin{align*}
  \nonumber
  &\tilde{\mech{V}}_{m,n}(\tup{x})  = \Big(\mech{S}^{(1)}(\rv{Y}^{(1)}_1, \ldots, \rv{Y}^{(1)}_n), \ldots, \mech{S}^{(j)}(\rv{Y}^{(j)}_1, \ldots, \rv{Y}^{(j)}_n), \ldots \\
  & \qquad \ldots, \mech{S}^{(m)}(\rv{Y}^{(m)}_1, \ldots, \rv{Y}^{(m)}_n), (\rv{Y}^{(m+1)}_1, \ldots, \rv{Y}^{(m+1)}_n)\Big) \enspace.
\end{align*}

\begin{corollary}\label{cor:wst-security}
The protocol $\tilde{\mech{V}}_{m,n}$ provides worst-case statistical security with parameter $\sigma$ given by the same expression as in Theorem~\ref{thm:avg-security} as long as $m \geq 3$ and $n \geq 19$.
  Thus, for fixed $n$, $q$ and $\sigma$, it suffices, for worst-case security, to take the number of \emph{shuffled} messages to be
  \begin{equation}
    m=\left\lceil\frac{2\sigma+\log_2(q)}{\log_2(n)-\log_2(e)}+1\right\rceil \enspace.
  \end{equation}
\end{corollary}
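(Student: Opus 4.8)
The plan is to derive the worst-case bound for $\tilde{\mech{V}}_{m,n}$ from the average-case bound of Theorem~\ref{thm:avg-security} for $\mech{V}_{m,n}$, using the single extra unshuffled share per user solely as a device to re-randomize the inputs. First I would record the conditional structure of $\tilde{\mech{V}}_{m,n}(\tup{x})$. Write $\mech{R}_{m+1}(x_i)$ as $m$ independent uniform shares $\rv{Y}^{(1)}_i,\ldots,\rv{Y}^{(m)}_i$ together with the forced last share $\rv{Y}^{(m+1)}_i = x_i - \sum_{j\le m}\rv{Y}^{(j)}_i$. Since the sum of $m\ge 1$ uniform elements of $\dom{Z}_q$ is uniform, the unshuffled vector $\tup{\rv{Z}} = (\rv{Y}^{(m+1)}_1,\ldots,\rv{Y}^{(m+1)}_n)$ has the uniform marginal on $\dom{Z}_q^n$ regardless of $\tup{x}$, and conditioned on $\tup{\rv{Z}}=\tup{z}$ the remaining shares of user $i$ form $m$ fresh additive shares of $x_i - z_i$; hence $\tilde{\mech{V}}_{m,n}(\tup{x})$ is distributed as $(\mech{V}_{m,n}(\tup{x}-\tup{\rv{Z}}),\tup{\rv{Z}})$.

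Next, for $\tup{x},\tup{x}'\in\dom{Z}_q^n$ with $\sum_i x_i = \sum_i x_i'$, I would couple the two executions through a common $\tup{\rv{Z}}$; since the last coordinate has the same input-independent marginal in both, total variation factorizes and $\TV(\tilde{\mech{V}}_{m,n}(\tup{x}),\tilde{\mech{V}}_{m,n}(\tup{x}')) = \Ex_{\tup{\rv{Z}}}\bigl[\TV(\mech{V}_{m,n}(\tup{x}-\tup{\rv{Z}}),\mech{V}_{m,n}(\tup{x}'-\tup{\rv{Z}}))\bigr]$. Writing $\tup{\rv{X}} := \tup{x}-\tup{\rv{Z}}$, which is uniform on $\dom{Z}_q^n$, and $\tup{x}'-\tup{\rv{Z}} = \tup{\rv{X}}+\tup{d}$ with $\tup{d}=\tup{x}'-\tup{x}$ fixed and $\sum_i d_i=0$, the right-hand side equals $\Ex_{\tup{\rv{X}}}[\TV(\mech{V}_{m,n}(\tup{\rv{X}}),\mech{V}_{m,n}(\tup{\rv{X}}+\tup{d}))]$, which is exactly the average-case quantity of Theorem~\ref{thm:avg-security} for the uniform pair $(\tup{\rv{X}},\tup{\rv{X}}+\tup{d})$.

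The step requiring the most care is this last one, since Theorem~\ref{thm:avg-security} as stated also averages over a uniformly random sum-zero difference while the reduction leaves an arbitrary fixed one. I would close the gap by using that the average-case bound is obtained via comparison with an idealized view depending on the inputs only through $\sum_i x_i$: from $\Ex_{\tup{\rv{X}}}[\TV(\mech{V}_{m,n}(\tup{\rv{X}}),\mech{V}^{\mathrm{id}}(\textstyle\sum_i\rv{X}_i))]\le 2^{-\sigma-1}$, the triangle inequality together with $\sum_i(\rv{X}_i+d_i)=\sum_i\rv{X}_i$ and the fact that $\tup{\rv{X}}+\tup{d}$ is again uniform gives $\Ex_{\tup{\rv{X}}}[\TV(\mech{V}_{m,n}(\tup{\rv{X}}),\mech{V}_{m,n}(\tup{\rv{X}}+\tup{d}))]\le 2^{-\sigma}$ for every fixed sum-zero $\tup{d}$; alternatively, permutation-invariance of $\mech{V}_{m,n}$ in its $n$ inputs shows this quantity is the same for all nonzero $\tup{d}$. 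The hypotheses $\sigma\ge1$, $m\ge 3$, $n\ge 19$ are then just those of Theorem~\ref{thm:avg-security} applied to $\mech{V}_{m,n}$, the $m$-shuffled part of $\tilde{\mech{V}}_{m,n}$. Finally, solving $\sigma = \tfrac{(m-1)(\log_2 n-\log_2 e)-\log_2 q}{2}$ for $m$ yields $m = \tfrac{2\sigma+\log_2 q}{\log_2 n-\log_2 e}+1$; since rounding up only increases the security parameter, taking $m = \bigl\lceil \tfrac{2\sigma+\log_2 q}{\log_2 n-\log_2 e}+1\bigr\rceil$ shuffled messages (plus the one unshuffled message) suffices, giving the stated bound.
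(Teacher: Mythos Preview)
Your argument follows the paper's own route essentially verbatim: decompose $\tilde{\mech{V}}_{m,n}(\tup{x})$ as $(\mech{V}_{m,n}(\tup{x}-\tup{\rv{Z}}),\tup{\rv{Z}})$ with $\tup{\rv{Z}}$ uniform, condition on the shared last coordinate, and reduce the worst-case TV to an expectation over uniformly random inputs of $\mech{V}_{m,n}$; the final rearrangement of the expression for $\sigma$ to obtain $m$ is also identical.

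You are actually more careful than the paper at the last step. The paper's proof of Lemma~\ref{lem:avg_to_worst} asserts the equality
\[
\Ex_{\tup{\rv{U}}}\bigl[\TV_{|\tup{\rv{U}}}(\mech{V}_m(\tup{x}-\tup{\rv{U}}),\mech{V}_m(\tup{x}'-\tup{\rv{U}}))\bigr]
=\Ex_{\tup{\rv{X}},\tup{\rv{X}}'}\bigl[\TV_{|\tup{\rv{X}},\tup{\rv{X}}'}(\mech{V}_m(\tup{\rv{X}}),\mech{V}_m(\tup{\rv{X}}'))\bigr],
\]
whereas on the left the difference $\tup{\rv{X}}'-\tup{\rv{X}}$ is the fixed vector $\tup{x}'-\tup{x}$ while on the right it is uniform over sum-zero tuples. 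You correctly flag this and close it by going through the idealized view $\tup{\rv{V}}$ used in the proof of Theorem~\ref{thm:avg-security}: since that proof in fact bounds $2\,\Ex_{\tup{\rv{X}}}[\TV_{|\tup{\rv{X}}}(\mech{V}_m(\tup{\rv{X}}),\tup{\rv{V}})]\le 2^{-\sigma}$, the triangle inequality (using that $\tup{\rv{X}}+\tup{d}$ is again uniform with the same sum) gives the required bound for every fixed sum-zero $\tup{d}$. That is the right fix.

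One small correction: your alternative justification via ``permutation-invariance of $\mech{V}_{m,n}$ in its $n$ inputs'' only shows that $\Ex_{\tup{\rv{X}}}[\TV(\mech{V}(\tup{\rv{X}}),\mech{V}(\tup{\rv{X}}+\tup{d}))]$ is invariant under permuting the coordinates of $\tup{d}$, not that it is the same for all nonzero sum-zero $\tup{d}$; so that alternative does not suffice on its own. Your primary argument via the idealized view is the one that works.
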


Therefore $m$ need not be taken to grow as $n$ goes to infinity as in Lemma~\ref{lemma:IKOS}. In fact, it turns out a larger $n$ results in each person having to send \emph{fewer} messages. This can be intuitively understood as a greater number of people providing a greater amount of ``cover'' to hide in the crowd.

Plugging this result into the private summation protocol from the previous section leads to our main result, which removes the $\log(n)$ factor from the number of messages submitted by each party.

\begin{theorem}
There exists an $(\epsilon,\delta)$-DP protocol in the shuffle model for real summation with MSE $O(1/\epsilon^2)$ and $O(\log(1/\delta))$ messages per party, each of length $O(\log n)$ bits.
\end{theorem}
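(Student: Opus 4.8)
The plan is to combine the two main ingredients already assembled in the paper: the reduction from private real summation to secure exact summation (Lemma~\ref{lemma:privatefromsecure}) and the communication-efficient secure summation protocol obtained from Corollary~\ref{cor:wst-security}. Concretely, I would instantiate the generic reduction with the $m$-parallel IKOS protocol with randomized inputs $\tilde{\mech{V}}_{m,n}$, which by Corollary~\ref{cor:wst-security} computes exact summation over $\Z_q$ with worst-case statistical security parameter $\sigma$ using $f(q,n,\sigma) = m+1$ messages per party, where $m = \lceil (2\sigma + \log_2 q)/(\log_2 n - \log_2 e) + 1\rceil$. (One of these $m+1$ messages is the unshuffled one, but for the purposes of counting messages per party it still counts; the reduction only cares about the total message count.)

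The first step is to fix the security parameter $\sigma$ in terms of the desired failure probability $\delta$. The reduction in Lemma~\ref{lemma:privatefromsecure} produces an $(\epsilon, (1+e^{\epsilon}) 2^{-\sigma})$-DP protocol, so to obtain $(\epsilon,\delta)$-DP it suffices to take $\sigma = \lceil \log_2((1+e^{\epsilon})/\delta) \rceil = O(\log(1/\delta) + \epsilon/\ln 2)$; since we are in the standard regime $\epsilon = O(1)$ (or at worst we absorb $\epsilon$ into the asymptotics), this is $\sigma = O(\log(1/\delta))$. The second step is to track the group size: the reduction discretizes inputs to precision $p = O(\sqrt n)$ and works modulo $q = \lceil 2 n^{3/2}\rceil$, so $\log_2 q = O(\log n)$, which means each message has length $O(\log n)$ bits as claimed. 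The MSE bound $O(1/\epsilon^2)$ is exactly the guarantee provided by Lemma~\ref{lemma:privatefromsecure} (the discretized distributed discrete-Laplace mechanism has the same variance as the central Laplace mechanism up to constants, and the discretization and overflow errors are of the same order by the choice of $p$ and $q$), so nothing new needs to be proven there.

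The third and crucial step is the message count. Plugging $q = \lceil 2n^{3/2}\rceil$ and $\sigma = O(\log(1/\delta))$ into the expression from Corollary~\ref{cor:wst-security} gives
\begin{align*}
m+1 = \left\lceil \frac{2\sigma + \log_2 q}{\log_2 n - \log_2 e} + 1 \right\rceil + 1 = O\!\left( 1 + \frac{\log(1/\delta) + \log n}{\log n} \right) = O\!\left(1 + \frac{\log(1/\delta)}{\log n}\right) = O(\log(1/\delta)) \enspace,
\end{align*}
where the last bound is crude but sufficient for the stated theorem (the $O(\log n)$ numerator term contributes only an additive constant after dividing by $\log_2 n$, and $\log(1/\delta)/\log n \le \log(1/\delta)$). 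This is the content of the claim: the $\log n$ factor present in the earlier theorem (which came from Lemma~\ref{lemma:IKOS}'s $O(\log(qn)+\sigma)$ bound) collapses because the improved analysis divides by $\log_2 n$. I would also note that one should check the side conditions of Corollary~\ref{cor:wst-security} ($m \ge 3$, $n \ge 19$) hold in the relevant regime — for small $n$ or degenerate $\delta$ one simply falls back on the $O(\log(n/\delta))$-message protocol of the previous theorem, which is subsumed by the asymptotic statement.

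The main obstacle is essentially bookkeeping rather than a genuine mathematical difficulty: one must verify that instantiating $f(q,n,\sigma) = m+1$ with the argument $q = \lceil 2n^{3/2}\rceil$ demanded by the reduction does not reintroduce a $\log n$ dependence, and this is exactly where the division by $\log_2 n - \log_2 e$ in Corollary~\ref{cor:wst-security} saves the day — $\log_2(n^{3/2}) / \log_2 n = 3/2$ is a constant. A secondary point requiring care is confirming that the randomized-inputs variant $\tilde{\mech{V}}_{m,n}$ is a legitimate drop-in for the generic secure summation primitive required by Lemma~\ref{lemma:privatefromsecure}: it still computes exact summation (the extra unshuffled share is included in the analyzer's sum), and worst-case security is precisely what Corollary~\ref{cor:wst-security} provides, so the hypotheses of the lemma are met. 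Everything else follows by assembling previously proven statements.
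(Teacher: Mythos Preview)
Your proposal is correct and follows exactly the approach the paper takes: the theorem is obtained by plugging the worst-case security bound of Corollary~\ref{cor:wst-security} (with $q=\lceil 2n^{3/2}\rceil$ and $\sigma$ chosen so that $(1+e^\epsilon)2^{-\sigma}\le\delta$) into the reduction of Lemma~\ref{lemma:privatefromsecure}, and observing that $\log_2 q/(\log_2 n-\log_2 e)=O(1)$ so the message count is $O(1+\log(1/\delta)/\log n)\le O(\log(1/\delta))$. Your handling of the side conditions and of the unshuffled message in $\tilde{\mech{V}}_{m,n}$ is also in line with the paper.
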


\subsection{Proof Outline}
\label{sec:proof_outline}
We now provide an outline of the proof of Theorem~\ref{thm:avg-security}. The details can be found in the appendix. In this proof we will provide forward references to the required lemmas where we use them.

\begin{proof}[Proof of Theorem~\ref{thm:avg-security}]
  Lemma~\ref{lem:single_shuffling} in Section~\ref{sec:single_shuffling} says that protocol $\mech{V} = \mech{V}_{m,n}$ provides statistical security with distance bounded by the following expression:
  \begin{equation*}
    \Ex_{\tup{\rv{X}},\tup{\rv{X}}'}[\TV_{|\tup{\rv{X}},\tup{\rv{X}}'}(\mech{V}(\tup{\rv{X}}),\mech{V}(\tup{\rv{X}}'))] \leq \sqrt{q^{m n-1} \Pr[E] - 1} \enspace,
  \end{equation*}
where $E$ is an event specified in Section~\ref{sec:single_shuffling} that relates two independent executions of $\mech{V}$ with the same (random) inputs.
  To bound the probability of event $E$, Section~\ref{sec:graph_argument} defines a probability distribution over a certain class of multigraphs. Lemma~\ref{lem:graph_argument} then says that if $G$ is drawn from this distribution and $C(G)$ is the number of connected components in $G$, then
  \begin{equation*}
    \Pr[E] \leq q^{-m n}\Ex[q^{C(G)}]\enspace.
  \end{equation*}
  This expectation is then bounded in Lemma~\ref{lem:bound_on_graph}, which says that, if $n\geq 19$, $m \geq 3$ and $q\leq \frac{1}{2}\left(\frac{n}{e}\right)^{m-1}$, 
  \begin{equation*}
    \Ex[q^{C(G)}]\leq q+q^2\left(\frac{n}{e}\right)^{1-m}\enspace.
  \end{equation*}
  Note that the condition required on $q$ here is implied by
  \begin{equation*}
    \frac{(m-1)(\log_2(n)-\log_2(e))-\log_2(q)}{2}\geq 1
  \end{equation*}
  and thus follows from the condition in the theorem that $\sigma\geq 1$.

  Putting this together we get average case statistical security less than or equal to $\sqrt{q(e/n)^{m-1}}$. Thus we have average case statistical security $2^{-\sigma}$ for
  \begin{equation*}
    \sigma=\frac{(m-1)(\log_2(n)-\log_2(e))-\log_2(q)}{2}\enspace.
  \end{equation*}
\end{proof}

\subsection{Reduction to Random Inputs}
\label{sec:random_inputs}
To obtain Corollary~\ref{cor:wst-security} from Theorem~\ref{thm:avg-security} we show that a certain level of average-case security with $m$ messages implies the same level of worst-case security with $m+1$ messages.
In addition, we show that the additional message required to reduce worst-case security to average-case security does not need to be sent through a shuffler.
Note that the expression for the required $m$ in Corollary~\ref{cor:wst-security} is a simple rearrangement of the expression for $\sigma$, so the following lemma gives the desired result.

\begin{lemma}\label{lem:avg_to_worst}
If $\mech{V}_{m,n}$ provides average-case statistical security with parameter $\sigma$, then $\mech{V}_{m+1,n}$ and $\tilde{\mech{V}}_{m,n}$ provide worst-case statistical security with parameter $\sigma$.
\end{lemma}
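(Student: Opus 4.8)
The plan is to exploit the fact that adding one fresh additive share of the input in a new shuffler (or, equivalently for security purposes, an extra unshuffled share) has the effect of re-randomizing the input tuple $\tup{x}$ into a uniformly random tuple with the same sum. Concretely, fix any worst-case pair $\tup{x},\tup{x}'\in\dom{G}^n$ with $\sum_i x_i=\sum_i x_i'$. For the view $\mech{V}_{m+1,n}$, condition on the outcome of the $(m+1)$-st shuffler. Given that outcome, say the multiset of $(m+1)$-st shares equals some multiset $M$, the remaining $m$ shares of each user are uniform over $\dom{G}$ conditioned on summing (together with the revealed $(m+1)$-st share) to $x_i$. Crucially, when one averages over the (uniformly random) assignment of the multiset $M$ to the $n$ users — which is exactly what the $(m+1)$-st shuffler hides — the conditional law of the ``effective inputs'' $(x_i - (\text{share assigned to }i))_{i\in[n]}$ is uniform over all $n$-tuples in $\dom{G}^n$ with sum $\sum_i x_i$. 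So conditionally the remaining $m$ shufflers run the protocol $\mech{V}_{m,n}$ on a uniformly random input tuple with the correct sum, independently of whether we started from $\tup{x}$ or $\tup{x}'$.

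The key steps, in order, are: (i) write $\TV(\mech{V}_{m+1,n}(\tup{x}),\mech{V}_{m+1,n}(\tup{x}'))$ as an expectation over the shared randomness of the $(m+1)$-st shuffler's permutation applied to a common source of fresh shares, using the standard coupling that lets both executions draw the same $(m+1)$-st-layer messages; (ii) observe that once those messages are fixed and anonymized, each execution reduces to running $\mech{V}_{m,n}$ on a random input tuple $\trv{X}$ (resp.\ $\trv{X}'$) drawn uniformly subject to the correct sum constraint — and by the argument above these two conditional input distributions are \emph{identical}; (iii) therefore the conditional total variation distance is at most $\TV_{|\trv{X},\trv{X}'}(\mech{V}_{m,n}(\trv{X}),\mech{V}_{m,n}(\trv{X}'))$ for the matching random inputs, and taking expectations and using convexity of $\TV$ together with the average-case hypothesis gives $\TV\le 2^{-\sigma}$; (iv) finally, note the same argument goes through verbatim if the $(m+1)$-st layer of shares is not shuffled at all, since revealing an \emph{ordered} tuple of fresh uniform shares still re-randomizes the input to a uniform tuple with the same sum once we integrate over those shares — this yields the claim for $\tilde{\mech{V}}_{m,n}$.

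The main obstacle I expect is step (ii): making precise the claim that conditioning on the $(m+1)$-st anonymized layer turns the worst-case input into an average-case (uniformly-random-given-the-sum) input, and doing so in a way that produces the \emph{same} distribution for $\tup{x}$ and $\tup{x}'$ so that one genuinely lands in the hypothesis of Theorem~\ref{thm:avg-security}. This requires carefully setting up the coupling so that the fresh shares (and, in the shuffled case, the permutation) are common to both runs, and then checking that the induced law on effective inputs depends on $\tup{x}$ only through $\sum_i x_i$. Once that is done, the rest is convexity of total variation distance and the triangle inequality, which are routine. I would also double-check the edge cases ($m=3$, $n\ge 19$) are preserved, but since the reduction only changes the number of messages from $m$ to $m+1$ and does not touch $n$ or $q$, the side conditions of Theorem~\ref{thm:avg-security} carry over unchanged.
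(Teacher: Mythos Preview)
Your approach is the paper's approach: use the extra share layer to re-randomize the worst-case input into a uniformly random one and then invoke the average-case hypothesis. The paper runs the argument in the opposite order, however: it first proves the claim for $\tilde{\mech{V}}_{m,n}$, where the decomposition $\tilde{\mech{V}}_{m,n}(\tup{x})=(\mech{V}_{m,n}(\tup{x}-\tup{\rv{U}}),\tup{\rv{U}})$ with $\tup{\rv{U}}$ uniform on $\dom{G}^n$ is immediate, couples $\tup{\rv{U}}$ across the two executions, and identifies $\Ex_{\tup{\rv{U}}}[\TV(\mech{V}_{m,n}(\tup{x}-\tup{\rv{U}}),\mech{V}_{m,n}(\tup{x}'-\tup{\rv{U}}))]$ with the average-case quantity. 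The result for $\mech{V}_{m+1,n}$ then follows in one line by observing that shuffling the last block is a post-processing of $\tilde{\mech{V}}_{m,n}$ and hence cannot increase total variation. This order avoids any need to reason about the shuffled layer directly.

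One point in your write-up is actually incorrect and worth fixing. In your first paragraph you claim that conditioning on the observed \emph{multiset} $M$ of $(m{+}1)$-st shares and then averaging over the hidden assignment of $M$ to users makes the effective inputs uniform over tuples with the right sum. This is false: if $M$ consists of $n$ copies of a single element $c$ (which has positive probability), the effective input is the deterministic tuple $(x_1-c,\ldots,x_n-c)$ regardless of the assignment. (Also, the sum of the effective inputs is $\sum_i x_i-\sum M$, not $\sum_i x_i$.) What makes the argument work is not conditioning on the multiset but coupling the \emph{ordered} share tuple $\tup{\rv{U}}$ itself, exactly as your step~(i) suggests; once $\tup{\rv{U}}=\tup{u}$ is fixed the effective input is deterministic, and the uniformity you need arises only after integrating over $\tup{\rv{U}}$. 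With that correction your steps~(i)--(iv) coincide with the paper's proof.
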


\section{Numerical Experiments}
\label{sec:numerics}
\begin{figure*}[t]
\hfill
  \includegraphics[width=0.24\textwidth]{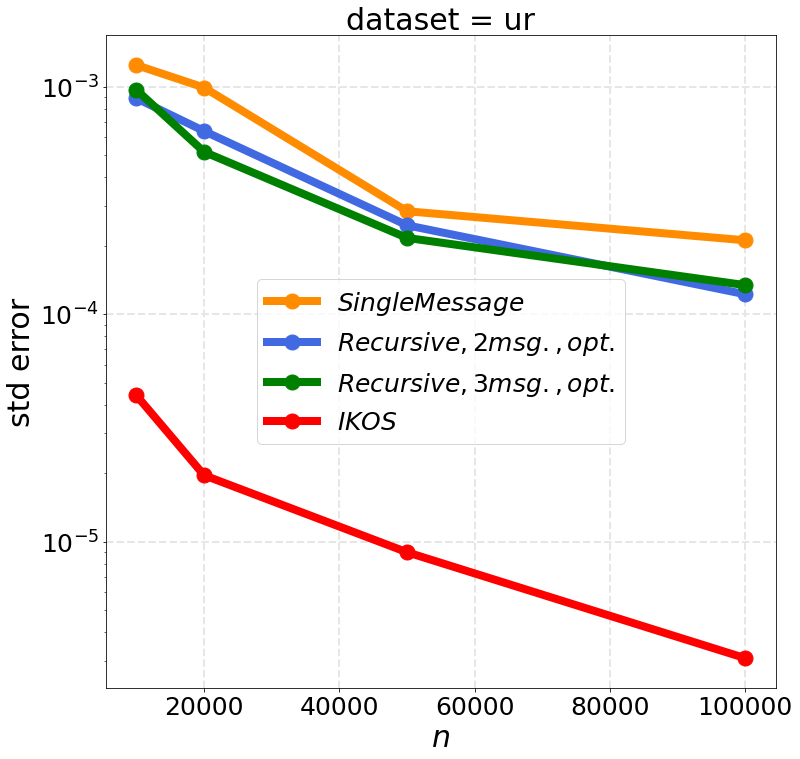}
\hfill
  \includegraphics[width=0.24\textwidth]{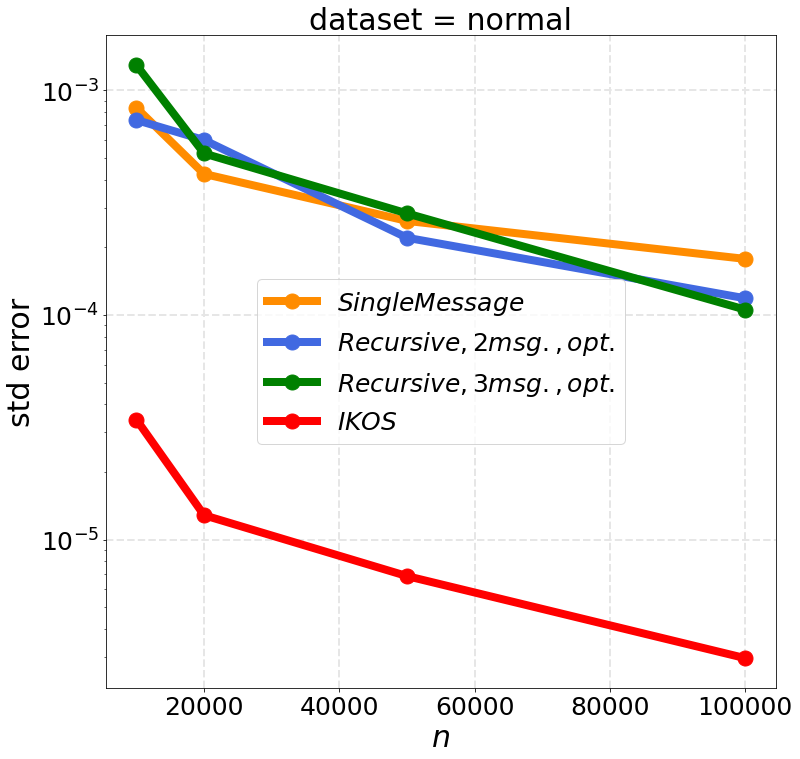}
\hfill
  \includegraphics[width=0.24\textwidth]{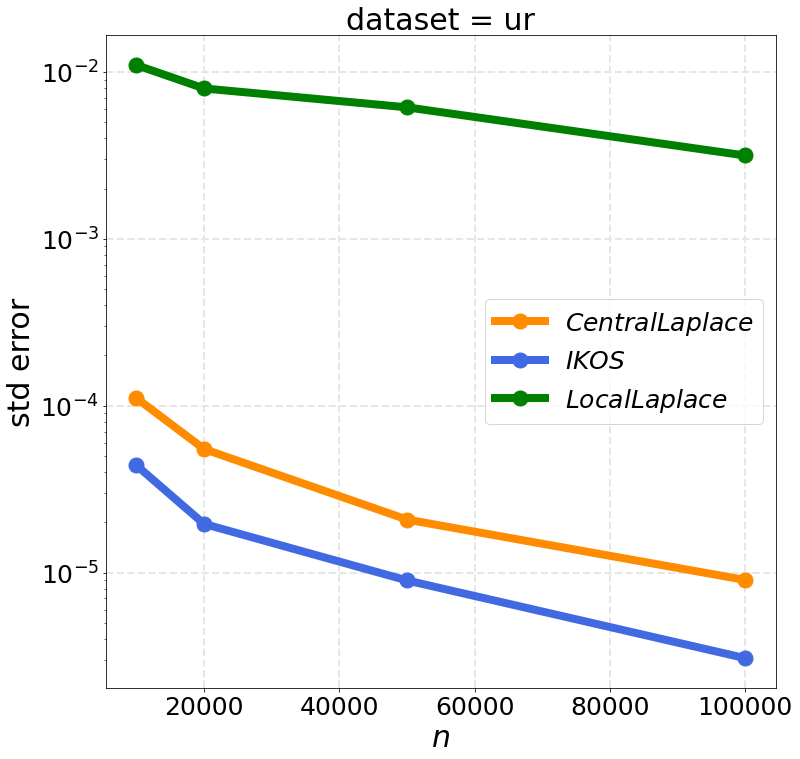}
\hfill
  \includegraphics[width=0.24\textwidth]{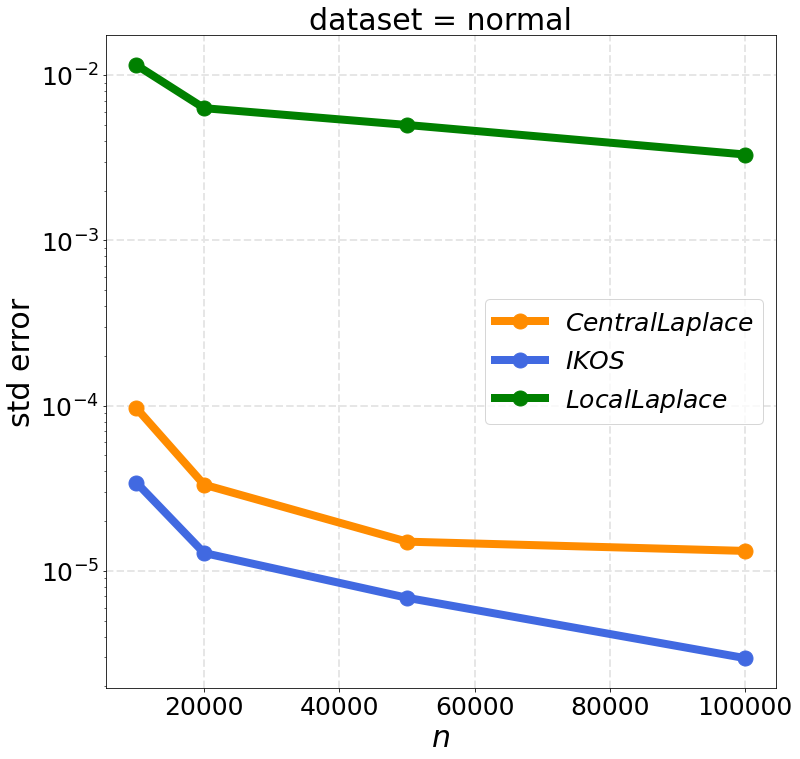}
\hfill
  \caption{Error in mean estimation for $n \in \{10^3, 2\cdot 10^3, 5\cdot 10^3, 10^4\}$ with $\epsilon = 1$ and $\delta = 1/n^2$ on synthetic ``ur'' and ``normal'' datasets. Two leftmost plots compare the single-message shuffle protocol from~\cite{DBLP:conf/crypto/BalleBGN19} with the multi-message protocols proposed in this paper. The two rightmost plots compares our best protocol in the shuffle model (IKOS) with standard procotols in the central and local model (LaplaceCentral, LaplaceLocal). Results are averaged over 20 runs.}
    \label{fig:plots}
\end{figure*}

We performed two sets of numerical evaluations of the protocols presented in the paper.
The first set of evaluations compares the \emph{communication and MSE bounds} of our protocols and other protocols from the literature as a function of the number of users and the privacy parameters. Due to space reasons these results are reported in Appendix~\ref{sec:more-experiments}.
Here we report the results of the second set of experiments, where we evaluate the \emph{empirical accuracy} of several protocols for the task of real summation on real and synthetic datasets.

In particular. we implemented the mechanism corresponding to each of our protocols, as well as their numerically optimized variants (see Appendix~\ref{sec:numerics_details} for details),
and evaluated them for the task of computing the average of $n$ normalized $x_i \in [0,1]$.
This corresponds to the task of frequency estimation, which covers a wide range of application as discussed in the
introduction (including private evaluation of ML models, as well as estimating mean and variance of a population).
In these experiments we report standard error in estimating that quantity, defined as $\frac{1}{n}|\sum_i x_i - f(x_1, \ldots, x_n)|$,
where $f$ is the private statistic outputted by each of our protocols.
Let us remark that we are estimating
{\em mean} and thus the standard error that we should expect for our protocols corresponds to the
square root of the values reported in Table~\ref{tab:numerics} in the appendix, divided by the dataset size $n$.

\paragraph{Baseline Protocols.}
Besides our protocols, we also consider three baselines: The single-message protocol from Balle et al.~\cite{DBLP:conf/crypto/BalleBGN19} (as a reference of the accuracy one can achieve in the shuffle model
with a single message per user), the Laplace mechanism (which we denote ``LaplaceCentral'') as it would be calibrated in the central model, i.e. $\sum_i x_i + Laplace(1/\epsilon)$
(as a reference of the accuracy one can achieve in the central model), and
the Laplace mechanism as it would be used in the local model $\sum_i (x_i + Laplace(1/\epsilon))$ (which we denote ``LaplaceLocal'') (as a reference of the accuracy one can achieve in the local model).
We evaluate our protocols in both synthetic and real-world data.

\paragraph{Synthetic Experiments.}
We considered both a dataset of uniformly random samples from $[0,1]$ (which we denote ``ur'') and a sample from a normal distribution with mean $0.573$
and standard deviation $.1$, which we refer to as ``normal''. Accuracy results as a function of the number of users are displayed in Figure~\ref{fig:plots} for both datasets.
A set of plots provides a comparison between the different protocols in the shuffle model, while the other set of plots compares the best protocol in the shuffle model with standard protocols in the central and local model.
The same trends are observed for both types of synthetic data.
As expected, the obtained accuracies are clustered according to their analytical errors,
which in turn correspond to the model they operate in. It is worth mentioning that IKOS
gives better error than CentralLaplace in these datasets. We speculate that this is due to the fact that the Geometric mechanism
works well with counts, as it is discrete.

\paragraph{Real-world data Experiments.} As a real-word dataset we used the Adult dataset~\cite{kohavi1996scaling}. This dataset contains $n = 32561$ curated records from the 1994 US Census database.
This is a dataset commonly used evaluate classification algorithms. We focused on the task of differentially private
estimation the mean of the ``age'' attribute normalized by the maximum age (with parameters $\epsilon = 1, \delta = 1/n^2$).
Table~\ref{tab:adult} reports the mean and standard deviation of our estimate over $20$ runs of several protocols.
As expected, IKOS incurs error close to the one of CentralLaplace, and LocalLaplace has significantly worse error than the rest.
Our recursive protocols outperform the single-message protocol from~\cite{DBLP:conf/crypto/BalleBGN19} in this task, and
numerically optimized variants outperform their analytical counterparts.

Overall, experiments show that our protocols give a good trade-off between accuracy and privacy, for realistic values $n$. Furthermore, the results in Appendix~\ref{sec:more-experiments} show the advantages of our analyses in improving the communication vs.\ accuracy trade-off of protocols in the shuffle model with respect to previous works.

\begin{table}
 \begin{center}
 \begin{tabular}{ccc}
 \hline
 Algorithm               &     Mean &   Std. dev. \\
\hline
 CentralLaplace          & $3.53\cdot 10^{-5}$ &    $3.21\cdot 10^{-5}$ \\
 LocalLaplace            & $4.28\cdot 10^{-3}$ &    $3.47\cdot 10^{-3}$ \\
 IKOS                    & $7.5\cdot 10^{-6}$  &    $6.17\cdot 10^{-6}$ \\
 SingleMessage           & $6.65\cdot 10^{-4}$ &    $4.01\cdot 10^{-4}$ \\
 Recursive, 2 msg.       & $4.58\cdot 10^{-4}$ &    $3.67\cdot 10^{-4}$ \\
 Recursive, 2 msg., opt. & $4.12\cdot 10^{-4}$ &    $2.34\cdot 10^{-4}$ \\
 Recursive, 3 msg.       & $5.25\cdot 10^{-4}$ &    $3.79\cdot 10^{-4}$ \\
 Recursive, 3 msg., opt. & $3.7\cdot 10^{-4}$  &    $2.66\cdot 10^{-4}$ \\
\hline
\end{tabular}
\caption{Mean and standard deviation (over $20$ runs) of the standard error of several protocols on Adult ($n = 32561, \epsilon = 1, \delta = 1/n^2$).}
\label{tab:adult}
\end{center}
\end{table}

\section{Discussion}
\label{sec:conclusion}
We have presented two protocols resulting from two different approaches to real summation in the shuffle model of differential privacy. The first protocol achieves MSE as small as $O((\log\log n)^2)$, and to do so requires each user to send $O(\log\log n)$ messages of size $O(\log n)$ each. The second protocol achieves constant error, and to do so only requires a constant number of messages per user. The first protocol uses the blanketing idea from Balle et al.~\cite{DBLP:conf/crypto/BalleBGN19}, while the second protocol relies on a reduction from secure discrete summation to shuffling by Ishai et al.~\cite{ikos}. In fact, the core of our contribution in the design of the second protocol is in the improved analysis of such reduction, which is of independent interest from a secure computation perspective. The fact that the ideas behind the two protocols are so different has interesting implications worth discussing, as the two results complement each other in interesting ways.

{\em Breaking the $O(n^{1/3})$ barrier.} Previous works by Balle et al.~\cite{DBLP:conf/crypto/BalleBGN19} and Cheu et al.~\cite{DBLP:journals/corr/abs-1808-01394} show that with a single message one could achieve MSE $\Theta(n^{1/3})$, and $O(n^{1/2})$ messages where enough to achieve error $O(1)$. Our work reduces this gap in several ways: first, we showed that the IKOS protocol with the proof technique proposed by Ishai et al.\ in the original paper leads to a protocol where $O(\log n)$ messages are enough to achieve constant error. Next, we further improve on that by providing an alternative proof which resolves affirmatively the question of whether a constant number of messages is enough to achieve constant error for summation in the shuffle model. Moreover, our proof tracks constants, which allows us to provide concrete bounds on the number of messages required to achieve constant error, for given values of $n$, $\epsilon$ and $\delta$. However, this result assumes that the number of messages per user is at least $3$, and thus does not address the question of whether $2$ messages are enough to break the lower bound by Balle et al. The recursive protocol, on the other hand, allows to trade accuracy by number of messages, and resolves this question affirmatively: the two-message variant of this protocol has MSE $O(n^{1/9})$. Whether that upper bound is tight is an open question.

{\em Robustness against adversarial behaviour.} An important difference between our two protocols is their ability to withstand adversarial users that might deviate from the prescribed protocol execution to bias (or even completely spoil) the result. To protect from this kind of manipulation attacks, protocols must have mechanisms to limit the influence of any particular user (or a possibly coordinated coalition of them) beyond what the function being computed inherently allows. In particular, for the task of real summation, a particular user's influence is at least $1$. Note that the maximum possible influence of a particular user in our IKOS protocol from Section~\ref{sec:improved_communication} is $2n$, as a dishonest user could replace their actual input (which should be encoded as an integer in $\{0,\ldots,p\}$ as prescribed by the protocol) by an arbitrary value in $\dom{Z}_q$. We leave open the question of whether this protocol can be efficiently ``patched'' to address this issue. Recent work by Cheu et al.~\cite{CheuManipulation} studies the issue of manipulation by coalitions of dishonest users in the context of the local model of differential privacy. Their results show that non-interactive protocols in the local model are in a sense manipulable. In particular, for the case of summation a dishonest user can skew the result by $\Omega(1/\epsilon)$. Intuitively, this is because in the local model (where each user gets privacy independently of the rest) large changes in the input distribution induce only a small change in the distribution of messages observed by the analyzer. Hence, dishonest users inducing small changes in the distribution of messages can induce large changes in the final estimate produced by the analyzer. In contrast, for our recursive protocol from Section~\ref{sec:rec_protocol} the maximum skew a user can induce is $1+O_\delta\left(\frac{(\log \log n)^2}{n^{2/3}\epsilon^2}\right)$. To see this, note that messages sent by a dishonest user, even if they are anonymized via the shuffer, still need to ``typecheck''. Hence, even a dishonest user must report $m$ values $y_1,\ldots, y_{m}$. These values must be in $\{0,\ldots,p_j\}$ and $\{0,\ldots,p_m+1\}$, with precisions $p_1,\ldots, p_{m}$ prescribed by the protocol. Similarly to what happens in the local model, our analyzer debiases the result by scaling each contribution $y_j$ by $1/(1-\gamma_j)$. As $\gamma_j$ in our protocol is $O_\delta\left(\frac{1}{n^{1-3^{j-m-1}}\epsilon_j^2}\right)\leq O_\delta\left(\frac{m^2}{n^{2/3}\epsilon^2}\right)$, the amount by which each $y_j$ is scaled is $1+O_\delta\left(\frac{(\log \log n)^2}{n^{2/3}\epsilon^2}\right)$. As even without privacy a single user can skew the result by $1$ this represents very little extra manipulability compared to the non-private case.

It is important to remark that dishonest users in the shuffle model can harm privacy by not providing blanketing noise as expected. This is easily achieved by simply sending invalid messages. A straightforward defense is to increase the overall noise by a constant factor that accounts for the maximum fraction of expected misbehaving users.
  
\section*{Acknowledgements}
A.~G.\ was supported by the EPSRC grant EP/N510129/1, and funding from the UK Government’s Defence \& Security Programme in support of the Alan Turing Institute.
J.~B.\ was supported by the EPSRC grant EP/N510129/1, and funding from the UK Government’s Defence \& Security Programme in support of the Alan Turing Institute.
K.~N.\ was supported by NSF grant no.~1565387, TWC: Large: Collaborative: Computing Over Distributed Sensitive Data. Work partly done while K.~N.\ was visiting the Alan Turing Institute.
We thank Samson Zhou for pointing out a mistake in the proof of Lemma~\ref{lem:avg_to_worst} in a previous version.

\bibliographystyle{plainnat}
\bibliography{main}

\newpage

\appendix

\section{Analysis of Recursive Protocol}
\label{sec:rec_proof}
We start by proving the following lemma on the error that the protocol incurs, which gives a bounds on the MSE of the real summation protocol in terms of bounds on the MSE of the discrete summation subroutines we viewed as standalone summation protocols for inputs on $\{0,\ldots,p_j\}^n$.

\begin{lemma}\label{lemma:bound-outer-mse}
    Let $B_j \leq \mse(\mech{P}_{\gamma_j,p_j+\mathbb{I}_{j=m},n})$, for j in $[m]$. Then,
  \begin{equation*}
    \mse(\mech{P}^{\mathrm{rec}}_{\vec{\gamma},\vec{p},n})\leq\frac{n}{4q_m^2}+\sum_{j=1}^m\frac{B_j}{q_j^2} \enspace.
  \end{equation*}
\end{lemma}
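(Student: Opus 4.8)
The plan is to decompose the error of the recursive protocol into a (deterministic) truncation/rounding bias and the accumulated errors of the $m$ discrete summation subprotocols, and then bound each contribution. First I would set up notation: let $x_i \in [0,1]$ be the input of user $i$, let $s_{j,i} = \lfloor q_j x_i - p_j \lfloor q_{j-1} x_i\rfloor\rfloor$ be the $j$th digit produced inside $\mech{R}^{\mathrm{rec}}$ (with the randomized-rounding bit $r_i$ added to $s_{m,i}$), and observe from the telescoping identity $x \approx \sum_{j=1}^m s_j/q_j$ that the reconstruction $\sum_{j=1}^m s_{j,i}/q_j$ equals $\lfloor q_m x_i\rfloor / q_m$ before the Bernoulli correction, and after adding $r_i\sim\bernoulli(q_m x_i - \lfloor q_m x_i\rfloor)$ to $s_m$ it becomes an unbiased estimator of $x_i$ with $\Ex[\sum_j s_{j,i}/q_j] = x_i$ and variance at most $1/(4q_m^2)$ (variance of a Bernoulli scaled by $1/q_m$, bounded by $1/4$). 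Summing the independent contributions of the $n$ users gives a ``rounding'' term of total variance at most $n/(4q_m^2)$; this is exactly the first summand in the claimed bound.

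Next I would account for the errors introduced by running the discrete summation subprotocol on each coordinate. The analyzer $\mech{A}^{\mathrm{rec}}$ computes $z = \sum_{j=1}^m z_j / q_j$ where $z_j = \mech{A}_{\gamma_j,p_j+\mathbb{I}_{j=m},n}$ applied to the shuffled $j$th messages, i.e.\ $z_j$ is the output of the standalone protocol $\mech{P}_{\gamma_j,p_j+\mathbb{I}_{j=m},n}$ run on the digit-inputs $(s_{j,1},\ldots,s_{j,n})$. The key structural point is that $\mech{A}_{\gamma_j,p_j,n}$ produces an \emph{unbiased} estimate of $\sum_i s_{j,i}$ (the $\debias$ step is an affine unbiased correction of a binomial-contaminated sum), so $\Ex[z_j \mid \tup{s}_j] = \sum_i s_{j,i}$ and $\Ex[(z_j - \sum_i s_{j,i})^2 \mid \tup{s}_j] \le \mse(\mech{P}_{\gamma_j,p_j+\mathbb{I}_{j=m},n}) =: \text{(the per-protocol MSE)}$, and this holds for every fixed digit-tuple, hence also in expectation over the randomized rounding. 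Crucially, conditioned on the digit tuples $\tup{s}_1,\ldots,\tup{s}_m$, the noise contributions $z_j - \sum_i s_{j,i}$ across the $m$ shufflers are independent (each shuffler uses fresh randomness), and they are also independent of — and mean-zero with respect to — the rounding error from the previous paragraph.

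Putting this together: write $\mech{M}_{\mech{P}^{\mathrm{rec}}}(\tup{x}) - \sum_i x_i = \big(\sum_j (\sum_i s_{j,i})/q_j - \sum_i x_i\big) + \sum_j (z_j - \sum_i s_{j,i})/q_j$. The first parenthesized term is the rounding error, which is mean zero (over the $r_i$) with variance $\le n/(4q_m^2)$; the $j$th summand of the second term has conditional mean zero and conditional second moment $\le B_j / q_j^2$. All cross terms vanish in expectation by the independence/unbiasedness just established, so $\mse(\mech{P}^{\mathrm{rec}}_{\vec\gamma,\vec p,n}) \le \frac{n}{4q_m^2} + \sum_{j=1}^m \frac{B_j}{q_j^2}$, as desired. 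The main obstacle I anticipate is purely bookkeeping: carefully verifying the telescoping digit identity and the claim that the last-message randomized-rounding step makes the reconstruction exactly unbiased with the stated variance (so that the truncation bias does not contribute a squared-bias term), together with checking that all the relevant randomizations (per-shuffler noise, per-user rounding bits) are mutually independent so that the decomposition has no surviving cross terms. Once those two facts are nailed down, the bound follows by linearity of expectation and a triangle-type expansion of the squared error.
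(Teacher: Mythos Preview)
Your proposal is correct and follows essentially the same approach as the paper's proof: decompose the error into the randomized-rounding contribution (giving $n/(4q_m^2)$) plus the scaled per-coordinate subprotocol errors (giving $\sum_j B_j/q_j^2$), using unbiasedness and independence to eliminate cross terms. The paper's proof is a terse three-sentence version of exactly this argument; your write-up simply makes explicit the independence and zero-mean claims that the paper leaves implicit.
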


\begin{proof}
  Note in the local randomizer that $\sum_j s_j/q_j$ is an unbiased estimate of $x_i$ with variance equal to the variance of $r$ divided by $q_m^2$ which is at most $1/(4q_m^2)$. Summing this noise over all local randomizers gives mean squared error $n/(4q_m^2)$. The contribution of the call to the non recursive local randomizer on $s_j$ is scaled by the corresponding $1/q_j$ providing the sum in the lemma.
\end{proof}

\begin{proof}[Proof of Theorem \ref{thm:rec_prot}]
  For $\gamma_j=\frac{14(p_j+\mathbb{I}_{j=m})\log(2/\delta_j)}{(n-1)\varepsilon_j^2}$, so long as $14\log(2/\delta_j) \geq 27\varepsilon_j$, Theorem \ref{thm:single_prot} and basic composition imply that the recursive protocol is $(\sum_j \varepsilon_j, \sum_j \delta_j)$-differentially private ($\epsilon\leq m$ implies $\epsilon_j\leq 1$). The condition $14\log(2/\delta_j) \geq 27\epsilon_j$ follows immediately from $\log(1/\delta)\geq 2\epsilon$. As $\sum \delta_j=\delta$ and $\sum \epsilon_j=\epsilon$ this protocol is as private as required.

  To explain where the choices of $p_j$ and $\epsilon_j$ come from we provide the analysis to choose them as a part of the accuracy proof. Neglecting $\log(1/\delta)$ factors and using the accuracy estimates of Theorem \ref{thm:single_prot} combined with Lemma~\ref{lemma:bound-outer-mse}, we get:
  \begin{equation*}
    \mse(\mech{P}^{\mathrm{rec}}_{\vec{\gamma},\vec{p},n})=O_\delta\left(\frac{n}{q_m^2}+\sum_{j=1}^m\frac{p_j}{q_{j-1}^2\epsilon_j^2}\right) \enspace.
  \end{equation*}
  We will minimize this expression by optimizing the $p_j$ over the real numbers for simplicity, then rounding to integers turns out not to affect the asymptotics. Differentiating with respect to $p_l$, for $l\in[m]$, gives
  \begin{equation*}
    -\frac{2nq_l^2}{q_{l-1}^2p_l^3q_m^2}+\frac{1}{q_{l-1}^2\epsilon_l^2} \enspace.
  \end{equation*}
  Setting this equal to zero and re-arranging gives
  \begin{equation*}
    p_l^3=\frac{2n\epsilon_l^2q_l^2}{q_m^2} \enspace.
  \end{equation*}
  Dividing this result for two consecutive values of $l$ gives
  \begin{equation*}
    \frac{p_l^3}{p_{l-1}^3}=\frac{\epsilon_l^2p_l^2}{\epsilon_{l-1}^2}
  \end{equation*}
  and thus
  \begin{equation}
    \label{eq:first-ratio}
    p_l=\frac{\epsilon_l^2p_{l-1}^3}{\epsilon_{l-1}^2} \enspace.
  \end{equation}
  Finding an $\vec{\epsilon}$, for fixed $\vec{p}$, that minimizes the above expression for the MSE gives
  \begin{equation*}
    \epsilon_l^3\propto \frac{p_l}{q_{l-1}^2}
  \end{equation*}
  and thus
  \begin{equation*}
    \frac{\epsilon_l^3}{\epsilon_{l-1}^3}=\frac{p_l}{p_{l-1}^3} \enspace.
  \end{equation*}
  Combining this with Equation \ref{eq:first-ratio} we get that $\epsilon_l=\epsilon_{l-1}$, and
  substituting into Equation \ref{eq:first-ratio} gives $  p_l=p_{l-1}^3$.
  We thus take $\epsilon_l=\epsilon/m$ as stated in the theorem and write $p_l$ as $a^{3^l}$.
  Note that $\sum_{i=1}^{k-1} 3^i = \frac{3^k-3}{2}$. The $\mse$ bound now becomes
  \begin{equation*}
    O_\delta\left(na^{3-3^{m+1}}+\frac{a^3m^3}{\epsilon^2}\right) \enspace.
  \end{equation*}
  Choosing $a$ to minimize this gives $a=\left(\frac{(3^m-1)n\epsilon^2}{m^3}\right)^{3^{-m-1}}\approx n^{3^{-m-1}}$ giving a bound on the $\mse$ of
  \begin{equation*}
    O_\delta\left(n^{3^{-m}}\left(1+\frac{m^3}{\epsilon^2}\right)\right) \enspace.
  \end{equation*}
\end{proof}

\begin{remark}  
  We note that if we allow the constant $a$ in the proof to take any real value the upper bound is still of order $(\log \log n)^3$. So better parameter choices cannot get us all the way to constant error with this analysis. Using advanced composition in place of basic still fails to achieve a constant upper bound. We therefore believe this algorithm incurs super-constant error for any privacy preserving choice of parameters.
\end{remark}

\section{Private Summation from Secure Summation}
\label{sec:secure_to_private}
\begin{proof}[Proof of Lemma~\ref{lemma:privatefromsecure}]
Let $\Pi = (\mech{R}_\Pi, \mech{A}_\Pi)$ be the secure exact summation protocol. We will exhibit the resulting protocol $\mech{P} = (\mech{R}, \mech{A})$, with $\mech{R} = \mech{R}_\Pi\circ \tilde{\mech{R}}$ and $\mech{A} = \tilde{\mech{A}}\circ \mech{A}_\Pi$, where $\tilde{\mech{R}}$ and $\tilde{\mech{A}}$ are defined as follows. $\mech{P}$ executes $\Pi$ with $q = \lceil 2n^{3/2} \rceil$, and thus $\tilde{\mech{R}}: [0,1] \mapsto \mathbb{Z}_{\lceil 2n^{3/2} \rceil}$. $\tilde{\mech{R}}(x_i)$ is the result of first computing a randomized fixed-point encoding of the input $x$ with precision $p=\sqrt{n}$, then adding noise $\rv{Z}_i \sim \polya(1/n,e^{-\epsilon/p})-\polya(1/n,e^{-\epsilon/p})$ in $\mathbb{Z}$ and taking the result modulo $q$. $\tilde{\mech{A}}$ decodes $z$ by returning $(z - q)/p$ if $z > \frac{3np}{2}$, and $z/p$ otherwise. This addresses potential over/under-flows of the sum in $\mathbb{Z}_q$.

To show that this protocol is private we will compare the view $\mech{V}_\mech{P}$ to another mechanism $\mech{M}_C$ (which can be considered to be computed in the curator model) which is $\epsilon$-DP and such that $$\SD(\mech{V}_\mech{P}(\vec{x}),\mech{M}_C(\vec{x})) \leq 2^{-\sigma}$$ for all $\vec{x}$, from which the result follows by Lemma~\ref{lemma:SD2delta}.

$\mech{M}_C(\vec{x})$ is defined to be the result of the following procedure. First apply $\tilde{\mech{R}}$ to each input $x_i$, then take the sum $s=\sum_{i=1}^n \tilde{\mech{R}}(x_i)$ and output $\mech{V}_{\Pi}(s,0,\ldots,0)$, the view of the aggregator in the protocol $\Pi$ with first input $s$ and all other inputs $0$.

Note that $s = \sum_{i=1}^n \fp(x_i, p) + \discretelaplace(\alpha)$ with $\alpha = e^{-\epsilon/p}$, where we define the randomized rounding operation $\fp(x, p) = \lfloor xp \rfloor + \bernoulli(xp - \lfloor xp \rfloor)$.
The worst-case sensitivity of $\sum_{i=1}^n \fp(x_i, p)$ under the change of one $x_i$ is $p$. It follows that $s$ is $\epsilon$-DP and thus by the post processing property so is $\mech{M}_C$.

It remains to show that $\SD(\mech{V}_\mech{P}(\vec{x}),\mech{M}_C(\vec{x})) \leq 2^{-\sigma}$, which we will do by demonstrating the existence of a coupling. First let the noise added to input $x_i$ by $\tilde{\mech{R}}$ be the same in both mechanisms and note that this results in the inputs to the randomizers from $\Pi$ used in $\mech{P}$ and $\mech{M}_C$ to have the same sum. It then follows immediately from the worst-case statistical security assumption that the executions of the randomizers and shufflers in $\mech{V}_\mech{P}$ and $\mech{M}_C$ can be coupled to have identical outputs except with probability $2^{-\sigma}$, as required.

Next we show that $\mech{M}_{\mech{P}}$ has MSE $O_{\epsilon}(1)$.
The mean squared error incurred by randomized rounding is bounded by $n/(4p^2)$ by Lemma~\ref{lem:roundingerror}.
The discrete Laplace distribution $\discretelaplace(\alpha)$ has mean zero and variance $2\alpha/(1-\alpha)^2$, so rescaling by $p$ gives a mean squared error of $2\alpha/p^2(1-\alpha)^2$. Summing these two terms would give the exact mean squared error if our arithmetic was in $\mathbb{Z}$, however we are in $\mathbb{Z}_q$ and so need to account for the possibility of under/overflowing. The worst-case mean squared error is bounded by $(q/p)^2$, and the probability of under/overflow bounded by $\alpha^{\frac{q-np}{2}}$ so the following expression is a bound on the mean squared error:
\begin{align*}
  &\frac{2\alpha}{p^2(1-\alpha)^2}+\frac{n}{4p^2}+(q/p)^2\alpha^{\frac{q-np}{2}} \\
  \leq&\frac{2e^{-\epsilon/\sqrt{n}}}{n(1-e^{-\epsilon/\sqrt{n}})^2}+\frac{1}{4}+5n^2e^{-\frac{\epsilon n}{2}} \\
  \leq&\frac{2}{\epsilon^2}+\frac{1}{4}+5n^2e^{-\frac{\epsilon n}{2}} = O(1/\epsilon^2) \enspace.
\end{align*}
Note that the second and third terms can be made arbitrarily small at the expense of more communication by increasing $p$ and $q$.

\end{proof}

The choice $p=\sqrt{n}$ was made so that the error in the discretization was the same order as the error due to the noise added, and this recovers the same order MSE as the curator model. Taking $p=\omega(\sqrt{n})$ results in the leading term of the total MSE still matching the curator model at the cost of a small constant factor increase to communication. 
\section{Proof of Secure Summation}
\label{sec:the_proof}

In this section we give the lemmas required to complete the proof of Theorem~\ref{thm:avg-security} in Section~\ref{sec:proof_outline}. 

\subsection{Reduction to a single input and shuffling step}
\label{sec:single_shuffling}

To analyze the average-case statistical security of $\mech{V}$ we start by upper bounding the expected total variation distance between the outputs of two executions with random inputs by a function of single random input.

\begin{lemma}
Let $\mech{V}_{m,n}$ and $\mech{V}_{m,n}'$ denote two independent executions of the $m$-parallel IKOS protocol. Then we have:
\begin{align*}
&\Ex_{\tup{\rv{X}},\tup{\rv{X}}'} [\TV_{|\tup{\rv{X}},\tup{\rv{X}}'}(\mech{V}_{m,n}(\tup{\rv{X}}), \mech{V}_{m,n}(\tup{\rv{X}}'))] \\
\leq&
\sqrt{q^{mn-1} \Pr[\mech{V}_{m,n}(\tup{\rv{X}}) = \mech{V}_{m,n}'(\tup{\rv{X}})] - 1} \enspace.
\end{align*}
\end{lemma}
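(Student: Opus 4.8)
The goal is to bound the expected (conditional) total variation distance between two independent IKOS executions on random inputs $\tup{\rv{X}}, \tup{\rv{X}}'$ (conditioned on having equal sums) by $\sqrt{q^{mn-1}\Pr[\mech{V}_{m,n}(\tup{\rv{X}}) = \mech{V}_{m,n}'(\tup{\rv{X}})] - 1}$. The natural route is to pass through the $\chi^2$-divergence (equivalently, the collision probability / Rényi-2 quantity), since TV is controlled by it via Pinsker-type / Cauchy--Schwarz inequalities, and the collision probability of a random variable is exactly an expectation over two independent copies hitting the same value. So the first step is to recall the standard bound $\TV(P, Q)^2 \leq \frac14 \chi^2(P \| Q)$ or, more usefully here, that for two distributions with a common reference, $\TV(P,Q) \le \frac12\sqrt{\sum_\omega (P(\omega)-Q(\omega))^2 / U(\omega)}$ when $U$ is uniform; concretely, since $\mech{V}_{m,n}(\tup{x})$ for fixed $\tup x$ is a distribution over the finite set $(\dom{Z}_q^n)^m$ of size $q^{mn}$, one can compare against the uniform distribution on that set.

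**Key steps in order.** (1) Fix $\tup{x}, \tup{x}'$ with equal sum and write $P = \mech{V}_{m,n}(\tup x)$, $P' = \mech{V}_{m,n}(\tup{x}')$ as distributions on the size-$N$ set $\Omega = (\dom{Z}_q^n)^m$, $N = q^{mn}$. Use $\TV(P,P') \le \frac12\|P - P'\|_1 \le \frac12 \sqrt{N} \, \|P - P'\|_2$ by Cauchy--Schwarz. (2) Expand $\|P-P'\|_2^2 = \|P\|_2^2 + \|P'\|_2^2 - 2\langle P, P'\rangle$. By symmetry of the construction under relabeling/sum-preserving shifts, $\|P\|_2^2 = \|P'\|_2^2 = \Pr[\mech{V}_{m,n}(\tup x) = \mech{V}_{m,n}'(\tup x)]$, the collision probability of a single execution. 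The cross term $\langle P, P'\rangle = \Pr[\mech{V}_{m,n}(\tup x) = \mech{V}_{m,n}'(\tup{x}')]$; the crucial observation is that because the views are built from additive shares, two executions with the same total sum are "shift-equivalent" and this cross-collision probability equals the self-collision probability, so $\langle P, P'\rangle = \|P\|_2^2$ as well — hence $\|P - P'\|_2^2 = 0$?? That cannot be right, so the bound must instead keep the cross term and exploit that $\langle P, P' \rangle \ge 1/N$ (since both are probability vectors and $\langle P, P'\rangle \ge (\sum P)(\sum P')/N = 1/N$ by Cauchy--Schwarz / the fact that the all-ones vector has norm $\sqrt N$). (3) Therefore $\|P-P'\|_2^2 \le 2\Pr[\text{collision}] - 2/N$, giving $\TV(P,P')^2 \le \frac14 N (2\Pr[\text{collision}] - 2/N) = \frac12(N\Pr[\text{collision}] - 1)$. (4) Finally take expectations over $\tup{\rv{X}}, \tup{\rv{X}}'$: by Jensen, $\Ex[\TV] \le \sqrt{\Ex[\TV^2]} \le \sqrt{\frac12(N\,\Ex[\Pr_{\text{collision}}] - 1)}$, and $\Ex_{\tup{\rv{X}}}[\Pr[\mech{V}_{m,n}(\tup{\rv{X}}) = \mech{V}_{m,n}'(\tup{\rv{X}})]]$ is exactly $\Pr[\mech{V}_{m,n}(\tup{\rv{X}}) = \mech{V}_{m,n}'(\tup{\rv{X}})]$ over all the randomness, which is the quantity in the lemma. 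One still needs $N = q^{mn}$ versus the stated $q^{mn-1}$; the extra factor of $q$ is absorbed because the view lives not on all of $(\dom{Z}_q^n)^m$ but on the sub-lattice where the grand total of all $mn$ entries is fixed to $\sum_i x_i$, which has size $q^{mn-1}$ — so one should work over that reference set throughout, replacing $N$ by $q^{mn-1}$.

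**Main obstacle.** The delicate point is Step (2)–(3): correctly identifying the right reference set (the coset of sum-fixed tuples, of size $q^{mn-1}$, not the full $q^{mn}$) and verifying that on this set the self-collision and the lower bound $\langle P,P'\rangle \ge q^{-(mn-1)}$ both hold, using that $P, P'$ are supported on (and are probability distributions over) exactly this coset when $\sum x_i = \sum x_i'$. I expect the bulk of the care to go into justifying that both executions' views are supported on the same coset (immediate from $\sum_j y_i^{(j)} = x_i$ summed over $i$) and that the inner product is bounded below by the inverse of the coset size via Cauchy--Schwarz with the indicator of that coset as the "all-ones" vector. The TV-to-$\ell_2$ and Jensen steps are routine; the combinatorial heart — estimating $\Pr[\text{collision}]$ itself — is deferred to the later graph-theoretic lemmas (Lemma~\ref{lem:graph_argument}, Lemma~\ref{lem:bound_on_graph}) and is not part of this lemma.
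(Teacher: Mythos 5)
Your overall route---work in $\ell_2$ on the common support, identify second moments with collision probabilities, finish with Jensen---is viable and close in spirit to the paper's argument, but the step you lean on to control the cross term is wrong as stated. The inequality $\langle P,P'\rangle \geq \frac{1}{N}\bigl(\sum_\omega P(\omega)\bigr)\bigl(\sum_\omega P'(\omega)\bigr) = 1/N$ is \emph{not} given by Cauchy--Schwarz: Cauchy--Schwarz yields $(\sum_\omega P(\omega))(\sum_\omega P'(\omega)) \leq N\,\|P\|_2\|P'\|_2$, which lower-bounds the product of the norms, not the inner product. The pointwise claim is in fact false for general distributions on a set of size $N$ (take $P$ and $P'$ with disjoint supports), and for the specific views at hand it is at best unclear: conditioning on the shuffling permutations, $\Pr[\mech{V}_{m,n}(\tup{x}) = \mech{V}'_{m,n}(\tup{x}')]$ vanishes whenever the associated permutation multigraph is disconnected and $\tup{x},\tup{x}'$ disagree on the sum over some connected component, so a pointwise lower bound would already require the connectivity analysis that this lemma is supposed to defer. (Your claim $\|P\|_2^2 = \|P'\|_2^2$ ``by symmetry'' has a similar, though more benign, issue: it is not needed if you handle things in expectation, since $\tup{\rv{X}}$ and $\tup{\rv{X}}'$ are exchangeable.)

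The fix is to evaluate the cross term \emph{in expectation over the random inputs} rather than pointwise, where it equals $1/N$ exactly. The key fact is that, conditioned on the common sum $s$, the marginal of $\mech{V}_{m,n}(\tup{\rv{X}}')$ (averaged over $\tup{\rv{X}}'$) is the uniform distribution on the coset $\{\tup{v} : \sum \tup{v} = s\}$ of size $N=q^{mn-1}$: the $mn$ shares are then jointly uniform subject only to their total being $s$. Hence $\Ex_{\tup{\rv{X}},\tup{\rv{X}}'}[\langle P_{\tup{\rv{X}}}, P_{\tup{\rv{X}}'}\rangle] = \Ex_{\tup{\rv{X}}}[\langle P_{\tup{\rv{X}}}, U\rangle] = 1/N$, where $U$ is that uniform distribution. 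Combined with $\Ex[\|P_{\tup{\rv{X}}}\|_2^2 + \|P_{\tup{\rv{X}}'}\|_2^2] = 2\Pr[\mech{V}_{m,n}(\tup{\rv{X}}) = \mech{V}'_{m,n}(\tup{\rv{X}})]$ (exchangeability plus the collision-probability identity), your chain then gives $\Ex[\TV^2] \leq \tfrac{N}{4}\,\Ex[\|P_{\tup{\rv{X}}}-P_{\tup{\rv{X}}'}\|_2^2] = \tfrac12\bigl(N\Pr[\mech{V}_{m,n}(\tup{\rv{X}}) = \mech{V}'_{m,n}(\tup{\rv{X}})]-1\bigr)$, and Jensen yields the lemma (with an extra factor $\tfrac12$ under the square root, i.e.\ slightly stronger than stated). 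This repaired argument is a genuinely different packaging from the paper's, which instead inserts $U$ via the triangle inequality (paying a factor of $2$), rewrites $2\TV(P_{\tup{\rv{X}}},U)$ as $N\,\Ex_{\tup{\rv{V}}\sim U}|\rv{Z}-\Ex[\rv{Z}]|$ for $\rv{Z}=\Pr_{|\tup{\rv{X}},\tup{\rv{V}}}[\mech{V}(\tup{\rv{X}})=\tup{\rv{V}}]$, and bounds this by $\sqrt{\Var[\rv{Z}]}$; both proofs ultimately rest on the same two facts (uniformity of the view on the coset, and second moments as collision probabilities), but the direct $\ell_2$ expansion avoids the triangle-inequality factor of $2$.
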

\begin{proof}
  We first use a triangle inequality to reduce to proving a bound on the total variation distance from each of $\tup{\rv{X}}$ and $\tup{\rv{X}}'$ to a third variable. That third variable is given by $\tup{\rv{V}}$ which is defined to be uniformly distributed over the tuples in $\mathbb{G}^{mn}$, independently of  $\tup{\rv{X}}$ and $\tup{\rv{X}}'$ except that it shares the same sum.
\begin{align*}
&\Ex_{\tup{\rv{X}},\tup{\rv{X}}'} [\TV_{|\tup{\rv{X}},\tup{\rv{X}}'}(\mech{V}(\tup{\rv{X}}), \mech{V}(\tup{\rv{X}}'))]\\
\leq&
\Ex_{\tup{\rv{X}},\tup{\rv{X}}'} [\TV_{|\tup{\rv{X}},\tup{\rv{X}}'}(\mech{V}(\tup{\rv{X}}), \tup{\rv{V}})) + \TV_{|\tup{\rv{X}},\tup{\rv{X}}'}(\tup{\rv{V}}, \mech{V}(\tup{\rv{X}}'))]
\\
=&
\Ex_{\tup{\rv{X}}} [\TV_{|\tup{\rv{X}}}(\mech{V}(\tup{\rv{X}}), \tup{\rv{V}})]
+
\Ex_{\tup{\rv{X}'}} [\TV_{|\tup{\rv{X}}'}(\tup{\rv{V}}, \mech{V}(\tup{\rv{X}}'))]
\\
=&
2 \Ex_{\tup{\rv{X}}} [\TV_{|\tup{\rv{X}}}(\mech{V}(\tup{\rv{X}}), \tup{\rv{V}})] \enspace.
\end{align*} 

We now expand the total variation and write it as an expectation over $\tup{\rv{V}}$ as follows:
\begin{align*}
2 \TV_{|\tup{\rv{X}}}(\mech{V}(\tup{\rv{X}}), \tup{\rv{V}})
&=
\sum_{\tup{v} \in \mathbb{G}^{mn}} | \Pr_{|\tup{\rv{X}}}[\mech{V}(\tup{\rv{X}}) = \tup{v}] - \Pr[\tup{\rv{V}} = \tup{v}]|
\\
&=
\sum_{\tup{v} \in \mathbb{G}^{mn} : \sum \tup{v} = \sum \tup{\rv{X}}} | \Pr_{|\tup{\rv{X}}}[\mech{V}(\tup{\rv{X}}) = \tup{v}] - q^{1-mn}|
\\
&=
q^{mn-1} \Ex_{\tup{\rv{V}}} [|\Pr_{|\tup{\rv{X}},\tup{\rv{V}}}[\mech{V}(\tup{\rv{X}}) = \tup{\rv{V}}] - q^{1-mn}|] \enspace.
\end{align*}

The final task is to bound the remaining expectation.
We start by defining the random variable $\rv{Z} = \rv{Z}(\trv{X},\trv{V}) := \Pr_{|\tup{\rv{X}},\tup{\rv{V}}}[\mech{V}(\tup{\rv{X}}) = \tup{\rv{V}}]$.
Note that because both $\mech{V}(\tup{\rv{X}})$ and $\tup{\rv{V}}$ follow the same uniform distribution over tuples in $\mathbb{G}^{mn}$ conditioned to having the same sum, we have
\begin{align*}
\Ex_{\trv{X},\trv{V}}[\rv{Z}] = \Pr[\mech{V}(\tup{\rv{X}}) = \tup{\rv{V}}] = q^{1-mn} \enspace.
\end{align*}
Therefore, the expectation that we need to bound takes the simple form $\Ex[|\rv{Z} - \Ex[\rv{Z}]|]$, and can be bounded in terms of $\Ex[\rv{Z}^2]$ via Jensen's inequality:
\begin{align*}
\Ex[|\rv{Z} - \Ex[\rv{Z}]|]
\leq
\sqrt{\Var[\rv{Z}]}
=
\sqrt{\Ex[\rv{Z}^2] - \Ex[\rv{Z}]^2} \enspace.
\end{align*}

Now recall that if $\rv{A}, \rv{A}' \in A$ are i.i.d. random variables, then we have
\begin{align*}
\Pr[\rv{A} = \rv{A}'] = \sum_{a \in A} \Pr[\rv{A} = a]^2 \enspace.
\end{align*}
Using this identity we can write the expectation of $\rv{Z}^2$ over the randomness in $\trv{V}$ in terms of the probability that two independent executions of $\mech{V}(\trv{X})$ (conditioned on $\trv{X})$ yield the same result:
\begin{align*}
\Ex_{\trv{V}}[\rv{Z}^2]
&=
q^{1-mn} \sum_{\tup{v} \in \mathbb{G}^{mn} : \sum \tup{v} = \sum \tup{\rv{X}}}
\Pr_{|\tup{\rv{X}}}[\mech{V}(\tup{\rv{X}}) = \tup{v}]^2
\\
&=
q^{1-mn} \Pr_{|\tup{\rv{X}}}[\mech{V}(\tup{\rv{X}}) = \mech{V}'(\tup{\rv{X}})]
\enspace.
\end{align*}
Putting the pieces together completes the proof:
\begin{align*}
&\Ex_{\tup{\rv{X}},\tup{\rv{X}}'} [\TV_{|\tup{\rv{X}},\tup{\rv{X}}'}(\mech{V}(\tup{\rv{X}}), \mech{V}(\tup{\rv{X}}'))]\\
\leq&
2 \Ex_{\tup{\rv{X}}} [\TV_{|\tup{\rv{X}}}(\mech{V}(\tup{\rv{X}}), \trv{V})] \\
\leq&
q^{mn-1} \Ex_{\tup{\rv{X}}, \tup{\rv{V}}} [|\Pr_{|\tup{\rv{X}},\tup{\rv{V}}}[\mech{V}(\tup{\rv{X}}) = \tup{\rv{V}}] - q^{1-mn}|] \\
\leq&
\sqrt{q^{mn-1} \Pr[\mech{V}(\tup{\rv{X}}) = \mech{V}'(\tup{\rv{X}})] - 1} \enspace.
\end{align*}
\end{proof}

To further simplify the bound in previous lemma we can write the probability $\Pr[\mech{V}_{m,n}(\tup{\rv{X}}) = \mech{V}_{m,n}'(\tup{\rv{X}})]$ in terms of a single permutation step.
For that purpose we introduce the notation $\mech{V}_{m,n} = \mech{S}_{m,n} \circ \tup{\mech{R}}_{m,n}$, where:
\begin{itemize}
\item $\tup{\mech{R}}_{m,n} : \mathbb{G}^n \to \mathbb{G}^{n m}$ is the randomized map that given $\tup{x} = (x_1,\ldots,x_n)$ generates the shares $(\rv{Y}_i^{(1)}, \ldots, \rv{Y}_i^{(m)}) = \mech{R}(x_i)$ and arranges them in order first by share id and then by user:
\begin{align*}
\tup{\mech{R}}_{m,n}(\tup{x}) = (\rv{Y}_1^{(1)}, \ldots, \rv{Y}_n^{(1)}, \ldots, \rv{Y}_1^{(m)}, \ldots, \rv{Y}_n^{(m)}) \enspace.
\end{align*}
\item $\mech{S}_{m,n} : \mathbb{G}^{nm} \to \mathbb{G}^{nm}$ is a random permutation of its inputs obtained by applying $m$ independent shufflers $\mech{S}^{(j)}$, $j \in [m]$, to the inputs in blocks of $n$:
\begin{align*}
&\mech{S}_{m,n}(y_1^{(1)}, \ldots, y_n^{(1)}, \ldots, y_1^{(m)}, \ldots, y_n^{(m)})\\
=& (\mech{S}^{(1)}(y_1^{(1)}, \ldots, y_n^{(1)}) \cdots \mech{S}^{(m)}(y_1^{(m)}, \ldots, y_n^{(m)}))
\end{align*}
\end{itemize}
It is important to note that $\mech{S}_{m,n}$ produces random permutations of $[mn]$ which are uniformly distributed in the subgroup of all permutations which arise as the parallel composition of $m$ uniform permutations on $[n]$.
Equipped with these observations, it is straightforward to verify the following identity.

\begin{lemma}
Let $\tup{\mech{R}}_{m,n}$ and $\tup{\mech{R}}_{m,n}'$ denote two independent executions of the additive sharing step in $\mech{V}_{m,n}(\tup{\rv{X}}) = \mech{S}_{m,n} \circ \tup{\mech{R}}_{m,n}$.
Then we have
\begin{align*}
\Pr[\mech{V}_{m,n}(\tup{\rv{X}}) = \mech{V}'_{m,n}(\tup{\rv{X}})]
&=
\Pr[\tup{\mech{R}}_{m,n} (\tup{\rv{X}}) = \mech{S}_{m,n} \circ \tup{\mech{R}}_{m,n}' (\tup{\rv{X}})] \enspace.
\end{align*}
\end{lemma}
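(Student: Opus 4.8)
The plan is to collapse the two independent parallel-shuffling steps on the left-hand side into a single one, using only that each shuffling step realizes a uniformly random element of a fixed permutation group. Let $\mathcal{H}$ denote the subgroup of permutations of $[mn]$ consisting of the parallel compositions of $m$ permutations of $[n]$, one acting on each block of $n$ coordinates; as noted immediately before the lemma, $\mech{S}_{m,n}$ applies to its argument a uniformly random element of $\mathcal{H}$ drawn with internal randomness independent of everything else. Writing $\mech{V}_{m,n} = \mech{S}_{m,n}\circ\tup{\mech{R}}_{m,n}$ as in the preceding lemma, I would let $\rv{R} = \tup{\mech{R}}_{m,n}(\trv{X})$ and $\rv{R}' = \tup{\mech{R}}_{m,n}'(\trv{X})$ be the two independently sampled share vectors, and let $\rv{S}, \rv{S}'$ be the (independent, $\mathcal{H}$-uniform) permutations realized by the two shuffling steps, so that $\mech{V}_{m,n}(\trv{X}) = \rv{S}(\rv{R})$ and $\mech{V}'_{m,n}(\trv{X}) = \rv{S}'(\rv{R}')$, with $(\rv{S},\rv{S}')$ independent of $(\rv{R},\rv{R}')$.

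First I would rewrite the left-hand event: applying $\rv{S}^{-1}$ to both sides, $\rv{S}(\rv{R}) = \rv{S}'(\rv{R}')$ is equivalent to $\rv{R} = \rv{T}(\rv{R}')$ where $\rv{T} := \rv{S}^{-1}\rv{S}'$. The key point is that, because $\mathcal{H}$ is a group, for every fixed value $s'$ of $\rv{S}'$ the map $s \mapsto s^{-1}s'$ is a bijection of $\mathcal{H}$, so $\rv{T}$ is uniform on $\mathcal{H}$; and since $\rv{T}$ depends only on $(\rv{S},\rv{S}')$ it is independent of $(\rv{R},\rv{R}')$. Hence $(\rv{T},\rv{R},\rv{R}')$ has the same joint law as $(\rv{S}'',\rv{R},\rv{R}')$ for a single fresh $\mathcal{H}$-uniform permutation $\rv{S}''$ realizing one more copy of $\mech{S}_{m,n}$, which gives
\begin{align*}
\Pr\!\left[\mech{V}_{m,n}(\trv{X}) = \mech{V}'_{m,n}(\trv{X})\right]
&= \Pr\!\left[\rv{R} = \rv{T}(\rv{R}')\right] = \Pr\!\left[\rv{R} = \rv{S}''(\rv{R}')\right] \\
&= \Pr\!\left[\tup{\mech{R}}_{m,n}(\trv{X}) = \mech{S}_{m,n}\circ\tup{\mech{R}}_{m,n}'(\trv{X})\right]\enspace,
\end{align*}
which is exactly the claimed identity.

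I do not expect a genuine obstacle here: the proof is one line of group theory together with independence bookkeeping, and no estimates enter. The two things to get right are (i) that the shuffler randomness is independent of the sharing randomness and of $\trv{X}$ -- this is what lets $\rv{T}$ be independent of $(\rv{R},\rv{R}')$ -- and (ii) that $\mech{S}_{m,n}$ ranges over the \emph{entire} block-permutation group $\mathcal{H}$, which is closed under composition and inversion; this closure is precisely what makes the product $\rv{S}^{-1}\rv{S}'$ again uniform on $\mathcal{H}$, so that two shuffling steps compose into one.
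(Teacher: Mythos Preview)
Your proposal is correct and follows essentially the same approach as the paper: both proofs apply $\rv{S}^{-1}$ to both sides and then use that $\rv{S}^{-1}\rv{S}'$ is again uniform on the block-permutation subgroup (and independent of the sharing randomness) to collapse two shuffles into one. Your write-up is somewhat more explicit about the independence bookkeeping, but the argument is identical in substance.
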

\begin{proof}
We drop all subscripts for convenience.
The result follows directly from the fact that $\mech{S}$ is uniform over a subgroup of permutations, which implies that the inverse of $\mech{S}$ and the composition of two independent copies of $\mech{S}$ both follow the same distribution as $\mech{S}$.
Thus, we can write:
\begin{align*}
\Pr[\mech{V}(\tup{\rv{X}}) = \mech{V}'(\tup{\rv{X}})]
&=
\Pr[\mech{S} \circ \tup{\mech{R}} (\tup{\rv{X}}) = \mech{S}' \circ \tup{\mech{R}}' (\tup{\rv{X}})]
\\
&=
\Pr[\tup{\mech{R}} (\tup{\rv{X}}) = \mech{S}^{-1} \circ \mech{S}' \circ \tup{\mech{R}}' (\tup{\rv{X}})]
\\
&=
\Pr[\tup{\mech{R}} (\tup{\rv{X}}) = \mech{S} \circ \tup{\mech{R}}' (\tup{\rv{X}})] \enspace.
\end{align*}
\end{proof}

Putting these two lemmas together yields the following bound.

\begin{lemma}\label{lem:single_shuffling}
Let $\mech{V}_{m,n}$ and $\mech{V}_{m,n}'$ denote two independent executions of the $m$-parallel IKOS protocol. Then we have:
\begin{align*}
&\Ex_{\tup{\rv{X}},\tup{\rv{X}}'} [\TV_{|\tup{\rv{X}},\tup{\rv{X}}'}(\mech{V}_{m,n}(\tup{\rv{X}}), \mech{V}_{m,n}(\tup{\rv{X}}'))]\\
\leq&
\sqrt{q^{mn-1} \Pr[\tup{\mech{R}}_{m,n} (\tup{\rv{X}}) = \mech{S}_{m,n} \circ \tup{\mech{R}}_{m,n}' (\tup{\rv{X}})] - 1} \enspace.
\end{align*}  
\end{lemma}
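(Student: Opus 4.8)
The plan is to obtain Lemma~\ref{lem:single_shuffling} as a direct composition of the two lemmas just established in this subsection, so essentially no new argument is needed. The first of those lemmas already reduces the target quantity $\Ex_{\tup{\rv{X}},\tup{\rv{X}}'} [\TV_{|\tup{\rv{X}},\tup{\rv{X}}'}(\mech{V}_{m,n}(\tup{\rv{X}}), \mech{V}_{m,n}(\tup{\rv{X}}'))]$ to $\sqrt{q^{mn-1}\Pr[\mech{V}_{m,n}(\tup{\rv{X}}) = \mech{V}_{m,n}'(\tup{\rv{X}})] - 1}$, where the probability is over two independent runs of the protocol on a shared random input $\tup{\rv{X}}$. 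Hence the only step left is to rewrite this collision probability in the form stated.

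For that I would invoke the second lemma, whose proof uses that $\mech{S}_{m,n}$ is uniform over a subgroup of the permutations of $[mn]$ (namely the parallel composition of $m$ independent uniform permutations of $[n]$), so that both $\mech{S}_{m,n}^{-1}$ and $\mech{S}_{m,n}^{-1}\circ\mech{S}_{m,n}'$ are distributed as $\mech{S}_{m,n}$; writing $\mech{V}_{m,n} = \mech{S}_{m,n}\circ\tup{\mech{R}}_{m,n}$ and moving one shuffler across the equation $\mech{S}_{m,n}\circ\tup{\mech{R}}_{m,n}(\tup{\rv{X}}) = \mech{S}_{m,n}'\circ\tup{\mech{R}}_{m,n}'(\tup{\rv{X}})$ yields $\Pr[\mech{V}_{m,n}(\tup{\rv{X}}) = \mech{V}_{m,n}'(\tup{\rv{X}})] = \Pr[\tup{\mech{R}}_{m,n}(\tup{\rv{X}}) = \mech{S}_{m,n}\circ\tup{\mech{R}}_{m,n}'(\tup{\rv{X}})]$. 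Substituting this identity into the bound from the first lemma gives
\[
\Ex_{\tup{\rv{X}},\tup{\rv{X}}'} [\TV_{|\tup{\rv{X}},\tup{\rv{X}}'}(\mech{V}_{m,n}(\tup{\rv{X}}), \mech{V}_{m,n}(\tup{\rv{X}}'))] \leq \sqrt{q^{mn-1}\Pr[\tup{\mech{R}}_{m,n}(\tup{\rv{X}}) = \mech{S}_{m,n}\circ\tup{\mech{R}}_{m,n}'(\tup{\rv{X}})] - 1},
\]
which is precisely the claimed inequality.

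There is no genuine obstacle here: Lemma~\ref{lem:single_shuffling} is a bookkeeping statement that packages the two preceding estimates into the shape consumed downstream, where the collision event $\tup{\mech{R}}_{m,n}(\tup{\rv{X}}) = \mech{S}_{m,n}\circ\tup{\mech{R}}_{m,n}'(\tup{\rv{X}})$ is the event $E$ from the proof outline of Theorem~\ref{thm:avg-security} and is later bounded through the random-multigraph argument of Section~\ref{sec:graph_argument}. The substantive work — the reverse-Jensen/triangle-inequality step losing the factor of $2$, the expansion of the total variation as an expectation over $\tup{\rv{V}}$, and the variance bound via $\Ex[\rv{Z}^2]$ — all lives in the first lemma, and the subgroup-invariance identity lives in the second; the present lemma merely concatenates them.
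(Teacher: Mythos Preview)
Your proposal is correct and matches the paper's approach exactly: the paper introduces Lemma~\ref{lem:single_shuffling} with the sentence ``Putting these two lemmas together yields the following bound,'' and provides no further proof. Your reading---that the first lemma supplies the square-root bound in terms of the collision probability $\Pr[\mech{V}_{m,n}(\tup{\rv{X}}) = \mech{V}_{m,n}'(\tup{\rv{X}})]$ and the second lemma rewrites that probability via the subgroup-invariance of $\mech{S}_{m,n}$---is precisely the intended concatenation.
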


\subsection{Reduction to a problem on random graphs}
\label{sec:graph_argument}

\begin{definition}
  A random $n$-vertex $2m$-regular multigraph $G$ is from the \emph{unconditioned permutation model}, denoted $R^*(n,2m)$, if it can be constructed as follows. Start with $n$ vertices and no edges. Take a set of $m$ uniformly random and independent permutations of the vertices, denoted $\{\pi_i\}_{i=1}^m$.  For each vertex $v$ and each index $i\in [m]$, add an edge between $v$ and $\pi_i(v)$. We say that $G$ is generated by $\{\pi_i\}_{i=1}^m$.
\end{definition}

Note that $G$ may have self-loops. Let $C(G)$ be number of connected components of a graph $G$.

\begin{lemma}
  \label{lem:graph_argument}
  Let $G$ be drawn from $R^*(n,2m)$, then
  \begin{equation*}
  \Pr[\tup{\mech{R}} (\tup{\rv{X}}) = \mech{S} \circ \tup{\mech{R}}' (\tup{\rv{X}})]\leq \Ex[q^{C(G)-mn}]
  \end{equation*}
\end{lemma}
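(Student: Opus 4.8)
The plan is to unpack the probability $\Pr[\tup{\mech{R}}(\tup{\rv{X}}) = \mech{S} \circ \tup{\mech{R}}'(\tup{\rv{X}})]$ by conditioning on the shuffler $\mech{S}$ and on the (common) inputs $\tup{\rv{X}}$, and to observe that, because $\mech{S}_{m,n}$ is the parallel composition of $m$ independent permutations on $[n]$, it is exactly encoded by a tuple $\{\pi_i\}_{i=1}^m$ — precisely the generating data of a graph from $R^*(n,2m)$. So the strategy is: (1) rewrite the event as a system of linear equations over $\mathbb{G} = \mathbb{Z}_q$ in the $mn$ share-variables, once $\mech{S}$ and $\tup{\rv{X}}$ are fixed; (2) count the solutions of that system in terms of the combinatorial structure of $\mech{S}$; (3) show this count is governed by the number of connected components of the graph generated by $\{\pi_i\}$; (4) average over $\mech{S}$, which is the same as averaging over $G \sim R^*(n,2m)$.

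First I would set up the linear system. Fix the permutations $\{\pi_i\}_{i=1}^m$ realizing $\mech{S}$ and fix $\tup{\rv{X}} = \tup{x}$. Write $\rv{Y}_i^{(j)}$ and $\rv{Y'}_i^{(j)}$ for the shares produced by $\tup{\mech{R}}$ and $\tup{\mech{R}}'$ respectively. The event $\tup{\mech{R}}(\tup{x}) = \mech{S}\circ\tup{\mech{R}}'(\tup{x})$ says that for each block $j$ and each position $v$, $\rv{Y}_v^{(j)} = \rv{Y'}_{\pi_j^{-1}(v)}^{(j)}$; equivalently $\rv{Y'}_{u}^{(j)} = \rv{Y}_{\pi_j(u)}^{(j)}$ for all $u$, $j$. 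Now the shares within a user must sum to that user's input: $\sum_{j} \rv{Y}_i^{(j)} = x_i$ and $\sum_j \rv{Y'}_i^{(j)} = x_i$ for all $i$; apart from these $2n$ constraints the shares are uniform and independent over $\mathbb{G}$. Substituting the matching constraints into the primed sum-constraints, the $\rv{Y'}$ variables are eliminated and we are left with the $\rv{Y}$ variables, uniform over $\mathbb{G}^{mn}$ subject to $2n$ equations: the $n$ original equations $\sum_j \rv{Y}_i^{(j)} = x_i$ and the $n$ equations $\sum_j \rv{Y}_{\pi_j(i)}^{(j)} = x_i$. The probability that a uniform point of $\mathbb{G}^{mn}$ satisfies a given consistent linear system with $r$ independent equations is $q^{-r}$; so the conditional probability equals $q^{-r(\{\pi_i\})}$ where $r$ is the rank (number of independent equations) of this $2n \times mn$ system over $\mathbb{Z}_q$ — with the caveat that when $q$ is not prime one works with the index of the lattice generated by the constraint rows, but the counting argument goes through identically in terms of the number of solutions.

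The combinatorial heart is identifying $mn - (\text{number of solution degrees of freedom})$ with $C(G)$. Consider the $2n$ equations indexed by pairs $(i,\star)$ and $(i,\star')$; build the multigraph $G$ on vertex set $[n]$ where for each $j \in [m]$ vertex $i$ is joined to $\pi_j(i)$ — this is exactly $R^*(n,2m)$. One checks that the two sum-constraints attached to a vertex $i$ and the ones attached to $\pi_j(i)$ share the variable $\rv{Y}_{\pi_j(i)}^{(j)}$, and chasing these shared variables around shows that the constraint rows corresponding to one connected component of $G$ span a space of dimension exactly (number of vertices in the component) $\cdot\, [\,\text{something}\,]$; after the standard cycle-space/edge-count bookkeeping the number of free variables comes out to be $mn - (mn - C(G))$, i.e.\ the solution space has size $q^{C(G)}$ (again up to the index subtlety when $q$ is composite, which only helps — it can only make the solution count no larger, so the inequality direction is safe). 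Hence $\Pr_{|\{\pi_i\},\tup{x}}[\cdots] \le q^{C(G) - mn}$, uniformly in $\tup{x}$. Averaging over the $\pi_i$'s — which is by definition averaging over $G \sim R^*(n,2m)$ — and then over $\tup{\rv{X}}$ (the bound is independent of $\tup{x}$) yields $\Pr[\tup{\mech{R}}(\tup{\rv{X}}) = \mech{S}\circ\tup{\mech{R}}'(\tup{\rv{X}})] \le \Ex[q^{C(G)-mn}]$, as claimed.

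The step I expect to be the main obstacle is the bookkeeping in the third paragraph: carefully verifying that the rank of the $2n$-equation system over $\mathbb{Z}_q$ is exactly $mn - C(G)$ (or at least $\ge mn - C(G)$, which suffices for the stated inequality), handling self-loops of $G$ correctly, and being careful that $\mathbb{Z}_q$ need not be a field so "rank" must be replaced by a counting/lattice-index argument throughout — this is exactly the place where the paper's approach diverges from the finite-field matrix-rank approach of \cite{DBLP:journals/corr/abs-1909-11073}, and where all the care is needed.
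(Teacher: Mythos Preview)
Your overall strategy is sound and is a natural alternative to the paper's approach: condition on the permutations $\{\pi_j\}$, reduce the collision probability to a solution-counting problem for a linear system over $\mathbb{Z}_q$, and relate the count to the connected components of the graph $G$ generated by the $\pi_j$. However, the concrete bookkeeping in your third paragraph is wrong, not merely imprecise.

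You assert that the $2n$-equation system on the $mn$ share variables has solution space of size $q^{C(G)}$, i.e.\ rank $mn - C(G)$. But $2n$ equations have rank at most $2n$, and $mn - C(G) > 2n$ whenever $m \ge 3$ (since $C(G) \le n$), so this is impossible on its face. The correct rank is $2n - C(G)$: for each connected component $K$ of $G$ one has the dependency $\sum_{i \in K}(E_i - E'_i) = 0$ among the constraint rows (because $K$ is invariant under every $\pi_j$), and these $C(G)$ relations are independent and span all dependencies. Hence the solution space has size $q^{mn - 2n + C(G)}$. There is a second error that cancels the first: you write the conditional probability as $q^{-r}$ for a uniform point of $\mathbb{G}^{mn}$, but once you fix $\tup{x}$ the tuple $\rv{Y}$ is uniform on the coset of size $q^{mn-n}$ cut out by the first $n$ equations, and you must still multiply by $\Pr[\rv{Y}' = \mech{S}^{-1}(\rv{Y})] = q^{-(mn-n)}$ when that point lies in the coset. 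The correct computation is
\[
\Pr[\text{event} \mid \mech{S}, \tup{x}]
\;=\;
\frac{|\{y : y \text{ satisfies all } 2n\}|}{|C_{\tup{x}}|^{2}}
\;=\;
\frac{q^{\,mn - 2n + C(G)}}{q^{\,2(mn-n)}}
\;=\;
q^{\,C(G)-mn}\enspace,
\]
which gives the claimed bound (with equality, in fact). Your two errors cancel to yield the right exponent, but the argument as written does not hold.

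By contrast, the paper sidesteps rank computations entirely. It orders the $mn$ coordinates, sets $A_j = \{\rv{U}_j = \rv{U}'_j\}$, and evaluates the chain $p_j = \Pr[A_j \mid A_1,\dots,A_{j-1},\mech{S}]$ by case analysis: for indices not divisible by $m$, $p_j = q^{-1}$ directly from the sharing randomness; for the last share in a user's block, $p_j = 1$ precisely when that vertex is the last (in the ordering) of its connected component, and $p_j = q^{-1}$ otherwise, the latter established via an explicit $\mathbb{G}$-action on the outcome space that exploits a path in $G$ to a higher-indexed vertex. Multiplying out gives $\prod_j p_j = q^{C(G)-mn}$. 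This sequential argument works uniformly over any abelian $\mathbb{Z}_q$ without any appeal to matrix rank or lattice indices, which is exactly the delicacy you flagged.
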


\begin{proof}
  Note that, by the tower law,
  \begin{align*}
    \Pr[\tup{\mech{R}} (\tup{\rv{X}}) = \mech{S} \circ \tup{\mech{R}}' (\tup{\rv{X}})]=\Ex[\Pr[\tup{\mech{R}} (\tup{\rv{X}}) = \mech{S} \circ \tup{\mech{R}}' (\tup{\rv{X}})|\mech{S}]].
  \end{align*}
  Let $G_{\mech{S}}$ be the unconditioned permutation model graph, with vertex set $[n]$, generated by the $m$ permutations used in $\mech{S}$. Note that, it suffices to show that
  \begin{equation*}
    \Pr[\tup{\mech{R}} (\tup{\rv{X}}) = \mech{S} \circ \tup{\mech{R}}' (\tup{\rv{X}})|\mech{S}]=q^{C(G_{\mech{S}})-mn}.
  \end{equation*}

  For notational convenience, we will define a deterministic reordering of $\tup{\mech{R}} (\tup{\rv{X}})$ and  $\mech{S} \circ \tup{\mech{R}}' (\tup{\rv{X}})$ as follows. Consider the permutation $P:[mn]\rightarrow [mn]$
  \begin{equation*}
    P(j)=\left\lfloor \frac{j-1}{m}\right\rfloor+n(j-1 \textrm{ mod } m)+1.
  \end{equation*}

  Define $\rv{U},\rv{U}'\in \mathbb{G}^{mn}$ by $\rv{U}_j=\tup{\mech{R}} (\tup{\rv{X}})_{P(j)}$ and $\rv{U}'_j=P\circ \mech{S} \circ \tup{\mech{R}}' (\tup{\rv{X}})_{P(j)}$. Note that $P$ is such that the shares from each input are grouped together (in order) in $\rv{U}$. Consequently, $\rv{U}'$ groups together collections of $m$ shares, one from the output of each shuffler. Thus it suffices to show that
  \begin{equation*}
    \Pr[ \rv{U}=\rv{U}' |\mech{S}]=q^{C(G_{\mech{S}})-mn}.
  \end{equation*}
  
  For $j\in [mn]$, let $A_j$ be the event that $\rv{U}_j=\rv{U}'_j$. Now define $p_j:=\Pr[A_j|A_1,...,A_{j-1},\mech{S}]$, thus
  \begin{equation*}
    \Pr[ \rv{U}=\rv{U}' |\mech{S}]=\prod_{j=1}^{mn} p_j.
  \end{equation*}

  First we consider values of $j$ that are not divisible by $m$, i.e. they are not the final share in a group of $m$. For such a $j$, we claim $p_j=q^{-1}$. To see this, condition on $\tup{\rv{X}}$ and $\tup{\mech{R}}'$, in addition to $A_1,...,A_{j-1}$. Note that $\rv{U}_j$ and $\rv{U}_{j+1}$ only depend upon anything we've conditioned on via their sum. Therefore $\rv{U}_j$ is still uniformly distributed and has probability $q^{-1}$ of being equal to $\rv{U}'_j$.

For an index $i\in [mn]$ we define the vertex corresponding to $i$ to be the vertex $\lceil i/m \rceil$, and we define $C_i$ to be the set of vertices in the same connected component as this vertex in $G_{\mech{S}}$. For the remaining $j$'s, we distinguish the case where the corresponding vertex is the highest index in $C_j$ and the case where it isn't.

In the first case,
\begin{equation*}
  \sum_{i \textrm{ s.t. } C_i=C_j}\rv{U}'_i=\sum_{i \textrm{ s.t. } C_i=C_j}\tup{\mech{R}}' (\tup{\rv{X}})_{P(i)}
\end{equation*}
as the sums have the same summands in a different order. Further,
\begin{equation*}
  \sum_{i \textrm{ s.t. } C_i=C_j}\tup{\mech{R}}' (\tup{\rv{X}})_{P(i)}=\sum_{i \textrm{ s.t. } C_i=C_j}\rv{U}_i
\end{equation*}
as they both represent sharings of the same input values and
\begin{equation*}
  \sum_{\substack{i \textrm{ s.t. } C_i=C_j \\ i\neq j}}\rv{U}'_i=\sum_{\substack{i \textrm{ s.t. } C_i=C_j \\ i\neq j}}\rv{U}_i
\end{equation*}
as we are conditioning on $A_1,...,A_{j-1}$. Putting these together we can conclude that $p_j=1$.

For the second case, we will find that $p_j=q^{-1}$. We will show this by showing that if we condition on the value of $\rv{U}_j$ then $\rv{U}'_j$ is still uniformly distributed. That is to say that the number of possible outcomes fitting those conditions with each value of $\rv{U}_j$ is independent of that value. To show that these sets of outcomes have the same size we will partition the possible outcomes into sets of size $q$, with $\rv{U}'_j$ taking each value in $\mathbb{G}$ exactly once in each set. This will be possible because the structure of $G_{\mech{S}}$ allows us to change the value of $\rv{U}'_j$ and other values to preserve what is being conditioned on in an algebraically principled way. If $\mathbb{G}=\mathbb{Z}_q$, for some prime $q$, i.e. the set of possible outcomes forms a vector space, this can be thought of as follows. The space of possible outcomes consitent with the conditions is a subspace of the space of all outcomes. Thus showing that this subspace contains two possible values for $\rv{U}'_j$ suffices by the nice algebraic properties of vector spaces. That there is more than one possible value of $\rv{U}'_j$ is a consequence of $G_{\mech{S}}$ ``connecting the $j$th share to later shares''. The following paragraphs make this formal in the more general setting of any abelian group $\mathbb{G}$.

Consider the set $\mathcal{T}$ of choices of $(\rv{U}\cdot \rv{U}')\in \mathbb{G}^{2mn}$ that are consistent with $A_1,...,A_{j-1}$ (and a value of $\tup{\rv{X}}$). We consider the group action of $\mathbb{G}^{2mn}$ on itself by addition. We will show that, there exists a homomorphism $\mathbb{G}\rightarrow \mathbb{G}^{2mn}$ mapping $g$ to $u_g$ with the following property.  The action of $u_g$ on $\mathbb{G}^{2mn}$ fixes $\mathcal{T}$ and $\rv{U}_j$ and adds $g$ to $\rv{U}'_j$. Therefore, the equivalence relation, of being equal upto adding $u_g$ for some $g$, partitions $\mathcal{T}$ into subsets of size $q$ each containing one value for which $A_j$ holds. It follows, from the fact that each entry in $\mathcal{T}$ is equally likely, that $p_j=q^{-1}$.

To find such a homomorphism, note that there is a path in $G_S$ from the vertex corresponding to the $j$th share to a higher index vertex. This is equivalent to saying that there is a sequence $(a_1,b_1,a_2,b_2,...,a_l,b_l,a_{l+1})$ with the following properties. The $a_i$ and $b_i$ are elements of $[mn]$ and should be interpreted as indexes of $\mathbb{G}^{mn}$. For all $i\in [l]$, $\pi(b_i)=a_i$ and $b_i$ and $a_{i+1}$ correspond to the same vertex. We have $a_1=j$, $b_l>j$, $a_i\neq a_{i'}$ for any $i\neq i'$, and $b_i< j$ for all $i<l$. Let $u_g$ be the element of $\mathbb{G}^{2mn}$ with a $g$ in entries $a_2,...,a_{l+1},b_1+mn,...,b_{l}+mn$ and the identity everywhere else.

Adding $u_g$ doesn't change the truth of $A_1,...,A_{j-1}$ because $\rv{U}_{a_i}$ and $\rv{U}'_{a_i}=\tup{\mech{R}}'(\tup{\rv{X}})_{b_i}$ are always incremented together, with the exception of when $i=1$ or $l+1$ which is fine because then $a_i\geq j$. In the case of $i=1$ this adds $g$ to $\rv{U}'_j$ without changing $\rv{U}_j$. The consistency of the implied values of $\tup{\rv{X}}$ is maintained because $\rv{U}_{a_i}$ and $\tup{\mech{R}}'(\tup{\rv{X}})_{P(b_{i-1})}$ are always incremented together and affect the $\tup{\rv{X}}$ implied by $\rv{U}$ the same as that implied by $\tup{\mech{R}}'(\tup{\rv{X}})$. Thus, this $u_g$ has the properties we claimed and $p_j=q^{-1}$.

Tying this together we have that
\begin{align*}
\Pr[\rv{U} = \rv{U}']&=\Ex[\Pr[\rv{U}=\rv{U}'|\mech{S}]] \\
                                                                                  &=\Ex[\prod_{j=1}^{mn} p_j] \\
                                                                                  &=\Ex[q^{C(G_S)-mn}] \\
                                                                                  &=\Ex[q^{C(G)-mn}]\enspace.
\end{align*}
\end{proof}
 
\subsection{Understanding the number of connected components of \texorpdfstring{$G$}{G}}
\label{sec:bound_on_graph}

\begin{lemma}
  \label{lem:bound_on_graph}
   Let $n\geq 19$, $m\geq 3$ and $q\leq \frac{1}{2}\left(\frac{n}{e}\right)^{m-1}$. Let $G$ be drawn from $R^*(n,2m)$, then
  \begin{equation*}
    \PP(C(G)=c)\leq \frac{1.5^{c-1}}{c!}\left(\frac{e}{n}\right)^{(m-1)(c-1)}.
  \end{equation*}
  Therefore,
  \begin{align*}
    \Ex[q^{C(G)}]&\leq q + q^2\left(\frac{n}{e}\right)^{1-m} \enspace.
  \end{align*}
\end{lemma}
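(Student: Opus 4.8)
The plan is to first prove the pointwise estimate $\Pr[C(G)=c] \le \frac{1.5^{c-1}}{c!}(e/n)^{(m-1)(c-1)}$ for every $c \ge 1$, and then deduce the bound on $\Ex[q^{C(G)}]$ by summing the resulting series.

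For the pointwise bound I would argue by a union bound over \emph{invariant partitions}. A connected component of $G$, being a vertex set with no edges to its complement, is mapped to itself by each generating permutation $\pi_i$ (the edges of $G$ are the pairs $\{v,\pi_i(v)\}$), and the same is then true of any union of components. Hence if $C(G)=c$ then $[n]$ admits a partition into $c$ nonempty blocks each invariant under every $\pi_i$, so
\[
\Pr[C(G)=c] \;\le\; \sum_{\mathcal{P}=\{B_1,\dots,B_c\}} \Pr\bigl[\,\pi_i(B_l)=B_l \text{ for all } i\in[m],\ l\in[c]\,\bigr],
\]
the sum running over partitions $\mathcal{P}$ of $[n]$ into $c$ blocks. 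By independence of the $\pi_i$ and the fact that a uniform permutation preserves a fixed partition with block sizes $n_1,\dots,n_c$ with probability $n_1!\cdots n_c!/n!$, each summand equals $(n_1!\cdots n_c!/n!)^m$. Organizing the partitions by their multiset of block sizes and passing to ordered compositions (each unordered partition with $c$ blocks accounting for $c!/\prod_s m_s!$ orderings of its size multiset) gives
\[
\Pr[C(G)=c]\;\le\;\frac1{c!}\sum_{\substack{n_1,\dots,n_c\ge 1\\ n_1+\cdots+n_c=n}}\Bigl(\frac{n_1!\cdots n_c!}{n!}\Bigr)^{m-1}\;=:\;\frac{T_c}{c!},
\]
so it suffices to prove $T_c\le1.5^{c-1}(e/n)^{(m-1)(c-1)}=\bigl(1.5\,(e/n)^{m-1}\bigr)^{c-1}$.

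Bounding the composition sum $T_c$ is the crux, and I expect it to absorb most of the work. The weight $\frac{n_1!\cdots n_c!}{n!}=\binom{n}{n_1,\dots,n_c}^{-1}$ is largest, over compositions of $n$ into $c$ positive parts, at $(n-c+1,1,\dots,1)$, where it equals $\frac{(n-c+1)!}{n!}=1/(n)_{c-1}$ with $(n)_{c-1}:=n(n-1)\cdots(n-c+2)$; the $c$ compositions of this shape already contribute $c\,(n)_{c-1}^{-(m-1)}$, which is (up to the $1.5^{c-1}$ slack) the claimed bound. To control the whole sum I would either (i) split by the number $r$ of parts exceeding $1$, bounding the count of compositions in shell $r$ and noting that consecutive shells shrink by a factor bounded by a constant multiple of $(2/(n-c))^{m-1}$, or (ii) use the elementary inequality $T_c\le\bigl(\max\frac{\prod n_i!}{n!}\bigr)^{m-2}S_c=(n)_{c-1}^{-(m-2)}S_c$, where $S_c=\sum\frac{\prod n_i!}{n!}$ is the $m=2$ sum, and then estimate $S_c$ directly (it is exactly $\Ex\bigl[\binom{K}{c-1}\bigr]$ for $K$ the number of $k\in[n-1]$ with $\pi(\{1,\dots,k\})=\{1,\dots,k\}$, a single uniform $\pi$). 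Either way the remaining inequalities are elementary but must be checked \emph{uniformly in} $c\in\{1,\dots,n\}$; the key auxiliary fact is $\frac{n^{c-1}}{(n)_{c-1}}\le e^{c-1}$, which I would prove by showing the continuous function $h(c)=(c-1)+\ln\bigl((n)_{c-1}/n^{c-1}\bigr)$ is concave (its second derivative, computed via the Gamma function, is negative) with $h(1)=0$ and $h(n)>0$ for $n\ge 19$, so $h\ge 0$ on $[1,n]$. The hypotheses $n\ge 19$ and $m\ge 3$ enter here to make the geometric tails negligible and to make the small‑$c$ inequalities hold; the regime $c=\Theta(n)$ is the most delicate and where I would be most careful.

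Granting the pointwise bound, the second claim is a short computation. Put $\lambda:=1.5\,q\,(e/n)^{m-1}$; the hypothesis $q\le\frac12(n/e)^{m-1}$ gives $\lambda\le\frac34$. Then
\[
\Ex[q^{C(G)}]\;=\;\sum_{c\ge 1}q^{c}\,\Pr[C(G)=c]\;\le\;\sum_{c\ge 1}\frac{q\,\lambda^{c-1}}{c!}\;=\;q\cdot\frac{e^{\lambda}-1}{\lambda}.
\]
The function $\frac{e^{\lambda}-1}{\lambda}=1+\frac{\lambda}{2}+\frac{\lambda^2}{6}+\cdots$ is increasing in $\lambda$ and at $\lambda=\frac34$ takes a value below $1+\frac{2\lambda}{3}$, so $\frac{e^{\lambda}-1}{\lambda}\le 1+\frac{2\lambda}{3}$ on $[0,\frac34]$, whence
\[
\Ex[q^{C(G)}]\;\le\;q\Bigl(1+\tfrac{2}{3}\lambda\Bigr)\;=\;q+q^{2}(e/n)^{m-1}\;=\;q+q^{2}(n/e)^{1-m},
\]
as required.
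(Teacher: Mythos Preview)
Your setup is correct: the union bound over invariant partitions is valid, and after using $\Pr[\text{every }\pi_i\text{ fixes every block}]=(n_1!\cdots n_c!/n!)^m$ and passing to ordered compositions you indeed get $\Pr[C(G)=c]\le T_c/c!$. Your derivation of the expectation bound from the pointwise one is also correct and essentially the same as the paper's geometric--tail argument (you package it via $(e^{\lambda}-1)/\lambda\le 1+\tfrac{2}{3}\lambda$ on $[0,\tfrac34]$, which is a clean variant).

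The gap is the bound on $T_c$. You flag it as ``the crux'' and sketch two strategies, but neither is carried out, and calling the ``remaining inequalities elementary'' undersells the work: this is precisely where the hypotheses $n\ge 19$, $m\ge 3$ and the specific constant $1.5$ are used, and the estimate must hold uniformly over all $c\in[n]$ (in particular your route (ii) via $S_c=\Ex\bigl[\binom{K}{c-1}\bigr]$ would need a tight bound on that factorial moment that you do not supply). The paper avoids your $(c{-}1)$-fold sum entirely by proving the pointwise bound by \emph{induction on $c$}: from the exact recursion
\[
p(n,c)=\frac{1}{c}\sum_{s=1}^{n-c+1}\binom{n}{s}^{1-m}\,p(s,1)\,p(n-s,c-1),
\]
one bounds $p(s,1)\le 1$ and applies the inductive hypothesis to $p(n-s,c-1)$, reducing the step to the single one-variable inequality
\[
\sum_{s=1}^{n-c+1}\Bigl(\frac{(n-s)!\,s!\,n^{c-1}}{n!\,(n-s)^{c-2}}\Bigr)^{m-1}\le 1.5\,e^{m-1},
\]
which the paper then proves by splitting the range into $s\le n/10$, $n/10<s<3n/4$, $s\ge 3n/4$ and handling each piece with ratio tests, geometric series, and Stirling. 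This induction-to-one-dimensional-sum maneuver is the idea you are missing; your direct approaches may well go through, but you have not shown it.
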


\begin{proof}
  Let $p(n,c)=\Pr[C(G^n)=c]$, we will show that the bound in the theorem holds by induction on $c$. For $c=1$ the bound is trivial as the right hand side equals $1$. For $c>1$,
  \begin{align*}
    p(n,c)&=\frac{1}{c}\sum_{S\subset [n]}\Pr[\textrm{No edge from $S$ to $[n]-S$}]p(|S|,1)p(n-|S|,c-1) \\
          &=\frac{1}{c}\sum_{s=1}^{n-c+1}\binom{n}{s}\binom{n}{s}^{-m}p(s,1)p(n-s,c-1).
  \end{align*}
  We now bound this expression, using the induction hypothesis, to find that
  \begin{align}
    \label{eq:sum_to_bound}
    p(n,c)&\leq \frac{1}{c}\sum_{s=1}^{n-c+1}\binom{n}{s}^{1-m}\frac{1.5^{c-2}}{(c-1)!}\left(\frac{e}{n-s}\right)^{(m-1)(c-2)} \\
          &= \frac{1.5^{c-1}}{c!}\left(\frac{e}{n}\right)^{(m-1)(c-1)}\frac{2e^{1-m}}{3}\sum_{s=1}^{n-c+1}\left(\frac{(n-s)!s!n^{c-1}}{n!(n-s)^{c-2}}\right)^{m-1}.
  \end{align}
  To complete the proof it suffices to show that this sum on the right is at most $1.5e^{m-1}$. Call the $s$th summand from this sum $a_s$. We separate the summands into three cases, depending on whether $s$ is greater than $n/10$ and/or less than $3n/4$. Firsty, if $s\leq n/10$, then
  \begin{align*}
    \frac{a_{s}}{a_{s-1}}&=\left(\frac{s}{n-s+1}\left(\frac{n-s+1}{n-s}\right)^{c-2}\right)^{m-1} \\
                         &\leq \left(\frac{s}{n-s}e^{\frac{c-2}{n-s}}\right)^{m-1} \\
                         &\leq \left(\frac{e^{\frac{10}{9}}}{9}\right)^{2} \\
                         &\leq \frac{1}{8}.
  \end{align*}
  Thus we can bound the early summands with a geometric series as follows.
  \begin{align*}
    \sum_{s=1}^{\lfloor n/10 \rfloor}a_s&\leq \sum_{s=1}^{\lfloor (n-c)/10 \rfloor}\frac{a_1}{8^{s-1}} \\
                                            &\leq \sum_{s=1}^\infty \frac{a_1}{8^{s-1}} \\
                                            &\leq \frac{8a_1}{7} \\
                                            &=\frac{8}{7}\left(\frac{n}{n-1}\right)^{(c-2)(m-1)} \\
                                            &\leq \frac{8}{7}e^{\frac{(c-1)(m-1)}{n}}\leq \frac{8}{7}e^{(m-1)}
  \end{align*}

  We now similarly consider the terms with $s\geq 3n/4$. For these values of $s$,
  \begin{align*}
    \frac{a_{s+1}}{a_{s}}&=\left(\frac{s+1}{n-s}\left(\frac{n-s}{n-s-1}\right)^{c-2}\right)^{m-1} \\
                         &\geq\left(\frac{s}{n-s}\right)^{m-2} \\
                         &\geq 9
  \end{align*}
  If $c>n/4$ then there are no summands for $s\geq 3n/4$. Otherwise we can bound the late summands with a geometric series as follows.
  \begin{align*}
    \sum_{s=\lceil 3n/4 \rceil}^{n-c+1}a_s&\leq \sum_{s=\lceil 3n/4 \rceil}^{n-c+1}\frac{a_{n-c+1}}{9^{n-c+1-s}} \\
                                            &\leq \sum_{s=-\infty}^{n-c+1} \frac{a_{n-c+1}}{9^{n-c+1-s}} \\
                                            &= \frac{9a_{n-c+1}}{8} \\
                                            &=\frac{9}{8}\left(\frac{(c-1)!n^{c-1}(n-c+1)!}{n!(c-1)^{c-2}}\right)^{m-1}.
  \end{align*}
  Applying Sterling's bound, $\sqrt{2\pi}n^{n+\frac{1}{2}}e^{-n}\leq n!\leq e n^{n+\frac{1}{2}}e^{-n}$, to the factorials in the above expression bounds it above by,
  \begin{equation*}
    \frac{9}{8}\left(\frac{e^2}{\sqrt{2\pi}}(c-1)^{1.5}\left(1-\frac{c-1}{n}\right)^{n-c+1.5}\right)^{m-1}.
  \end{equation*}
  As $n\geq 19$ and $c\leq n/4$, this is maximised for $c=3$, and as we also have $m\geq 3$ this results in the bound
  \begin{equation*}
    \frac{9}{8}\left(\frac{e^2}{\sqrt{2\pi}}2\sqrt{2}(1-\frac{2}{n})^{n-1.5}\right)^{m-1}\leq \left(1.27\right)^{m-1}.
  \end{equation*}
  
  Finally we consider the case of $n/10<s<3n/4$. Let $\alpha=s/n$. Substituting this into $a_s$ gives
  \begin{equation*}
    \left(\frac{((1-\alpha)n)!(\alpha n)!}{(n-1)!(1-\alpha)^{c-2}}\right)^{m-1}.
  \end{equation*}
  Applying Sterling's bound again bounds this expression by 
  \begin{align*}
    \left(\frac{e^2}{\sqrt{2\pi}}\sqrt{n}(1-\alpha)^{2.5-c+(1-\alpha)n}\alpha^{\alpha n +\frac{1}{2}}\right)^{m-1}&\leq \left(\frac{e^2\sqrt{n}}{\sqrt{2\pi}}\alpha^{\alpha n}\right)^{m-1}\enspace.
  \end{align*}
  Where the inequality holds because $(1-\alpha)\leq 1$ and, for any summand that appears in the sum, $2.5-c+(1-\alpha) n>0$.
  The final expression is maximised for $\alpha=3/4$ and there are fewer than $3n/5$ summands with $n/10<s<3n/4$. Therefore the sum of all of these terms can be bounded by,
  \begin{align*}
    \frac{3n}{5}\left(\frac{e^2\sqrt{n}}{\sqrt{2\pi}}\left(\frac{3}{4}\right)^{\frac{3n}{4}}\right)^{m-1}&\leq \left(en\left(\frac{3}{4}\right)^{\frac{3n}{4}}\right)^{m-1} \\
    &\leq 1.
  \end{align*}
  Where we have used that $m\geq 3$ and $n\geq 19$. Adding these up the sum as a whole is bounded by
  \begin{equation*}
    \frac{8}{7}e^{m-1}+1+\left(1.27\right)^{m-1}<1.5e^{m-1}.
  \end{equation*}
  To conclude the proof we consider the expectation. Below we apply the definition of expectation with the bound on the probability above.
  \begin{equation*}
    \Ex[q^{C(G)}]\leq\sum_{c=1}^{n}q^c\frac{1.5^{c-1}}{c!}\left(\frac{e}{n}\right)^{(m-1)(c-1)}
  \end{equation*}
  Notice that every term after the second is at most $\frac{qe^{m-1}}{2n^{m-1}}$ times the previous term, thus
  \begin{equation*}
    \Ex[q^{C(G)}]\leq q+\frac{3q^2}{4}\left(\frac{n}{e}\right)^{1-m}\sum_{i=0}^{\infty} \left(\frac{qe^{m-1}}{2n^{m-1}}\right)^i
  \end{equation*}
  Then using that $q\leq \frac{1}{2}\left(\frac{n}{e}\right)^{m-1}$ we bound the sum by $4/3$ to find
  \begin{equation*}
    \Ex[q^{C(G)}]\leq q+q^2\left(\frac{n}{e}\right)^{1-m} \enspace.
  \end{equation*}
\end{proof} 
\section{From Average-Case to Worst-Case Security}
\label{sec:avg_to_worst}
\begin{proof}[Proof of Lemma~\ref{lem:avg_to_worst}]
Fix a pair of inputs $\tup{x}$ and $\tup{x}'$ with the same sum.
Since the output of $\mech{V}_{m+1,n}(\tup{x})$ can be simulated directly from the output of $\tilde{\mech{V}}_{m,n}(\tup{x})$ by applying a random permutation to the last $n$ elements, we have $\TV(\mech{V}_{m+1,n}(\tup{x}),\mech{V}_{m+1,n}(\tup{x}')) \leq \TV(\tilde{\mech{V}}_{m,n}(\tup{x}), \tilde{\mech{V}}_{m,n}(\tup{x}'))$, and therefore it suffices to show that $\tilde{\mech{V}}_{m,n}$ provides worst-case statistical security with parameter $\sigma$.

The key observation that allows us to reduce the worst-case security of $\tilde{\mech{V}}_{m,n}$ to the average-case security of $\mech{V}_{m,n}(\tup{x})$ is to observe that the addition of an extra share can be interpreted as adding a random value to each user's input, effectively making the inputs uniformly random.
To formalize this intuition we observe that $\mech{R}_m$ admits a recursive decomposition as follows.
Let $\rv{U} \in \dom{G}$ be a uniformly random group element and $x \in \dom{G}$. Then we have $\mech{R}_1(x) = x$ and, for $m \geq 1$,
\begin{align*}
\mech{R}_{m+1}(x) = (\mech{R}_{m}(x - \rv{U}), \rv{U}) \enspace.
\end{align*}
Expanding this identity into the definition of $\tilde{\mech{V}}$ and writing $\tup{\rv{U}} = (\rv{U}_1, \ldots, \rv{U}_n)$ for the uniform random variables arising from applying the above expression for $\mech{R}_{m+1}$ to the input from each user, we obtain
\begin{align*}
\tilde{\mech{V}}_{m}(\tup{x})
=
(\mech{V}_{m}(\tup{x} - \tup{\rv{U}}), \tup{\rv{U}}) \enspace.
\end{align*}
Note that here $\tup{x} - \tup{\rv{U}}$ is a uniform random vector in $\dom{G}^n$.
The result now follows from matching the uniform randomness from $\tup{\rv{U}}$ observed when executing the protocol with two inputs with the same sum:
\begin{align*}
\TV(\tilde{\mech{V}}_{m+1}(\tup{x}), \tilde{\mech{V}}_{m+1}(\tup{x}'))
&=
\TV((\mech{V}_{m}(\tup{x} - \tup{\rv{U}}), \tup{\rv{U}}), (\mech{V}_{m}(\tup{x}' - \tup{\rv{U}}'), \tup{\rv{U}}'))
\\
&=
\Ex_{\tup{\rv{U}}} [\TV_{|\tup{\rv{U}}}(\mech{V}_{m}(\tup{x} - \tup{\rv{U}}), \mech{V}_{m}(\tup{x}' - \tup{\rv{U}}))]
\\
&=
\Ex_{\tup{\rv{X}},\tup{\rv{X}}'} [\TV_{|\tup{\rv{X}},\tup{\rv{X}}'}(\mech{V}_{m}(\tup{\rv{X}}), \mech{V}_{m}(\tup{\rv{X}}'))]
\enspace,
\end{align*}
where $\tup{\rv{X}}$ is a uniformly random tuple of $n$ group elements and $\tup{\rv{X}}'=\tup{\rv{X}}-\tup{x}+\tup{x}'$ is a constant offset. Thus they are both uniformly distributed and have the same sum. Therefore we can complete the proof by using the assumption of average case security for $\mech{V}_{m,n}$.
\end{proof} 
\section{The Original IKOS Proof}
\label{sec:ikos_proof}

In this section we provide a proof of Lemma~\ref{lemma:IKOS}, all the ideas for the proof are provided in \cite{ikos} but we reproduce the proof here keeping track of constants to facilitate setting parameters of the protocol. The following definition and lemma from \cite{IZ89} are fundamental to why this protocol is secure.

Let $H$ be a family of functions mapping $\{0, 1\}^n$ to
$\{0, 1\}^l$ . We say $H$ is universal or a universal family of
hash functions if, for $h$ selected uniformly at random from $H$, for every $x, y \in \{0, 1\}^n$, $x\neq y$,
\begin{equation*}
  \PP[h(x) = h(y)]=2^{-l}.
\end{equation*}

\begin{lemma}[Leftover Hash Lemma (special case)]
  Let $D\subset \{0, 1\}^n$, $s > 0$, $|D| \geq 2^{l+2s}$ and let $H$ be a universal family of hash functions mapping $n$ bits to $l$ bits. Let $h$, $d$ and $U$ be chosen independently uniformly at random from $H$, $D$ and $\{0,1\}^{l}$ respectively. Then 
\begin{equation*}
\SD\left( (h, h(d)), (h,U) \right)\leq 2^{-s}
\end{equation*}
\end{lemma}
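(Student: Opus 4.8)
The plan is to run the classical reduction from statistical distance to collision probability (i.e.\ Rényi-2 divergence), which is exactly the same mechanism used in Lemma~\ref{lem:single_shuffling} where a total variation distance is controlled by an expression of the form $\sqrt{q^{mn-1}\Pr[\cdot]-1}$. First I would record the elementary fact that for any distribution $P$ on a finite set $\dom{S}$ of size $N$ and $Q$ the uniform distribution on $\dom{S}$,
\[
\SD(P,Q) \;=\; \tfrac12\|P-Q\|_1 \;\le\; \tfrac12\sqrt{N}\,\|P-Q\|_2 \;=\; \tfrac12\sqrt{N\,\mathrm{CP}(P)-1}\enspace,
\]
where $\mathrm{CP}(P)=\sum_{x\in\dom{S}}P(x)^2$ is the collision probability; the middle inequality is Cauchy--Schwarz and the final equality expands $\|P-Q\|_2^2 = \mathrm{CP}(P)-1/N$.

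Next I would instantiate this with $P$ the law of $(h,h(d))$, supported on $\dom{S}=H\times\{0,1\}^l$ so that $N=|H|\cdot 2^l$, and $Q$ the law of $(h,U)$, which is exactly the uniform distribution on $\dom{S}$ since $h$ and $U$ are independent and uniform. The collision probability is computed by taking an independent copy $(h',d')$ of $(h,d)$ and factoring $\mathrm{CP}(P)=\PP[h=h']\cdot\PP[h(d)=h(d')\mid h=h']$. Independence gives $\PP[h=h']=1/|H|$, and for the second factor I condition on whether $d=d'$: with probability $1/|D|$ we have $d=d'$ and then $h(d)=h(d')$ automatically, while if $d\neq d'$ the universality of $H$ gives $\PP[h(d)=h(d')]=2^{-l}$, so $\PP[h(d)=h(d')\mid h=h']\le \frac{1}{|D|}+2^{-l}$. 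Hence $N\cdot\mathrm{CP}(P)\le |H|2^l\cdot\frac{1}{|H|}\big(\frac{1}{|D|}+2^{-l}\big)=\frac{2^l}{|D|}+1$, so that $\SD\big((h,h(d)),(h,U)\big)\le \tfrac12\sqrt{2^l/|D|}$. The hypothesis $|D|\ge 2^{l+2s}$ then gives $2^l/|D|\le 2^{-2s}$, and therefore $\SD\big((h,h(d)),(h,U)\big)\le \tfrac12\cdot 2^{-s}\le 2^{-s}$, as claimed.

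I do not expect a genuine obstacle here, as this is a standard argument; the only points demanding care are (i) handling the diagonal term $d=d'$ separately in the conditional collision probability rather than applying universality blindly (this is what produces the $1/|D|$ term and ultimately the dependence on $|D|$), and (ii) correctly identifying the support size $N=|H|\cdot 2^l$ so that the leading $+1$ in $N\,\mathrm{CP}(P)$ cancels the $-1$ under the square root, leaving only the small term $2^l/|D|$. Everything else is Cauchy--Schwarz and elementary arithmetic.
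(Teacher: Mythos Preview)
Your proof is correct and is the standard collision-probability argument for the Leftover Hash Lemma. Note, however, that the paper does not actually prove this statement: it is quoted from \cite{IZ89} and used as a black box in Section~\ref{sec:ikos_proof}, so there is no ``paper's own proof'' to compare against. Your write-up supplies exactly the classical derivation one would expect, and the parallel you draw with the $\sqrt{q^{mn-1}\Pr[\cdot]-1}$ structure of Lemma~\ref{lem:single_shuffling} is apt, since both bounds arise from the same Cauchy--Schwarz reduction of $\ell_1$ distance to collision probability.
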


To begin with we consider the case of securely adding two uniformly random inputs $X,Y\in \mathbb{Z}_q$. Recall that $\Pi$ is the protocol of the statement of the lemma, and let $V(x,y)$ be shorthand for $\mech{M}_{\mech{R}_\Pi}((x,y)) = \mech{S}\circ\mech{R}_\Pi((x,y))$, i.e. the view of the analyzer in an execution of protocol $\Pi$ with inputs $x,y$. We write $V$ for $V(X,Y)$ and $V(x)$ for $V(x,Y)$. Finally let $U$ be an independent uniformly random element of $\mathbb{Z}_q$.

\begin{lemma}
  Suppose $\log\binom{2m}{m}\geq \lceil \log(q) \rceil +2s$. Then, $\SD((V,X),(V,U))\leq 2^{-s}$.
\end{lemma}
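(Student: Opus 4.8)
The plan is to reduce the statement to the Leftover Hash Lemma quoted above, with the universal family being ``subset--sum over the shuffled shares''. The first step is to describe the joint law of $(V,X)$ in a convenient form. Since $X$ and $Y$ are uniform and independent over $\mathbb{Z}_q$, the $m$ shares produced by the first user are simply $m$ i.i.d.\ uniform elements of $\mathbb{Z}_q$ (their sum, which equals $X$, is unconstrained), and likewise for the second user; hence, before shuffling, the $2m$ shares $W_1,\dots,W_{2m}$ are i.i.d.\ uniform over $\mathbb{Z}_q$, with $W_1,\dots,W_m$ belonging to the first user so that $X=\sum_{i=1}^m W_i$. Writing $\sigma$ for the uniform random permutation of $[2m]$ applied by the single shuffler, the view is $V=(W_{\sigma(1)},\dots,W_{\sigma(2m)})$, and setting $T=\sigma^{-1}(\{1,\dots,m\})$ we obtain $X=\sum_{j\in T}V_j$. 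Two distributional facts are needed: (i) $V$ is uniform over $\mathbb{Z}_q^{2m}$, since a permutation of i.i.d.\ uniforms is again uniform; and (ii) $V$ and $T$ are independent with $T$ uniform over the $m$-subsets of $[2m]$ --- indeed, conditioned on $\sigma$ the tuple $V$ is uniform regardless of $\sigma$, so $V$ is independent of $\sigma$ and a fortiori of $T$, while $T$ is uniform over $\binom{[2m]}{m}$ because $\sigma$ is uniform.

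Next I would set up the hash family. Let $\mathcal{X}=\binom{[2m]}{m}$, take the range to be $\mathbb{Z}_q$, and for $v\in\mathbb{Z}_q^{2m}$ define $h_v\colon\mathcal{X}\to\mathbb{Z}_q$ by $h_v(T)=\sum_{j\in T}v_j$. This family is universal: for distinct $m$-subsets $T\neq T'$ we have $h_v(T)-h_v(T')=\sum_{j\in T\setminus T'}v_j-\sum_{j\in T'\setminus T}v_j$, a nontrivial $\pm1$-combination of the independent uniform coordinates $v_j$ (both symmetric differences are nonempty since $|T|=|T'|$), and since $\pm1$ are units in $\mathbb{Z}_q$ this combination is uniform over $\mathbb{Z}_q$; hence $\Pr_v[h_v(T)=h_v(T')]=1/q$. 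Taking $D=\mathcal{X}$, the hypothesis $\log\binom{2m}{m}\geq\lceil\log(q)\rceil+2s$ gives $|D|=\binom{2m}{m}\geq 2^{\lceil\log(q)\rceil}\cdot 2^{2s}\geq q\cdot 2^{2s}$, which is exactly the size condition of the Leftover Hash Lemma (with $l=\lceil\log(q)\rceil$ and range $\mathbb{Z}_q$: the special case quoted above is written for a bit-string range, but the identical second-moment argument applies verbatim when the range is any finite set of size $q$, the ceiling in the hypothesis being there precisely to absorb this). Applying the lemma with $h=h_v$ for $v$ uniform in $\mathbb{Z}_q^{2m}$, $d=T$ uniform in $D$, and $U$ an independent uniform element of $\mathbb{Z}_q$ yields $\SD((h,h(d)),(h,U))\leq 2^{-s}$.

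Finally I would read the conclusion back into the language of the protocol. By the description in the first paragraph, $(h,h(d))=(h_v,h_v(T))$, identified with $(v,\sum_{j\in T}v_j)$ for $v$ uniform and $T$ uniform and independent, has exactly the law of $(V,X)$, while $(h,U)$ has exactly the law of $(V,U)$; hence $\SD((V,X),(V,U))\leq 2^{-s}$, as claimed. The step requiring the most care is the first paragraph --- recognizing that the $2m$ shares of two \emph{random} inputs are unconstrained i.i.d.\ uniforms, so that the shuffled view $V$ is itself uniform over $\mathbb{Z}_q^{2m}$ and independent of the subset $T$ recording which positions came from the first user, with $X$ recovered as $\sum_{j\in T}V_j$. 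Once that identification is in place, the choice of the subset-sum hash family and the verification of its universality and of the size bound are routine, and the remaining point (matching the bit-string statement of the quoted Leftover Hash Lemma to a $\mathbb{Z}_q$-valued hash) is purely cosmetic.
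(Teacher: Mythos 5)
Your proof is correct and follows essentially the same route as the paper's: the subset-sum hash family $h_v(T)=\sum_{j\in T}v_j$ over $\binom{[2m]}{m}$, the identification of $(V,X)$ with $(h_V,h_V(T))$ for a uniform independent $m$-subset $T$, and an application of the Leftover Hash Lemma. You merely supply details the paper leaves implicit (the verification of universality, the fact that the $2m$ shares of random inputs are unconditioned i.i.d.\ uniforms, and the bit-string-versus-$\mathbb{Z}_q$ range issue), all of which are handled correctly.
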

\begin{proof}
  For $a\in \mathbb{Z}_q^{2m}$ and $\pi \in \binom{[2m]}{m}$ let $h_a(\pi)=\sum_{i\in \pi}a_i$. $(h_a)_{a\in \mathbb{Z}_q^{2m}}$ are a universal family of hash functions from $\binom{[2m]}{m}$ to $\mathbb{Z}_q$. Let $d$ be an independent uniformly random element of $\binom{[2m]}{m}$. Note that $(V,h_V(d))$ has the same distribution as $(V,X)$, which follows from the intuition that $V$ corresponds to $2m$ random numbers shuffled together, and $x$ can be obtained by adding up $m$ of them, and letting $y$ be the sum of the rest.

  The result now follows immediately from the fact that the Leftover Hash Lemma implies that $\SD((V,h_v(d)),(V,U))\leq 2^{-s}$.
\end{proof}

Now we can use this to solve the case of two arbitrary inputs.

\begin{lemma}
  If $x,y,x',y'\in \mathbb{Z}_q$ satisfy $x+y=x'+y'$, then we have
  \begin{equation*}
    \SD(V(x,y),V(x',y'))\leq 2q^2\SD((V,X),(V,U)).
  \end{equation*}
\end{lemma}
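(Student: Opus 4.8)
The plan is to express the deterministic-input view $V(x,y)$ through the joint law of $(V,X,Y)$ with $X,Y$ independent and uniform on $\mathbb{Z}_q$, and then cash in the near-independence of $V$ and $X$ established in the previous lemma. Since $X$ and $Y$ are independent and uniform, Bayes' rule gives $\Pr[V(x,y)=v]=q^{2}\,\Pr[V=v,\,X=x,\,Y=y]$ for every view $v$, so that
\[
2\,\SD(V(x,y),V(x',y'))=q^{2}\sum_{v}\bigl|\Pr[V=v,X=x,Y=y]-\Pr[V=v,X=x',Y=y']\bigr|.
\]

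Next I would use the fact that the total sum $\Sigma(v)$ of the $2m$ shares recorded in a view $v$ is preserved by shuffling and always equals $X+Y$; hence on the event $\{V=v\}$ the value of $Y$ is forced to be $\Sigma(v)-X$. This makes $\Pr[V=v,X=x,Y=y]$ equal to $\Pr[V=v,X=x]$ when $\Sigma(v)=x+y$ and equal to $0$ otherwise, and likewise for $(x',y')$. Here the hypothesis $x+y=x'+y'$ is exactly what is needed: both terms in the sum vanish for the same set of views, so the sum collapses to $q^{2}\sum_{v:\,\Sigma(v)=x+y}\bigl|\Pr[V=v,X=x]-\Pr[V=v,X=x']\bigr|$.

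To finish I would insert the intermediate quantity $\Pr[V=v,U=x]=\Pr[V=v]/q$ (valid since $U$ is uniform on $\mathbb{Z}_q$ and independent of $V$), apply the triangle inequality to bound each summand by $\bigl|\Pr[V=v,X=x]-\Pr[V=v,U=x]\bigr|+\bigl|\Pr[V=v,X=x']-\Pr[V=v,U=x']\bigr|$, drop the constraint $\Sigma(v)=x+y$ (all terms are nonnegative), and then bound each of the two resulting single-input sums $\sum_{v}\bigl|\Pr[V=v,X=a]-\Pr[V=v,U=a]\bigr|$ (for $a=x$ and for $a=x'$) by the full double sum $\sum_{v,a}\bigl|\Pr[V=v,X=a]-\Pr[V=v,U=a]\bigr|=2\,\SD((V,X),(V,U))$. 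Combining, $2\,\SD(V(x,y),V(x',y'))\le 4q^{2}\,\SD((V,X),(V,U))$, which is the claimed inequality.

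The step I expect to require the most care is the collapse in the second paragraph: one has to argue cleanly that the sum of all submitted shares is a deterministic function of the shuffled view and coincides almost surely with $X+Y$, and then verify that the support constraint $\Sigma(v)=x+y$ produced on the left is literally identical to the one produced by $(x',y')$ precisely because $x+y=x'+y'$. The remaining manipulations are routine triangle-inequality and statistical-distance bookkeeping.
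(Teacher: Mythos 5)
Your proposal is correct and arrives at the stated $2q^2$ bound via the same essential mechanism as the paper's proof: the fact that the view determines the sum of the inputs (which is where the hypothesis $x+y=x'+y'$ is used to align supports), a triangle inequality through the uniform $U$, and a bound of one slice of the $\ell_1$ sum by the whole sum, which is exactly the paper's ``Markov'' step in disguise. The only difference is packaging — you run the whole computation in one pass at the level of joint pmfs, which has the minor virtue of making explicit the disjoint-support observation that the paper's unjustified equality $\SD(V(x),V(x'))=\frac{1}{q}\sum_{y}\SD(V(x,y),V(x',y+x-x'))$ silently relies on.
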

\begin{proof}
  Markov's inequality provides that
  \begin{equation*}
    \SD(V(x),V)\leq q\SD((V,X),(V,U)) \hspace{10mm} \forall x\in \mathbb{Z}_q
  \end{equation*}
  and thus by the triangle inequality
  \begin{equation*}
    \SD(V(x),V(x'))\leq 2q\SD((V,X),(V,U)).
  \end{equation*}
  Note that
  \begin{align*}
    \SD(V(x),V(x'))&=\sum_{t\in \mathbb{Z}_q}\SD(V(x)|Y=t-x,V(x')|Y=t-x')/q \\
    &=\frac{1}{q}\sum_{y\in \mathbb{Z}_q} \SD(V(x,y),V(x',y+x-x'))
  \end{align*}
  and so for every $x,y,x'\in \mathbb{Z}_q$ and $y'=y+x-x'$ we have
  \begin{equation*}
    \SD(V(x,y),V(x',y'))\leq q\SD(V(x),V(x')).
  \end{equation*}
  Combining the last two inequalities gives the result.
\end{proof}

Combining these two lemmas gives that, for $x,y,x',y'\in \mathbb{Z}_q$ such that $x+y=x'+y'$,
\begin{align}
  \SD(V(x,y),V(x',y'))&\leq 2q^22^{-\frac{\log\binom{2m}{m}-\lceil \log(q) \rceil}{2}}\nonumber \\
  \label{al:binombound}
                     &\leq 2^{-\frac{m}{2}+1+\frac{5\lceil \log(q) \rceil}{2}}
\end{align}

From which the following lemma is immediate
\begin{lemma}
  \label{lemma:twocase}
  If $m\geq 2+5\lceil\log(q)\rceil+2\sigma$ and $x,y,x',y'\in \mathbb{Z}_q$ such that $x+y=x'+y'$ then 
  \begin{equation*}
    \SD(V(x,y),V(x',y'))\leq 2^{-\sigma}
  \end{equation*}
\end{lemma}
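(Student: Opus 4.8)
The plan is short: this lemma is an immediate consequence of the bound already recorded in \eqref{al:binombound}, and the only step is an arithmetic rearrangement of the exponent. Recall that \eqref{al:binombound} gives, for any $x,y,x',y' \in \Z_q$ with $x+y = x'+y'$, the estimate $\SD(V(x,y),V(x',y')) \leq 2^{-\frac{m}{2}+1+\frac{5\lceil \log(q)\rceil}{2}}$. I would simply substitute the hypothesis $m \geq 2 + 5\lceil \log(q)\rceil + 2\sigma$ into this exponent, which yields $-\frac{m}{2} + 1 + \frac{5\lceil\log(q)\rceil}{2} \leq -1 - \frac{5\lceil\log(q)\rceil}{2} - \sigma + 1 + \frac{5\lceil\log(q)\rceil}{2} = -\sigma$, and hence $\SD(V(x,y),V(x',y')) \leq 2^{-\sigma}$, as claimed.

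For context on where \eqref{al:binombound} came from (and thus what the lemma is really resting on): the leftover-hash-lemma computation shows $\SD((V,X),(V,U)) \leq 2^{-s}$ whenever $\log\binom{2m}{m} \geq \lceil\log(q)\rceil + 2s$; a Markov-plus-triangle-inequality argument converts this into $\SD(V(x,y),V(x',y')) \leq 2q^2\,\SD((V,X),(V,U))$; one then sets $s = \frac{1}{2}\big(\log\binom{2m}{m} - \lceil\log(q)\rceil\big)$ and uses the elementary bound $\binom{2m}{m} \geq 2^m$ to replace $\log\binom{2m}{m}$ by $m$ in the exponent. There is no genuine obstacle in the present lemma --- the surrounding text correctly calls it ``immediate'' --- so the only things to watch are keeping the $\lceil\cdot\rceil$ terms consistent throughout and invoking $\binom{2m}{m}\geq 2^m$ rather than a tighter estimate, since a sharper bound here would only improve the constants and is not needed for the stated claim.
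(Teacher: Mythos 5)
Your proposal is correct and matches the paper's own treatment exactly: the paper derives the bound in \eqref{al:binombound} via the same leftover-hash-lemma plus Markov/triangle-inequality chain and then declares the lemma ``immediate,'' which is precisely the exponent substitution you carry out. The arithmetic $-\frac{m}{2}+1+\frac{5\lceil\log(q)\rceil}{2}\leq -\sigma$ under the stated hypothesis is verified correctly.
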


We will now generalize to the case of $n$-party summation.

\begin{proof}[Proof of Lemma \ref{lemma:IKOS}]
Let $\vec{x},\vec{x}'\in \mathbb{Z}_q^n$ be two distinct possible inputs to the protocol, we say that they are related by a basic step if they have the same sum and only differ in two entries. It is evident that any two distinct inputs with the same sum are related by at most $n-1$ basic steps. We will show that if $m$ is taken to be $2+5\lceil\log(q)\rceil+\lceil 2\sigma+ 2\log(n-1) \rceil$ and $\vec{x}$ and $\vec{x}'$ are related by a basic step then
\begin{equation}
  \label{eq:basicbound}
  \SD(V(\vec{x}),V(\vec{x}'))\leq \frac{2^{-\sigma}}{n-1}
\end{equation}
from which the lemma follows by the triangle inequality for statistical distance.

Let $\vec{x}$ and $\vec{x}'$ be related by a basic step and suppose w.l.o.g. that $\vec{x}$ and $\vec{x}'$ differ in the first two co-ordinates. Taking $m=2+5\lceil\log(q)\rceil+\lceil 2\sigma+2 \log(n-1) \rceil$, by lemma \ref{lemma:twocase}, we can couple the values sent by the first two parties on input $\vec{x}$ with the values they send on input $\vec{x}'$ so that they match with probability $1-\frac{2^{-\sigma}}{n-1}$. Independently of that we can couple the inputs of the other $n-2$ parties so that they always match as they each have the same input in both cases. This gives a coupling exhibiting that equation \ref{eq:basicbound} holds.
\end{proof}

\begin{remark}
  It may seem counter intuitive to require more messages the more parties there are (for fixed $q$). The addition of the $\log(n-1)$ term to $m$ is necessary for the proof of Lemma \ref{lemma:IKOS}. The is because we are trying to stop the adversary from learning a greater variety of things when we have more parties. However it may be the case that Theorem 4.1 could follow from a weaker guarantee than provided by Lemma \ref{lemma:IKOS} and such a property might be true without the presence of this term.

It is an open problem to prove a lower bound greater than two on the number of messages required to get $O(1)$ error on real summation. A proof that one message is not enough is given in \cite{DBLP:conf/crypto/BalleBGN19}.
\end{remark}

\subsection{Improving the constants}
\label{sec:improving}

  The constants implied by this proof can be improved by using a sharper bound for $\binom{2m}{m}$ in inequality \ref{al:binombound}. Using the bound $\binom{2m}{m}\geq \frac{4^m}{\sqrt{\pi(m+1/2)}}$ gives that taking $m$ to be the ceiling of the root of
\begin{equation*}
  m=1+\sigma+\frac{5\lceil \log(q)\rceil}{2}+\frac{1}{4}\log(\pi(m+\frac{1}{2}))
\end{equation*}
suffices in the statement of Lemma~\ref{lemma:twocase}. The resulting value of $m$ is
\begin{equation*}
  \frac{5}{2}\log(q)+\sigma+\frac{1}{4}\log(\log(q)+\sigma)+O(1) \enspace.
\end{equation*}
Adding $\log(n-1)$ to the root before taking the ceiling gives the following value of $m$ for which Lemma \ref{lemma:IKOS} holds
\begin{align*}
m = \frac{5}{2}\log(q)+\sigma+\log(n-1) + \frac{1}{4}\log(\log(q)+\sigma + \log(n-1))+O(1) \enspace.
\end{align*}
 
\section{Technical Lemmas}
\label{sec:technical}

\paragraph{Randomized rounding.} Our protocols use a fixed point encoding of a real number $x$ with integer precision $p>0$ and randomized rounding, which we define
as $\fp(x, p) = \lfloor xp \rfloor + \bernoulli(xp - \lfloor xp \rfloor)$.
We note this rounding is unbiased in the sense that $\Ex[\fp(x, p)]/p = x$.
The following lemma provides a simple bound on the MSE of this operation.

\begin{lemma}
  \label{lem:roundingerror}
For any  $\vec{x}\in\mathbb{R}^n$, $\mse(\sum_{i=1}^n \fp(x_i, p)/p, \sum_{i=1}^n x_i) \leq n/(4p^2)$.
\end{lemma}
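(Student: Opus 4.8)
The plan is to exploit the independence of the randomized roundings across users together with the unbiasedness of $\fp(\cdot,p)$, so that the mean squared error collapses to a sum of variances. First I would observe that for each $i$ the random variable $\fp(x_i,p) = \lfloor x_i p \rfloor + \bernoulli(x_i p - \lfloor x_i p\rfloor)$ satisfies $\Ex[\fp(x_i,p)] = \lfloor x_i p \rfloor + (x_i p - \lfloor x_i p \rfloor) = x_i p$, hence $\Ex[\fp(x_i,p)/p] = x_i$ and, by linearity, $\Ex[\sum_{i=1}^n \fp(x_i,p)/p] = \sum_{i=1}^n x_i$. Thus the estimator is unbiased and its MSE equals its variance.

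Next, since the $n$ roundings use independent Bernoulli coin flips, additivity of variance gives $\Var\left[\sum_{i=1}^n \fp(x_i,p)/p\right] = \frac{1}{p^2}\sum_{i=1}^n \Var[\fp(x_i,p)]$. For each $i$, writing $f_i = x_i p - \lfloor x_i p \rfloor \in [0,1)$, the quantity $\fp(x_i,p)$ differs from the deterministic integer $\lfloor x_i p \rfloor$ only by a $\bernoulli(f_i)$ term, so $\Var[\fp(x_i,p)] = f_i(1-f_i) \le 1/4$. Summing over $i$ yields $\mse\!\left(\sum_{i=1}^n \fp(x_i,p)/p,\ \sum_{i=1}^n x_i\right) \le \frac{1}{p^2}\cdot n \cdot \frac14 = \frac{n}{4p^2}$, which is the claimed bound.

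There is no real obstacle here: the only points requiring care are (i) verifying that the Bernoulli parameter is exactly the fractional part $x_i p - \lfloor x_i p\rfloor$, so that the per-coordinate estimator is genuinely unbiased and MSE reduces to variance, and (ii) invoking the elementary inequality $t(1-t)\le 1/4$ for $t\in[0,1]$ to bound each per-coordinate variance. Everything else is linearity of expectation and additivity of variance under independence.
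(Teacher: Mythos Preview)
Your proof is correct and follows essentially the same approach as the paper: both use unbiasedness of $\fp(x_i,p)/p$ together with independence to reduce the MSE to a sum of per-coordinate second moments, and then bound each by $1/(4p^2)$. If anything, your version is slightly more explicit than the paper's, since you spell out the Bernoulli variance computation $f_i(1-f_i)\le 1/4$, whereas the paper records $|\Delta_i|\le 1/p$ and $\Ex[\Delta_i]=0$ and passes directly to the bound.
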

\begin{proof}
  Let  $\Delta_i$ be $\fp(x_i, p)/p - x_i$, and note that $|\Delta_i|\leq 1/p$ and $\EE(\Delta_i)=0$. It follows that
  \begin{align*}
    \mse&\left(\sum_{i=1}^n \fp(x_i, p)/p, \sum_{i=1}^n x_i\right) = \EE\left[\left(\sum_{i=1}^n \Delta_i\right)^2\right] = \\
    &\sum_{i=1}^n \EE[\Delta_i^2] + \sum\limits_{1\leq i < j\leq n}(\EE[\Delta_i\Delta_j]) = \sum_{i=1}^n \EE[\Delta_i^2] \leq n/(4p^2).
  \end{align*}
\end{proof}

\paragraph{Differential privacy from total variation distance.}

The following lemma (also stated by Wang et al.~\cite{DBLP:conf/icml/WangFS15}, Proposition $3$) provides a convenient method to obtain differential privacy guarantees by comparing the output distributions two protocols in terms of total variation distance.
Recall that the total variation distance between two random variables $\rv{X}$ and $\rv{Y}$ can be defined as $\TV(\rv{X},\rv{Y}) = \sup_{E} | \Pr[\rv{X} \in E] - \Pr[\rv{Y} \in E]|$.

\begin{lemma}
\label{lemma:SD2delta}
Let $\mech{M}:\dom{X}^n \to \dom{O}$ and $\mech{M'}:\dom{X}^n \to \dom{O}$ be protocols such that
$\SD(\mech{M}(\tup{x}), \mech{M}'(\tup{x})) \leq \mu$ for all inputs $\tup{x} \in \dom{X}^n$.
If $\mech{M}$ is $(\epsilon, \delta)$-DP, then
$\mech{M}'$ is $(\epsilon, \delta + (1+e^\epsilon)\mu)$-DP. 
\end{lemma}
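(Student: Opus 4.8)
The plan is to unwind the definition of $(\epsilon,\delta)$-DP directly and absorb the total variation slack into the additive $\delta$ term via two applications of the triangle inequality. Fix a neighboring pair $\tup{x} \simeq \tup{x}'$ and an arbitrary measurable event $E \subseteq \dom{O}$; the goal is to bound $\Pr[\mech{M}'(\tup{x}) \in E] - e^\epsilon \Pr[\mech{M}'(\tup{x}') \in E]$ by $\delta + (1+e^\epsilon)\mu$.

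First I would use the hypothesis $\SD(\mech{M}(\tup{x}), \mech{M}'(\tup{x})) \leq \mu$, together with the characterization $\TV(\rv{X},\rv{Y}) = \sup_E |\Pr[\rv{X}\in E] - \Pr[\rv{Y}\in E]|$ recalled just before the lemma statement, to conclude $\Pr[\mech{M}'(\tup{x}) \in E] \leq \Pr[\mech{M}(\tup{x}) \in E] + \mu$. Applying the same bound to $\tup{x}'$ gives $\Pr[\mech{M}(\tup{x}') \in E] \leq \Pr[\mech{M}'(\tup{x}') \in E] + \mu$, equivalently $-e^\epsilon \Pr[\mech{M}'(\tup{x}') \in E] \leq -e^\epsilon \Pr[\mech{M}(\tup{x}') \in E] + e^\epsilon \mu$ after multiplying by $-e^\epsilon$ and rearranging. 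Adding the two displayed inequalities yields
\begin{align*}
\Pr[\mech{M}'(\tup{x}) \in E] - e^\epsilon \Pr[\mech{M}'(\tup{x}') \in E]
\leq
\big(\Pr[\mech{M}(\tup{x}) \in E] - e^\epsilon \Pr[\mech{M}(\tup{x}') \in E]\big) + (1+e^\epsilon)\mu \enspace.
\end{align*}

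Finally I would invoke the assumption that $\mech{M}$ is $(\epsilon,\delta)$-DP, which bounds the parenthesized term by $\delta$, and conclude $\Pr[\mech{M}'(\tup{x}) \in E] - e^\epsilon \Pr[\mech{M}'(\tup{x}') \in E] \leq \delta + (1+e^\epsilon)\mu$. Since $\tup{x} \simeq \tup{x}'$ and $E$ were arbitrary, this is exactly the definition of $(\epsilon, \delta + (1+e^\epsilon)\mu)$-DP for $\mech{M}'$. There is no real obstacle here: the only thing to be careful about is the direction of the second triangle-inequality bound (it must be applied to $\mech{M}(\tup{x}')$ versus $\mech{M}'(\tup{x}')$ so that the sign works out after multiplying by $e^\epsilon$), and the fact that the TV bound holds for every measurable event simultaneously, which is what lets us apply it to the same $E$ appearing in the DP inequality.
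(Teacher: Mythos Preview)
Your proposal is correct and follows essentially the same approach as the paper: apply the total variation bound once at $\tup{x}$ and once at $\tup{x}'$, then invoke the $(\epsilon,\delta)$-DP of $\mech{M}$ in between. The paper chains the three inequalities sequentially rather than adding two displayed bounds, but the content is identical.
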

\begin{proof}
For any neighboring inputs $\tup{x}, \tup{x}' \in\dom{X}^n$ and $E \subseteq \dom{O}$ we have$\PP[\mech{M}(\tup{x}) \in E] \leq e^\epsilon\PP[\mech{M}(\tup{x}')\in E] + \delta$ and $|\PP[\mech{M}'(\tup{x}) \in E] - \PP[\mech{M}(\tup{x})\in E]| \leq \mu$. It follows that $\PP[\mech{M}'(\tup{x})\in E] \leq \PP[\mech{M}(\tup{x})\in E] + \mu \leq e^\epsilon\PP[\mech{M}(\tup{x}')\in E] + \delta + \mu \leq e^\epsilon(\PP[\mech{M}'(\tup{x}')\in E] + \mu) + \delta + \mu$.
\end{proof}

\section{Additional Experiments}
\label{sec:more-experiments}
In this section we present numerical evaluations of the communication and accuracy bounds of the protocols proposed in the paper and those that existed previously in the literature.

\subsection{Numerical Table}
\label{sec:numerics_table}

\begin{table*}
\footnotesize
\begin{center}
  \begin{tabular}{|c| c |c |c |c |c |c |c |c|} 
    \hline
     & \multicolumn{4}{|c|}{Number of messages} & \multicolumn{4}{|c|}{MSE of resultant sum} \\ [0.5ex]
    \hline
    Protocol & \multicolumn{2}{|c|}{$n=10^4$} & \multicolumn{2}{|c|}{$n=10^5$} & \multicolumn{2}{|c|}{$n=10^4$} & \multicolumn{2}{|c|}{$n=10^5$} \\ [0.5ex]
    \hline
     & $\epsilon=0.5$ & $\epsilon=1$ & $\epsilon=0.5$ & $\epsilon=1$ & $\epsilon=0.5$ & $\epsilon=1$ & $\epsilon=0.5$ & $\epsilon=1$ \\ [0.5ex] 
    \hline
    LocalDP & 1 & 1 & 1 & 1 & 41677.0 & 11706.7 & 416769.8 & 117067.4 \\ 
    \hline
    CuratorDP & -- & -- & -- & -- & 8.0 & 2.0 & 8.0 & 2.0 \\ 
    \hline
    CheuEtAl~\cite{DBLP:journals/corr/abs-1808-01394} & 50 & 100 & 159 & 317 & 693.5 & 206.3 & 888.0 & 252.4 \\ 
    \hline
    BalleEtAl~\cite{DBLP:conf/crypto/BalleBGN19} & 1 & 1 & 1 & 1 & 592.9 & 278.8 & 1433.4 & 683.8 \\ 
    \hline
    Recursive (2 msg) & 2 & 2 & 2 & 2 & 361.6 & 153.4 & 989.6 & 418.7 \\ 
    \hline
    Recursive (3 msg) & 3 & 3 & 3 & 3 & $\infty$ & 161.9 & 530.6 & 316.8 \\ 
    \hline
    Recursive (2 msg optimized) & 2 & 2 & 2 & 2 & 353.5 & 121.8 & 114.4 & 234.9 \\ 
    \hline
    Recursive (3 mgs optimized) & 3 & 3 & 3 & 3 & 354.8 & 128.8 & 197.0 & 152.4 \\ 
    \hline
    IKOS (Original) & 69 & 70 & 84 & 84 & 8.2 & 2.2 & 8.2 & 2.2 \\ 
    \hline
    IKOS (Improved) & 9 & 9 & 9 & 9 & 8.2 & 2.2 & 8.2 & 2.2 \\ 
    \hline
    IKOS (GhaziEtAl~\cite{DBLP:journals/corr/abs-1909-11073}) & 411 & 415 & 399 & 402 & 8.2 & 2.2 & 8.2 & 2.2 \\ 
    \hline
  \end{tabular}
\end{center}
\caption{Evaluation of the bounds on the number of messages per client and MSE of each of the protocols we have discussed. More details of each protocol and an interpretation of the results are provided in the text.}
\label{tab:numerics}
\end{table*}

For each protocols we find both the number of messages required and the resulting bound on the mean squared error of the sum. For each protocol we consider the case of $n\in \{10^4,10^5\}$ and $\epsilon\in \{0.5,1\}$. Where a protocol only provides approximate differential privacy we set $\delta=1/n^2$. These results are presented in Table~\ref{tab:numerics}.

The first two rows are randomized response\footnote{Implemented by using a standard binary randomized response after applying an unbiased randomized rounding to $x \in [0,1]$.} in the local model and the Laplace mechanism in the curator model. The curator model requires a trusted curator, so should have very good error, and due to it working on centralized data there is no number of messages. The local model on the other hand requires no trust from the parties so any method that failed to beat the error in that case would be useless.

The remaining methods are all methods in the shuffle model. They thus have intermediate security assumptions and the aim is to get error close to that of the curator model.

The next two methods, BalleEtAl and CheuEtAl, represent the state of the art before this work. Both in the single message model and the multi-message model, this was the baseline we had to work from before starting this work.

The next four rows are the recursive protocol form Section~\ref{sec:rec_protocol}. We show how well it performs with both two messages and three messages. We also present the protocol both with its parameters chosen by the expressions suggested by the asymptotic theorems and with parameters chosen by computer search to optimize the resulting error (see Section~\ref{sec:numerics_details} for details).

As the table shows, for the parameters we look at our two message protocol is already beating the error of BalleEtAl, even without the optimization. With the optimization (and usually without) it also beats the error of CheuEtAl using far fewer messages. As can be seen the three message protocol requires $n$ to be quite large and $\epsilon$ to be quite small in order to beat the two message protocol. This shouldn't be surprising as the theory recommends $O(\log\log(n))$ messages, this indicates that the recursive protocol will not benefit from a fourth message in any practical setting.

The final three protocols are those based on the protocol of Ishai et al.~\cite{ikos}. They all work by adding noise using the P\'olya distribution to that protocol. The difference is in the analysis and thus required number of messages. The ``original'' version uses the analysis of Ishai et al. paper~\cite{ikos} (cf.~Section~\ref{sec:secure_sum}). The improved version is the new analysis of that algorithm in this paper (Corollary~\ref{cor:wst-security}). The table shows the improvement is substantial for practical values of $n$, not just asymptotically. The final row displays the number of messages required according to the analysis of Ghazi et al.~\cite{DBLP:journals/corr/abs-1909-11073} (see also Section~\ref{sec:concurrent}), this shows that they were interested in asymptotics and not in constants. Note that the error here is very close to the curator model.

\subsection{Implementation Details}\label{sec:numerics_details}

These values were found with python code which can be found at
\url{https://github.com/adriagascon/shuffledpsummation}.
Mostly, this code merely evaluated the bounds given in theorems in this and other papers. There are two exceptions to this.

Firstly, the algorithm of Balle et al.~\cite{DBLP:conf/crypto/BalleBGN19} has a parameter that affects security and must be optimized. We sharply optimize this using code that they provide, both for the BalleEtAl entry and the recursive entries.

Secondly, in the recursive algorithm there are parameters $p_i$ and $\epsilon_i$ to choose. We present the results of using the expressions given in the theorems and of using optimized versions. To optimize these parameters one could cycle over all feasible values of $p_i$ and (to some precision) $\epsilon_i$, however this is computationally prohibitive. Instead we assumed that the error was convex in each of the $p_i$ and $\epsilon_i$, this enabling convex optimization techniques, i.e. golden section search, to find the best values. We stress that if these assumptions turned out to be false, that might mean that we hadn't found the optimal parameters, but the errors given in the table would still be valid bounds on the error for the parameter choices we found.

\end{document}